\pgfplotsset{compat=1.14}
\newcommand{\sac}{snap-and-chat\xspace}
\newcommand{\Sac}{Snap-and-chat\xspace}
\newcommand{\SAC}{Snap-and-Chat\xspace}
\newcommand{\eaf}{ebb-and-flow\xspace}
\newcommand{\EAF}{Ebb-and-Flow\xspace}
\newcommand{\theconstruction}{our construction\xspace}
\newcommand{\Chlc}[2]{\ensuremath{\mathsf{ch}_{#1}^{#2}}}
\newcommand{\Chbft}[2]{\ensuremath{\mathsf{Ch}_{#1}^{#2}}}
\newcommand{\PayloadTxs}[0]{\ensuremath{\mathsf{txs}}}
\newcommand{\PayloadCh}[0]{\ensuremath{\mathsf{ch}}}
\newcommand{\blocklc}[0]{LC block\xspace}
\newcommand{\blockslc}[0]{LC blocks\xspace}
\newcommand{\blockbft}[0]{BFT block\xspace}
\newcommand{\blocksbft}[0]{BFT blocks\xspace}
\newcommand{\Blockslc}[0]{\blockslc}
\newcommand{\LOGda}[2]{%
    \ifthenelse{\equal{#1}{}}{%
        \ensuremath{\mathsf{LOG}_{\mathrm{da}}^{#2}}%
    }{%
        \ensuremath{\mathsf{LOG}_{\mathrm{da},#1}^{#2}}%
    }%
}
\newcommand{\LOGfin}[2]{%
    \ifthenelse{\equal{#1}{}}{%
        \ensuremath{\mathsf{LOG}_{\mathrm{fin}}^{#2}}%
    }{%
        \ensuremath{\mathsf{LOG}_{\mathrm{fin},#1}^{#2}}%
    }%
}
\newcommand{\LOGlc}[2]{%
    \ifthenelse{\equal{#1}{}}{%
        \ensuremath{\mathsf{LOG}_{\mathrm{lc}}^{#2}}%
    }{%
        \ensuremath{\mathsf{LOG}_{\mathrm{lc},#1}^{#2}}%
    }%
}
\newcommand{\LOGbft}[2]{%
    \ifthenelse{\equal{#1}{}}{%
        \ensuremath{\mathsf{LOG}_{\mathrm{bft}}^{#2}}%
    }{%
        \ensuremath{\mathsf{LOG}_{\mathrm{bft},#1}^{#2}}%
    }%
}
\newcommand{\LOG}[2]{%
    \ensuremath{\mathsf{LOG}_{#1}^{#2}}
}
\newcommand{\LOGdaBLANKFIX}[0]{\ensuremath{\mathsf{LOG}_{\mathrm{da}}}}
\newcommand{\flatten}[0]{\ensuremath{\mathsf{flatten}}}
\newcommand{\sanitize}[0]{\ensuremath{\mathsf{sanitize}}}
\newcommand{\PI}[0]{\ensuremath{\Pi}}
\newcommand{\PIlc}[0]{\ensuremath{\Pi_{\mathrm{lc}}}}
\newcommand{\PIbft}[0]{\ensuremath{\Pi_{\mathrm{bft}}}}
\newcommand{\PIda}[0]{\ensuremath{\Pi_{\mathrm{da}}}}
\newcommand{\PIfin}[0]{\ensuremath{\Pi_{\mathrm{fin}}}}
\newcommand{\PIanyexample}[0]{\ensuremath{\Pi_{\mathrm{sac}}}}
\newcommand{\PItheexample}[2]{\ensuremath{\PIanyexample}}
\newcommand{\tx}[0]{\ensuremath{\mathsf{tx}}}
\newcommand{\Txs}[2]{\ensuremath{\mathsf{txs}_{#1}^{#2}}}
\newcommand{\Adv}[0]{\ensuremath{\mathcal A}}
\newcommand{\Env}[0]{\ensuremath{\mathcal Z}}
\newcommand{\AdvEnvParameterized}[2]{\ensuremath{(\Adv_{#1}(#2), \Env_{#1})}}
\newcommand{\AdvEnvOneOpt}[0]{\ensuremath{(\Adv_1^*, \Env_1)}}
\newcommand{\AdvEnvTwoOpt}[0]{\ensuremath{(\Adv_2^*, \Env_2)}}
\newcommand{\ie}[0]{\emph{i.e.}\xspace}
\newcommand{\eg}[0]{\emph{e.g.}\xspace}
\newcommand{\cf}[0]{\emph{cf.}\xspace}
\newcommand{\etal}[0]{\emph{et al.}\xspace}
\newcommand{\Pone}[0]{\textbf{P1}\xspace}
\newcommand{\Ptwo}[0]{\textbf{P2}\xspace}
\newcommand{\GST}[0]{\ensuremath{\mathsf{GST}}}
\newcommand{\GOT}[0]{\ensuremath{\mathsf{GAT}}}
\newcommand{\Prob}[1]{\ensuremath{\operatorname{Pr}\!\left[#1\right]}}
\DeclareMathAlphabet{\mathsfit}{T1}{\sfdefault}{\mddefault}{\sldefault}
\SetMathAlphabet{\mathsfit}{bold}{T1}{\sfdefault}{\bfdefault}{\sldefault}
\newcommand{\const}[0]{\ensuremath{\mathsfit{C}}}
\newcommand{\extra}[0]{\ensuremath{N_e}}
\newcommand{\Tconfirm}[0]{\ensuremath{T_{\mathrm{confirm}}}}
\newcommand{\negl}[0]{\ensuremath{\operatorname{negl}}}
\newcommand{\finalp}[0]{final\xspace}
\newcommand{\finalizationp}[0]{finalization\xspace}
\newcommand{\finalda}[0]{confirmed\xspace}
\theoremstyle{plain}
\newtheorem{theorem}{Theorem}
\newtheorem*{theorem*}{Theorem}
\newtheorem{corollary}{Corollary}
\newtheorem{lemma}{Lemma}
\newtheorem{proposition}{Proposition}
\theoremstyle{definition}
\newtheorem{definition}{Definition}
\definecolor{myParula01Blue}{RGB}{0,114,189}
\definecolor{myParula02Orange}{RGB}{217,83,25}
\definecolor{myParula03Yellow}{RGB}{237,177,32}
\definecolor{myParula04Purple}{RGB}{126,47,142}
\definecolor{myParula05Green}{RGB}{119,172,48}
\definecolor{myParula06LightBlue}{RGB}{77,190,238}
\definecolor{myParula07Red}{RGB}{162,20,47}
\tikzset{myparula11/.style={color=myParula01Blue,solid,mark=+,mark options={solid}}}
\tikzset{myparula12/.style={color=myParula01Blue,densely dashed,mark=x,mark options={solid}}}
\tikzset{myparula13/.style={color=myParula01Blue,densely dotted,mark=o,mark options={solid}}}
\tikzset{myparula14/.style={color=myParula01Blue,dashdotted,mark=triangle,mark options={solid}}}
\tikzset{myparula15/.style={color=myParula01Blue,dashdotdotted,mark=square,mark options={solid}}}
\tikzset{myparula21/.style={color=myParula02Orange,solid,mark=+,mark options={solid}}}
\tikzset{myparula22/.style={color=myParula02Orange,densely dashed,mark=x,mark options={solid}}}
\tikzset{myparula23/.style={color=myParula02Orange,densely dotted,mark=o,mark options={solid}}}
\tikzset{myparula24/.style={color=myParula02Orange,dashdotted,mark=triangle,mark options={solid}}}
\tikzset{myparula25/.style={color=myParula02Orange,dashdotdotted,mark=square,mark options={solid}}}
\tikzset{myparula31/.style={color=myParula03Yellow,solid,mark=+,mark options={solid}}}
\tikzset{myparula32/.style={color=myParula03Yellow,densely dashed,mark=x,mark options={solid}}}
\tikzset{myparula33/.style={color=myParula03Yellow,densely dotted,mark=o,mark options={solid}}}
\tikzset{myparula34/.style={color=myParula03Yellow,dashdotted,mark=triangle,mark options={solid}}}
\tikzset{myparula35/.style={color=myParula03Yellow,dashdotdotted,mark=square,mark options={solid}}}
\tikzset{myparula41/.style={color=myParula04Purple,solid,mark=+,mark options={solid}}}
\tikzset{myparula42/.style={color=myParula04Purple,densely dashed,mark=x,mark options={solid}}}
\tikzset{myparula43/.style={color=myParula04Purple,densely dotted,mark=o,mark options={solid}}}
\tikzset{myparula44/.style={color=myParula04Purple,dashdotted,mark=triangle,mark options={solid}}}
\tikzset{myparula45/.style={color=myParula04Purple,dashdotdotted,mark=square,mark options={solid}}}
\tikzset{myparula51/.style={color=myParula05Green,solid,mark=+,mark options={solid}}}
\tikzset{myparula52/.style={color=myParula05Green,densely dashed,mark=x,mark options={solid}}}
\tikzset{myparula53/.style={color=myParula05Green,densely dotted,mark=o,mark options={solid}}}
\tikzset{myparula54/.style={color=myParula05Green,dashdotted,mark=triangle,mark options={solid}}}
\tikzset{myparula55/.style={color=myParula05Green,dashdotdotted,mark=square,mark options={solid}}}
\tikzset{myparula61/.style={color=myParula06LightBlue,solid,mark=+,mark options={solid}}}
\tikzset{myparula62/.style={color=myParula06LightBlue,densely dashed,mark=x,mark options={solid}}}
\tikzset{myparula63/.style={color=myParula06LightBlue,densely dotted,mark=o,mark options={solid}}}
\tikzset{myparula64/.style={color=myParula06LightBlue,dashdotted,mark=triangle,mark options={solid}}}
\tikzset{myparula65/.style={color=myParula06LightBlue,dashdotdotted,mark=square,mark options={solid}}}
\tikzset{myparula71/.style={color=myParula07Red,solid,mark=+,mark options={solid}}}
\tikzset{myparula72/.style={color=myParula07Red,densely dashed,mark=x,mark options={solid}}}
\tikzset{myparula73/.style={color=myParula07Red,densely dotted,mark=o,mark options={solid}}}
\tikzset{myparula74/.style={color=myParula07Red,dashdotted,mark=triangle,mark options={solid}}}
\tikzset{myparula75/.style={color=myParula07Red,dashdotdotted,mark=square,mark options={solid}}}
\pgfplotsset{
    mysimpleplot/.style = {
        every axis plot/.prefix style={thick},
        width=\linewidth,
        height=0.75\linewidth,
        title style={font=\footnotesize,align=center},
        legend cell align=left,
        legend style={font=\footnotesize},
        legend columns=3,
        legend style={
            at={(0.5,1)},
            yshift=0.3em,
            anchor=south,
            draw=none,
            /tikz/every even column/.append style={
                column sep=0.3em
            },
            cells={
                align=left
            }
        },
        grid=both,
        minor tick num=4,
        major grid style={solid,draw=gray!50},
        minor grid style={densely dotted,draw=gray!50},
        label style={font=\footnotesize,align=center},
        tick label style={font=\footnotesize},
    },
}
\title{\EAF Protocols:\\ A Resolution of the Availability-Finality Dilemma}
\author{%
\IEEEauthorblockN{Joachim Neu}%
\IEEEauthorblockA{%
jneu@stanford.edu%
}%
\and%
\IEEEauthorblockN{Ertem Nusret Tas}%
\IEEEauthorblockA{%
nusret@stanford.edu%
}%
\and%
\IEEEauthorblockN{David Tse}%
\IEEEauthorblockA{%
dntse@stanford.edu%
}%
\thanks{The authors contributed equally and are listed alphabetically.
Contact: DT.%
}%
}%
\def\ps@headings{%
\def\@oddhead{\mbox{}\scriptsize\rightmark \hfil \thepage}%
\def\@evenhead{\scriptsize\thepage \hfil \leftmark\mbox{}}}
\begin{document}
\begin{onlyonarxiv}
\bstctlcite{IEEEexample:BSTcontrol_arxiv}
\end{onlyonarxiv}
\begin{onlyinproceedings}
\bstctlcite{IEEEexample:BSTcontrol_proceedings}
\end{onlyinproceedings}
\nocite{neu2020ebbandflow}
\maketitle

\begin{abstract}
The CAP theorem says that no blockchain can be live under dynamic participation and safe under temporary network partitions. To resolve this availability-finality dilemma, we formulate a new class of flexible consensus protocols, {\em \eaf protocols}, which support a full dynamically available ledger in conjunction with a finalized prefix ledger.
The finalized ledger falls behind the full ledger when the network partitions but catches up when the network heals.
Gasper, the current candidate protocol for Ethereum 2.0's beacon chain, combines the finality gadget Casper FFG with the LMD GHOST fork choice rule and aims to achieve this property. However, we discovered an attack in the standard synchronous network model,
highlighting a general difficulty with existing finality-gadget-based designs.
We present a construction of provably secure \eaf protocols with optimal resilience. Nodes run an off-the-shelf dynamically available protocol, take snapshots of the growing available ledger, and input them into a separate off-the-shelf BFT protocol to finalize a prefix.
We explore connections with flexible BFT and improve upon the state-of-the-art for that problem.
\end{abstract}

\section{Introduction}

\label{sec:intro}

\subsection{The Availability-Finality Dilemma}

Distributed consensus is a 40-year-old field. 
In its classical state machine replication formulation,
\emph{clients}
(\eg, merchants)
issue \emph{transactions}
(\eg, payments)
to be shared with \emph{nodes}
(\eg, the servers implementing a distributed payment system)
who communicate among each other via
an unreliable network
and seek to reach agreement on
a common \emph{ledger}%
{ }(\eg, sequence of payments)%
.
In the standard permissioned setting, the number of nodes is assumed to be known, fixed and each node is always awake, actively participating in the consensus protocol.  One important novelty blockchains have brought into this field is the notion of {\em dynamically available protocols}: consensus systems that can support an unknown number of nodes each of which can go to sleep and awake dynamically. Dynamic availability is a useful property of a consensus protocol, particularly in a large-scale setting with many nodes not all of which are active at the same time. Nakamoto's Proof-of-Work (PoW) longest chain protocol \cite{nakamoto_paper} is perhaps the first such dynamically available consensus protocol. The amount of mining power is varying in time and the system is live and safe as long as less than $50\%$ of the online hashrate belongs to adversary miners. The longest chain design was subsequently adapted to support dynamic availability in permissioned \cite{sleepy} and Proof-of-Stake (PoS) settings \cite{snowwhite,david2018ouroboros,badertscher2018ouroboros}.
Supporting dynamic availability is more challenging in these settings. Earlier works need to assume all adversary nodes are awake at the beginning \cite{sleepy,badertscher2018ouroboros} or a trusted setup for nodes to join the network \cite{snowwhite,david2018ouroboros}, but recently it has been shown that these restrictions can be removed using verifiable delay functions \cite{posat}.

One limitation of dynamically available protocols is that they are not tolerant to network partition: when the network partitions, honest nodes in a dynamically available protocol will think that many nodes are asleep, continue to confirm transactions, and thus is not safe.\footnote{In this paper, network partition can equally mean a catastrophic physical disconnection among the nodes, or perhaps a less rare situation where many adversary nodes are not communicating with the honest nodes but building a chain in private.} This is in contrast to permissioned BFT protocols designed for partially synchronous networks, such as PBFT \cite{pbft}, Tendermint \cite{tendermint_thesis,tendermint}, Hotstuff \cite{yin2018hotstuff} and Streamlet \cite{streamlet}.
This type of protocols is the basis for permissioned blockchains such as Libra \cite{libraBFT,baudet2018state} and PoS blockchains such as Algorand \cite{chen2016algorand,gilad2017algorand}.
In these protocols, a quorum of two-thirds of the signatures of all the nodes is required to finalize transactions, and hence is safe under network partition. On the other hand, these protocols are not live under dynamic availability: when many nodes are asleep, there is not enough of a quorum for the consensus protocol to proceed and it will get stalled. In fact, it is impossible for {\em any} protocol to be both safe under network partition and live under dynamic participation: individual nodes in the network cannot distinguish between the two scenarios to act differently. This intuition is formalized in \cite{sleepy} and its connection to the CAP theorem \cite{cap} was made precise recently in \cite{lewispye2020resource}.
 In light of this, protocol designers
see themselves faced with an availability-finality dilemma: whether to favor liveness under dynamic participation or safety under network partition. Hence, consensus protocols are typically classified as liveness-favoring or safety-favoring \cite{PS_partition}.

\subsection{\EAF Protocols}

For inspiration on a way to resolve this dilemma, let us revisit another important aspect of Nakamoto's longest chain protocol: the $k$-deep confirmation rule. In this protocol, all miners work on the longest chain, but different clients can choose different values of $k$ to determine how deep a block should be in the longest chain to confirm it. A client who chooses a larger value for $k$ is a more conservative client, believing in a more powerful attacker or wanting more reliability, and its ledger is a prefix of that of a more aggressive client which chooses a smaller value of $k$. Hence, in contrast to classic consensus protocols, Nakamoto's protocol supports {\em multiple} (nested) ledgers rather than only a single one. This concept of {\em flexible consensus} is
formalized and
further developed in \cite{flexibleBFT}, where different clients can make different assumptions about the synchronicity of the network as well as the power of the adversary.

The CAP theorem says no protocol can support clients that simultaneously want availability and finality. Inspired by the idea of flexible consensus, we can instead seek a flexible protocol that supports two types of clients: conservative clients who favor finality and want to be safe under network partition, and more aggressive clients who favor availability and want to be live under dynamic availability. A conservative client will only trust a {\em finalized} ledger, which is a prefix of a longer dynamically available ledger (or, \emph{available} ledger for short) believed by a more aggressive client. The finalized ledger falls behind the available ledger when network partitions, but catches up when the network heals.
This \emph{\eaf} property
avoids a system-wide determination of availability versus finality and instead leaves this decision to the
clients.

\subsection{Understanding Gasper}

Gasper \cite{buterin2020combining} is the current candidate protocol for Ethereum 2.0's beacon chain. The Gasper protocol is complex, combining the finality gadget Casper FFG \cite{buterin2017casper} with the LMD (Latest Message Driven) GHOST fork choice rule in a handcrafted way. One motivation for our work is to understand Gasper's design goals.  As far as we can gather, two of its main goals are:%
\begin{enumerate}
    \item Ability to finalize certain blocks in the blockchain \cite[p. 1]{buterin2020combining}. In addition to network partition tolerance, finalization also allows accountability through slashing of protocol violators.
    \item Support of a highly available distributed ledger which does not halt even when finality is not achieved
    \cite{ethresearch-liveness-requirement-eth2,ryan2020}, \cite[Section 8.7]{buterin2020combining}.
    Availability is a central feature of the existing global Ethereum blockchain.
\end{enumerate}

Although the sense in which Gasper aims to simultaneously achieve these two goals is not specified in \cite{buterin2020combining}, we do know from the CAP theorem that no protocol can finalize all blocks {\em and} be a highly available ledger at the same time. Thus, we believe that the \eaf property is a good formulation of Gasper's design goals. In this context, the role of the finality gadget is to finalize a prefix of the ledger and the role of LMD GHOST is to support availability.

In \cite{buterin2020combining}, Gasper's finalized ledger is shown to be safe. However, it is claimed to be live only under a non-standard
\emph{stochastic}
network delay model. Following the standards advocated by \cite{cachin2017blockchain} for
the design and analysis of
blockchain protocols, we analyzed Gasper under a standard security model, and found it to be insecure. In particular, we discovered a liveness attack on Gasper in the standard synchronous model where messages can be delayed arbitrarily by the adversary up to a known network delay bound. Moreover, because this liveness attack is a balancing attack causing the votes to split between two parallel chains, this attack also denies the safety of the available ledger even when there is no network partition.

\subsection{A Provably Secure Construction with Optimal Resilience}
\label{sec:abstract-eaf-construction}

\begin{figure}%
    \centering%
    \vspace{-0.75em}%
    \subfloat[State machine replication\label{fig:protocol-overview-smr}]{%
        \begin{tikzpicture}[
                x=0.2cm,
                y=0.4cm,
            ]
            \small

            \node at (0,-0.5) [draw,minimum height=3.6cm,minimum width=4cm,thick] {\Huge $\Pi$};
            
            \coordinate (INTERFACE_txs_connector) at (0,4);
            \node (INTERFACE_txs) at (0,6) [align=center] {\Txs{}{}};
            
            \draw [-Latex,double] (INTERFACE_txs) -- (INTERFACE_txs_connector);
            
            \coordinate (INTERFACE_LOG_connector) at (0,-5);
            \node (INTERFACE_LOG) at (0,-7) {$\LOG{}{}$};
            
            \draw [-Latex,double] (INTERFACE_LOG_connector) -- (INTERFACE_LOG);

        \end{tikzpicture}%
    }%
    \hfill%
    \subfloat[\SAC\label{fig:protocol-overview-eaf}]{%
        \begin{tikzpicture}[
                x=0.2cm,
                y=0.4cm,
            ]
            \small
            
            \node (box_lc) at (-5,0) [draw,minimum width=1cm,minimum height=1cm,align=center] {$\PIlc$};
            \node (box_bft) at (+5,0) [draw,minimum width=1cm,minimum height=1cm,align=center] {$\PIbft$};
            
            \node (chain_lc) at (-5,-3) [align=center] {$\LOGlc{}{}$};
            \node (chain_bft) at (5,-3) [align=center] {$\LOGbft{}{}$};

            \coordinate (txs) at (-5,2.5);

            \coordinate (ledger_da) at (-5,-5);
            \coordinate (ledger_fin) at (5,-5);
            
            \draw[-Latex] (txs) -- (box_lc);
            \draw[-Latex] (box_lc) -- (chain_lc);
            \draw[-Latex] (box_bft) -- (chain_bft);
            
            \draw[-Latex] (chain_lc) -- (-0.5,-3) -- (0.5,2.5) -- (5,2.5) -- (box_bft)
                node [pos=0,right,align=center,font=\footnotesize] {snap-\\shots};
            
            \draw[-Latex,dashed] (0,0) -- (box_bft);

            \draw[-Latex] (chain_lc) -- (ledger_da);
            \draw[-Latex] (chain_bft) -- (ledger_da);
            \draw[-Latex] (chain_bft) -- (ledger_fin);

            \node (outerbox) at (0,-0.5) [draw,minimum height=3.6cm,minimum width=4cm,thick] {};
            
            \coordinate (INTERFACE_txs_connector) at (0,4);
            \node (INTERFACE_txs) at (0,6) [align=center] {$\Txs{}{}$};
            
            \draw [-Latex,double] (INTERFACE_txs) -- (INTERFACE_txs_connector);
            \draw (INTERFACE_txs_connector) -- (txs);
            
            \coordinate (INTERFACE_LOGda_connector) at (-5,-5);
            \coordinate (INTERFACE_LOGfin_connector) at (5,-5);
            \node (INTERFACE_LOGda) at (-5,-7) {$\LOGda{}{}$};
            \node (INTERFACE_LOGfin) at (5,-7) {$\LOGfin{}{}$};
            
            \draw [-Latex,double] (INTERFACE_LOGda_connector) -- (INTERFACE_LOGda);
            \draw [-Latex,double] (INTERFACE_LOGfin_connector) -- (INTERFACE_LOGfin);

            \node at (outerbox.north west) [anchor=north west,yshift=-1pt] {\large $\PIanyexample$};

        \end{tikzpicture}%
    }%
    \caption[]{%
        \subref{fig:protocol-overview-smr} A consensus protocol $\Pi$ implementing state machine replication receives transactions $\Txs{}{}$ as inputs from the environment and outputs an ever-increasing ordered ledger of transactions $\LOG{}{}$.
        \subref{fig:protocol-overview-eaf} A \sac protocol produced by \theconstruction, $\PIanyexample$, receives transactions $\Txs{}{}$ from the environment and outputs two ever-increasing ledgers $\LOGda{}{}$ and $\LOGfin{}{}$ by running a dynamically available protocol $\PIlc$ and a partially synchronous protocol $\PIbft$ in parallel. The inputs to $\PIlc$ are environment's transactions but the inputs to $\PIbft{}{}$ are snapshots of the output ledger of $\PIlc$ from the nodes' views. 
        The dashed line signifies that
        nodes use the output of $\PIlc$
        as side information in $\PIbft$
        to boycott the finalization of invalid snapshots.
    }
    \label{fig:protocol-overview-abstract} 
\end{figure}

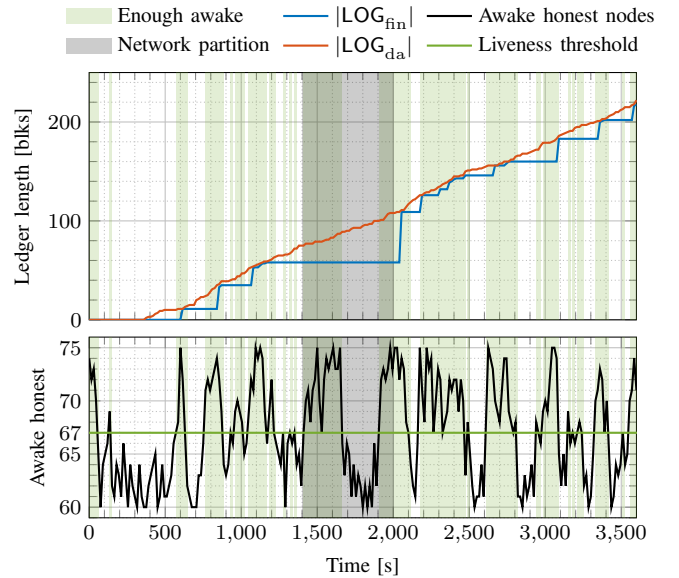
\begin{figure}
    \centering%
    \begin{tikzpicture}%
        \footnotesize
        \begin{axis}[
            mysimpleplot,
            name=plot1,
            ylabel={Ledger length [blks]},
            legend columns=2,
            xmin=0, xmax=3600,
            ymin=0, ymax=250,
            height=0.55\linewidth,
            xmajorticks=false,
            transpose legend,
        ]
        
            \def\DATAPREFIX{./figures/simulation-04-intro}

            \foreach \tStart/\tMidway/\tStop in {0/30.0/60, 135/142.5/150, 570/607.5/645, 765/825.0/885, 930/937.5/945, 960/990.0/1020, 1050/1110.0/1170, 1185/1207.5/1230, 1275/1282.5/1290, 1320/1327.5/1335, 1350/1357.5/1365, 1410/1537.5/1665, 1905/2010.0/2115, 2175/2325.0/2475, 2490/2497.5/2505, 2610/2715.0/2820, 2940/2955.0/2970, 2985/3037.5/3090, 3150/3157.5/3165, 3180/3187.5/3195, 3210/3232.5/3255, 3330/3375.0/3420, 3510/3517.5/3525, 3555/3577.5/3600} {
                \edef\temp{\noexpand\draw [fill=myParula05Green,fill opacity=0.2,draw=none] (axis cs:\tStart,-100) rectangle (axis cs:\tStop,1000);}
                \temp
            }
            
            \addlegendimage{area legend,draw opacity=0,fill=myParula05Green,fill opacity=0.2,draw=none};
            \addlegendentry{Enough awake};

            \draw [fill=black,fill opacity=0.2,draw=none] (axis cs:1400,-100) rectangle (axis cs:2000,1000);

            \addlegendimage{area legend,draw opacity=0,fill=black,fill opacity=0.2,draw=none};
            \addlegendentry{Network partition};

            \addplot [myparula11,mark=none] table [x=t,y=l_Lp]
            {\DATAPREFIX/sim-04-phases.dat};
            \label{leg:simulations-intro-Lp}
            \addlegendentry{$|\LOGfin{}{}|$};
            
            \addplot [myparula21,mark=none] table [x=t,y=l_Lda]
            {\DATAPREFIX/sim-04-phases.dat};
            \label{leg:simulations-intro-Lda}
            \addlegendentry{$|\LOGda{}{}|$};

            \addlegendimage{black,mark=none}
            \addlegendentry{Awake honest nodes};
            
            \addlegendimage{myParula05Green,no markers}
            \addlegendentry{Liveness threshold};

        \end{axis}
        \begin{axis}[
            mysimpleplot,
            name=plot2,
            at=(plot1.below south), anchor=above north,
            xlabel={Time [s]},
            ylabel={Awake honest},
            xmin=0, xmax=3600,
            ymin=59, ymax=76,
            height=0.45\linewidth,
            extra y ticks={67},
        ]
        
            \def\DATAPREFIX{./figures/simulation-04-intro}

            \foreach \tStart/\tMidway/\tStop in {0/30.0/60, 135/142.5/150, 570/607.5/645, 765/825.0/885, 930/937.5/945, 960/990.0/1020, 1050/1110.0/1170, 1185/1207.5/1230, 1275/1282.5/1290, 1320/1327.5/1335, 1350/1357.5/1365, 1410/1537.5/1665, 1905/2010.0/2115, 2175/2325.0/2475, 2490/2497.5/2505, 2610/2715.0/2820, 2940/2955.0/2970, 2985/3037.5/3090, 3150/3157.5/3165, 3180/3187.5/3195, 3210/3232.5/3255, 3330/3375.0/3420, 3510/3517.5/3525, 3555/3577.5/3600} {
                \edef\temp{\noexpand\draw [fill=myParula05Green,fill opacity=0.2,draw=none] (axis cs:\tStart,-100) rectangle (axis cs:\tStop,1000);}
                \temp
            }

            \draw [fill=black,fill opacity=0.2,draw=none] (axis cs:1400,-100) rectangle (axis cs:2000,1000);

            \addplot [name path=dacurve,black,mark=none] table [x=t,y=l_awake]
            {\DATAPREFIX/sim-04-phases.dat};
            \label{leg:simulations-intro-awake}
            
            \addplot[name path=threshold,myParula05Green,no markers,domain=-1000:4600] {67};
            \label{leg:simulations-intro-threshold}

        \end{axis}
    \end{tikzpicture}%
    \vspace{-0.75em}%
    \caption[]{A simulated run of an example \sac protocol (combining longest chain  and Streamlet \cite{streamlet}) under dynamic participation and network partition. The lengths of the two ledgers are plotted over time.  During network partition or when few nodes are awake, the finalized ledger falls behind the available ledger, but catches up after the network heals or when a sufficient number of nodes wake up. See Section~\ref{sec:simulations} for details on the simulation setup.}
    \label{fig:intro_sim}
\end{figure}

In this work, we make two contributions. First we define what an \eaf protocol is and its desired security property. While the goals of an \eaf protocol have been informally discussed to motivate finality-gadget-based designs such as Gasper and a few others (\eg, \cite{stewart2020grandpa}), to the best of our knowledge these informal goals have not been translated into a mathematically defined security property.
Second, we provide a construction of a class of protocols, which we call {\em \sac} protocols, that provably satisfies the \eaf security property with optimal resilience.
In contrast to Gasper's handcrafted design, the \sac construction uses an off-the-shelf dynamically available protocol\footnote{Longest chain protocols are representative members of this class of protocols, hence the notation $\PIlc$, but this class includes many other protocols as well.} $\PIlc$ and an off-the-shelf partially synchronous BFT protocol $\PIbft$ (Figure \ref{fig:protocol-overview-abstract}). Nodes execute the protocol by executing the two sub-protocols in parallel. The $\PIlc$ sub-protocol takes as inputs transactions $\Txs{}{}$ from the environment and outputs an ever-increasing ledger $\LOGlc{}{}$. Over time, each node takes {\em snapshots} of this ledger based on its own current view, and input these snapshots into the second sub-protocol $\PIbft{}{}$ to finalize some of the transactions.
The output ledger $\LOGbft{}{}$ of $\PIbft$ is an ordered list of such snapshots. To create the finalized ledger $\LOGfin{}{}$ of transactions, $\LOGbft{}{}$ is flattened (\ie, all snapshots
included in $\LOGbft{}{}$ are concatenated) and sanitized so that only the first appearance of a transaction remains. Finally, $\LOGfin{}{}$ is prepended to $\LOGlc{}{}$ and sanitized to form the available ledger $\LOGda{}{}$.  A simulated run of an example \sac protocol is shown in Figure \ref{fig:intro_sim}.

Even though honest nodes following a \sac protocol input snapshots of the (confirmed) ledger $\LOGlc{}{}$ into $\PIbft$, an adversary could, in an attempt to break safety, input an ostensible ledger snapshot which really contains unconfirmed transactions. This motivates the last ingredient of our construction: in the $\PIbft$ sub-protocol, each honest node
boycotts the finalization of snapshots that are not confirmed in $\PIlc$ in its view.
An off-the-shelf BFT protocol needs to be modified to implement this constraint. We show that fortunately the required modification is minor in several example protocols, including PBFT \cite{pbft}, Hotstuff \cite{yin2018hotstuff} and Streamlet \cite{streamlet}. 
When any of these slightly modified BFT protocols is used in conjunction with a permissioned longest chain protocol \cite{sleepy,snowwhite,david2018ouroboros}, we prove a formal security property for the resulting \sac protocol,
which is our definition of the desired goal of an \eaf protocol.

\begin{theorem*}[Informal]
Consider
a network environment where:
\begin{enumerate}
    \item Communication is asynchronous until a  global stabilization time $\GST$ after which communication becomes synchronous, and 
    \item honest nodes sleep and wake up until a global awake time $\GOT$ after which all nodes are awake. 
    Adversary nodes are always awake.
\end{enumerate}
Then
\begin{enumerate}
    \item ({\bf P1 - Finality}): The finalized ledger $\LOGfin{}{}$ is guaranteed to be safe at all times, and live after $\max\{\GST,\GOT\}$, provided that fewer than $33\%$ of all the nodes are adversarial.
    \item ({\bf P2 - Dynamic Availability}): If $\GST = 0$, the available ledger $\LOGda{}{}$ is guaranteed to be safe and live at all times, provided that at all times fewer than $50\%$ of the awake nodes are adversarial.
\end{enumerate}
\end{theorem*}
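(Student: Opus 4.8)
The plan is to reduce both claims to the security guarantees of the two off-the-shelf ingredients — the dynamically available $\PIlc$ (safe and live under synchrony as long as fewer than $50\%$ of the awake nodes are adversarial) and the partially synchronous BFT protocol $\PIbft$ (safe always, live after $\GST$ as long as fewer than $33\%$ of all nodes are adversarial) — together with two structural facts: that $\flatten$ and $\sanitize$ are deterministic and turn a prefix relation on snapshot-lists into a prefix relation on transaction-ledgers, and that the boycotting rule forces every finalized snapshot to have been confirmed in some honest node's $\PIlc$ view.

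First I would establish the safety half of \Pone. Since fewer than $33\%$ of all nodes are adversarial, $\PIbft$ guarantees that $\LOGbft{}{}$ is consistent across honest nodes and times, and this is a safety-only property that holds even before $\GST$. Because $\LOGfin{}{} = \sanitize(\flatten(\LOGbft{}{}))$ is a deterministic, prefix-preserving image of $\LOGbft{}{}$, consistency of $\LOGbft{}{}$ immediately yields safety of $\LOGfin{}{}$. The only extra point to verify is that the boycotting modification does not break $\PIbft$'s safety; since boycotting only makes honest nodes more reluctant to vote, it cannot manufacture a conflicting quorum, and for the concrete protocols this is checked where the modification is introduced.

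Next I would isolate the key lemma underlying the dynamic-availability claim: every snapshot that ever becomes finalized required a finalization quorum which, because the adversary controls fewer than two-thirds of the nodes in both regimes (for \Ptwo this follows since adversaries are always awake, so fewer than $50\%$ of awake adversaries forces $a<n/2$), must include at least one honest vote; that honest voter only votes when the snapshot is confirmed in its $\PIlc$ view. Combined with the safety of $\PIlc$, which makes all honest confirmed ledgers mutually prefix-comparable across nodes and times, this shows the finalized snapshots are nested prefixes of the honest $\LOGlc{}{}$, so $\LOGfin{}{} \preceq \LOGlc{}{}$ as transaction ledgers. For \Ptwo, where $\GST = 0$ makes the network synchronous throughout and $\PIlc$ is therefore safe and live under the $50\%$ awake-adversary bound, $\LOGfin{}{}\preceq\LOGlc{}{}$ means that prepending $\LOGfin{}{}$ and sanitizing returns exactly $\LOGlc{}{}$, so $\LOGda{}{}$ inherits $\PIlc$'s safety and liveness directly.

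The main obstacle is the liveness half of \Pone. After $\max\{\GST,\GOT\}$ all honest nodes are awake and the network is synchronous, so $\PIlc$ is both safe and live and the pointwise minimum of the honest confirmed ledgers grows without bound; I would argue that a snapshot equal to this common confirmed prefix is confirmed in every honest view, so no honest node boycotts it and $\PIbft$'s liveness (now that more than $2/3$ of the nodes are awake and honest) carries it into $\LOGbft{}{}$ within bounded time. The delicate point is the timing interaction between the two sub-protocols: a snapshot proposed by one honest node may momentarily run ahead of another honest node's confirmed ledger and trigger a boycott, so I must show that $\PIlc$'s liveness replenishes the common confirmed prefix fast enough that honest validators catch up within a BFT round and a quorum of non-boycotting votes always re-forms, guaranteeing that fresh transactions reach $\LOGfin{}{}$ and, by the prefix relation built into $\sanitize$, persist in $\LOGda{}{}$.
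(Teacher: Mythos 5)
Your decomposition matches the paper's almost exactly: safety of $\LOGfin{}{}$ from quorum-intersection safety of $\PIbft$ (unharmed by boycotting, which only withholds votes) plus prefix-preserving ledger extraction; safety of $\LOGda{}{}$ from the observation that every finalized snapshot carries at least one honest vote and is therefore a confirmed -- hence, by safety of $\PIlc$, mutually nested -- prefix of the honest longest chain; and liveness of $\LOGfin{}{}$ from the conjunction of $\PIbft$'s liveness with $\PIlc$'s security after $\max\{\GST,\GOT\}$. You also correctly flag the timing interaction between proposals and the boycott rule, which the paper resolves by having honest nodes vote only $\Delta$ time after seeing a proposal, so that the echoed \blockslc confirming an honest proposer's snapshot reach all honest voters first.

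The genuine gap is the step where you write that after $\max\{\GST,\GOT\}$ the network is synchronous and all honest nodes are awake, ``so $\PIlc$ is both safe and live.'' This is not an off-the-shelf consequence of the Sleepy-style security theorems, which assume synchrony and participation from time zero. Before $\max\{\GST,\GOT\}$ the adversary (all of whose nodes are always awake) can accumulate an arbitrarily long private chain, or two conflicting private chains, while honest nodes are asleep or partitioned, and then release these blocks to displace honest blocks or flip honest nodes between forks for a long time \emph{after} the network heals. Establishing that $\PIlc$ nevertheless recovers is the main technical content of the paper's proof: it extends the pivot technique of Sleepy to a notion of $\GST$-strong pivots that count only convergence opportunities after $\max\{\GST,\GOT\}$ against adversarial slots over the \emph{entire} execution, and shows such pivots occur within $\const(\max\{\GST,\GOT\}+\sigma)$ time. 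This is also why the formal guarantee is liveness after $\const(\max\{\GST,\GOT\}+\sigma)$ rather than immediately after $\max\{\GST,\GOT\}$ as your argument would suggest; without some argument bounding the adversary's pre-accumulated lead, the liveness half of \Pone does not go through.
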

Note that the assumptions on the adversary are different for the security of the two ledgers, in line with the spirit of a flexible protocol \cite{flexibleBFT}. Together, \Pone and \Ptwo say that the finalized ledger $\LOGfin{}{}$ is safe under network partition, \ie, before $\max\{\GST,\GOT\}$, and afterwards catches up with the available ledger $\LOGda{}{}$, which is always live and safe provided that the majority of awake nodes is honest.

If $\GOT = 0$, then the environment is the classical partially synchronous network, and the ledger $\LOGfin{}{}$ has the optimal resilience achievable in that environment. On the other hand, if $\GST = 0$ and $\GOT = \infty$, then the environment is a synchronous network with dynamic participation, and the ledger $\LOGda{}{}$
has the optimal resilience achievable in that environment. Thus, our construction achieves consistency between the two ledgers without sacrificing the  best possible security guarantees of the individual ledgers. In that sense, our construction achieves the \eaf property in an optimal manner.

\subsection{Flexible BFT Revisited}

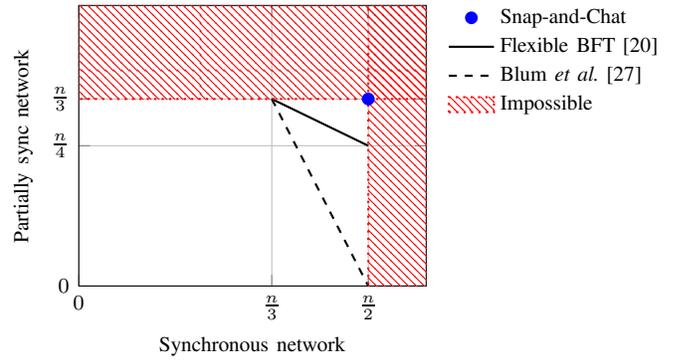
\begin{figure}
    \centering
    \begin{tikzpicture}
        \begin{axis}[
            mysimpleplot,
            width=0.7\linewidth,
            height=0.6\linewidth,
            legend columns=1,
            legend style={
                at={(1,1)},
                xshift=0.5em,
                anchor=north west,
                draw=none,
                /tikz/every even column/.append style={
                    column sep=0.3em
                },
                cells={
                    align=left
                },
            },
            xlabel={Synchronous network},
            ylabel={Partially sync network},
        	xmin=0, xmax=0.6,
        	ymin=0, ymax=0.5,
            xtick={0,0.33333,0.50000},
            xticklabels={$0$,$\frac{n}{3}$,$\frac{n}{2}$},
            ytick={0,0.25000,0.33333},
            yticklabels={$0$,$\frac{n}{4}$,$\frac{n}{3}$},
        ]
            \addplot [blue,only marks,mark=*] coordinates { (0.50000,0.33333) };
            \addlegendentry{\SAC};
            
            \addplot [] coordinates { (0.33333,0.33333) (0.50000,0.25000) };
            \addlegendentry{Flexible BFT \cite{flexibleBFT}};
            
            \addplot [dashed] coordinates { (0.33333,0.33333) (0.50000,0.0) };
            \addlegendentry{Blum \etal \cite{katzBA}};

            \draw[red,dotted,thick,pattern=north west lines, pattern color=red] (axis cs:-0.1,0.3333) rectangle (axis cs:1,1);
            \draw[red,dotted,thick,pattern=north west lines, pattern color=red] (axis cs:0.5,-0.1) rectangle (axis cs:1,1);
            
            \addlegendimage{area legend,red,dotted,thick,pattern=north west lines, pattern color=red}
            \addlegendentry{Impossible}; %
            
        \end{axis}
    \end{tikzpicture}%
    \vspace{-0.75em}%
    \caption{The flexible BFT protocol can simultaneously support clients who can tolerate $f$ adversaries in a synchronous environment and clients who can tolerate $(n-f)/2$ adversaries in a  partially synchronous environment, for any $f$ between $n/3$ and $n/2$. Thus, there is a tradeoff between the two guarantees. The \sac protocol achieves $(n/2, n/3)$, simultaneously optimal. No tradeoff is necessary.} 
    \label{fig:flexible_BFT}
\end{figure}

\Pone and \Ptwo together with prefix consistency
provide flexible consensus.
Our mathematical formulation of the \eaf property can be viewed as going beyond that of Flexible BFT \cite{flexibleBFT} in two ways. 
First, \cite{flexibleBFT} focuses on synchronicity assumptions and we bring dynamic participation as a new client belief into the story.
Second, the formulation in \cite{flexibleBFT} requires consistency between ledgers of two clients only when their assumptions are both correct, but we require prefix consistency between the ledgers in {\em all} circumstances.
In that sense, the flexibility our formulation offers is closer in nature to the flexibility offered by Nakamoto's longest chain protocol.
Prefix consistency under all circumstances is crucial,
\eg, for cryptocurrencies,
where eventually all clients, no matter their beliefs,
should converge on a unique ledger,
a single version of history to settle disputes regarding
`who owns what'.

But even for the formulation considered in \cite{flexibleBFT}, our construction provides a different solution and offers stronger security guarantees than the white-box construction in \cite{flexibleBFT}. More specifically, the flexible BFT protocol in \cite{flexibleBFT} can simultaneously support clients who can tolerate $n/2$ adversaries in a synchronous environment and clients who can tolerate a fraction of $n/4$ adversaries in a  partially synchronous environment. Since a synchronous environment is a special case of the dynamic participation environment (by setting $\GOT = 0$), our construction improves the security guarantees to simultaneously support clients who can tolerate $n/2$ adversaries in a synchronous environment and clients who can tolerate $n/3$ adversaries in a partially synchronous environment. Consistent with the optimality of our construction, these guarantees cannot be improved further (see Figure \ref{fig:flexible_BFT}).

It is also insightful to compare our results with those of \cite{katzBA}, which designed a randomized Byzantine agreement protocol  secure under both a synchronous and an asynchronous environment. The dashed line in Figure \ref{fig:flexible_BFT} shows the tradeoff between the resiliences the protocol can support in the two environments, and this tradeoff is proved to be optimal. Note that this protocol is {\em not} a flexible protocol, since a single value has to be agreed upon regardless of which of the two environments one is in. Thus, the gap between the resilience achieved by the \sac protocol and the protocol in \cite{katzBA} can be interpreted as the {\em value of flexibility}. Interestingly, the protocol in \cite{katzBA} is also constructed by the composition of two sub-protocols, but in contrast to the construction of \sac protocols, the two sub-protocols are not off-the-shelf, but are constructed tailored to the problem at hand.

\subsection{Outline}

The remainder of this manuscript is structured as follows.
First, we present a balancing attack on Gasper in Section~\ref{sec:attacks-gasper},
demonstrating that Gasper is not secure.
Section~\ref{sec:optimal} formulates the \eaf security property, 
describes the construction of \sac protocols in detail
and proves that they satisfy the \eaf security property with optimal resilience.
We show the results of simulation experiments
providing an insight into the behavior of \sac protocols
in
Section~\ref{sec:simulations}.
In Section~\ref{sec:finality-gadgets},
we compare the design of \sac protocols and
finality gadgets.
We conclude the paper
with
how to transfer our results to the PoW setting in Section~\ref{sec:discussion-pow}
and
an overview of features beyond security
provided out-of-the-box by \sac protocols
in Section~\ref{sec:sac-for-eth2}.
\section{A Balancing Attack on Gasper}
\label{sec:attack}

\newcommand{\CasperFFG}[0]{Casper FFG\xspace}

\label{sec:attacks-gasper}

Gasper \cite{buterin2020combining} is the current proposal for Ethereum 2.0's beacon chain.
In the following, we exhibit a liveness attack against
Gasper in the synchronous network model.\footnote{%
Source code of a simulation of the attack
(discussed in Appendix~\ref{sec:appendix-attack-gasper-simulation})
can be found at: \url{https://github.com/tse-group/gasper-attack}.%
}
What is more, the attack leads to loss of safety for the
underlying dynamically available ledger.
Thus, Gasper is not secure in the synchronous network model
and does not provide a resolution to the availability-finality dilemma.

Our attack uses that under synchrony,
network delay is \emph{adversarial}
(rather than merely \emph{stochastic},
as was analyzed in \cite{buterin2020combining}).
Considering, \eg,
state-sponsored adversaries
or malicious network providers,
at least some degree of adversarial network delay
cannot be ruled out.
Furthermore, the synchrony model with adversarial delay is a well-established baseline model
for which many secure protocols are known.

Gasper is a vote-based PoS protocol which combines \CasperFFG \cite{buterin2017casper} with
a committee-based blockchain block proposal mechanism where the fork
(\ie, the tip of the chain to propose new blocks on or vote for)
is chosen
using the `greedy heaviest observed sub-tree' (GHOST) rule
under the
`latest message driven' (LMD) paradigm,
\ie,
taking
into consideration only the most recent vote per validator.
A Gasper vote consists of two parts, a GHOST vote and a \CasperFFG vote.
While details of Gasper preclude the vanilla bouncing attack \cite{ethresearch-bouncing-attack,ethresearch-bouncing-attack-analysis,ethresearch-bouncing-attack-prevention}
on the \CasperFFG layer,
Gasper is vulnerable to a similar balancing attack on the GHOST layer.
\input{attack_gasper_figure_overview_tiny}

Recall that
Gasper proceeds in epochs which are further
subdivided into $C$ slots each.
For simplicity, let
$C$ divide $n$ so that
every slot has a \emph{committee} of
size $n/C$.
For each epoch, a random permutation of all $n$ validators
assigns
validators to slots' committees
and designates a 
\emph{proposer} per slot.
Per slot,
the proposer
produces a new block extending the tip determined by
the fork choice rule $\mathsf{HLMD}(G)$ executed in
local
view $G$
(see \cite[Algorithm 4.2]{buterin2020combining}).
Then, each validator of the slot's committee
decides what block to vote for
using $\mathsf{HLMD}(G)$
in local
view $G$.
For the \CasperFFG layer,
a block can only become finalized if two-thirds of
validators
vote for it.
The attacker aims
to keep honest validators split between two options
(`left' and `right' chain, see Figure~\ref{fig:attack-gasper-overview})
indefinitely,
so that neither option ever gets two-thirds votes
and thus no block ever gets finalized.
Key technique to maintain this split
is that some adversarial validators
(`swayers' in Figure~\ref{fig:attack-gasper-overview})
withhold their votes
and release them only at specific times
and to specific subsets of honest nodes
in order to influence the fork choice of
honest nodes and thus steer which honest nodes
vote `left'/`right'.

The basic idea of the attack is as follows
(for a detailed description, see Appendix~\ref{sec:appendix-attack-gasper}
and \cite{ethresearch-gasper-liveness-attack}%
).
The adversary waits for an opportune epoch to kick-start the attack.
An epoch is opportune if the proposer in the first slot is adversarial,
and
in every slot of the epoch
there are
enough
(six suffice; explained in detail in Appendix~\ref{sec:appendix-attack-gasper-attack})
adversarial validators
to fulfill certain tasks
in the attack (see \textcircled{\footnotesize a}--\textcircled{\footnotesize d} in Figure~\ref{fig:attack-gasper-overview}).
In particular in the regime of many validators ($n \to \infty$),
the probability that a particular epoch is opportune is roughly $f/n$
(see Appendix~\ref{sec:appendix-attack-gasper-analysis-and-simulation}). %
Note that for $n$ large,
any positive fraction $f/n$
of adversarial nodes
suffices to mount the attack,
with the first opportune epoch
occurring after $n/f$ epochs on average.
For ease of exposition, let epoch $0$ be opportune.

The adversarial proposer of slot $0$ equivocates and produces two conflicting blocks
(`left' and `right')
which it reveals to two suitably chosen equal-sized subsets of the committee. %
One subset votes `left', the other subset votes `right' -- a tie.
The adversary selectively releases withheld votes from slot $0$ to split validators
of slot $1$ into two equal-sized groups, one which sees `left' as leading and votes for it, and
one which sees `right' as leading and votes for it -- still a tie.
The adversary continues this strategy
to maintain the tie throughout epoch $0$.

During epoch $1$, the adversary
selectively releases additional withheld votes from epoch $0$ to keep splitting validators
into two groups, one of which sees `left' as leading and votes `left',
the other sees `right' as leading and votes `right'.
Note that these groups now do not have to be equal in size.
It suffices for the adversary
to release withheld votes selectively so as to reaffirm
honest validators in their illusion that
whatever chain they previously voted for
happens to still be leading, so that they renew their vote.
Due to the LMD paradigm of Gasper's fork choice rule,
only the most recent vote per validator counts
and thus the effective vote tally remains unchanged.
At the end of epoch $1$ there are still two chains
with equally many votes and thus neither gets finalized.

For epoch $2$ and beyond the adversary
repeats its actions of epoch $1$.
Note that the validators whose withheld epoch $0$
votes the adversary
used to sway honest validators in epoch $1$
have themselves not voted in epoch $1$ yet.
Thus, during epoch $2$ the adversary
selectively releases
votes from epoch $1$ to
maintain the tie between the two chains.
This continues indefinitely.

Thus, Gasper is not live in
the synchronous model.
Furthermore,
the
block proposal mechanism is rendered unsafe
by the modified fork choice rule as
the chosen fork flip-flops between `left' and `right'.
Since Gasper does not satisfy
the desired ebb-and-flow
security property,
we next introduce a provably secure family of
\eaf protocols.
\section{Optimal \EAF Protocols}
\label{sec:optimal}

In this section, we formulate precisely the \eaf security property, present the construction of \sac protocols, and show that \sac protocols achieve the
\eaf
property with optimal resilience.
For the construction, we build state machine replication protocols $\PIanyexample$
(\sac protocols)
by composing a dynamically available longest-chain protocol \cite{sleepy,snowwhite, kiayias2017ouroboros,david2018ouroboros} as $\PIlc$ with a partially synchronous BFT protocol \cite{pbft,yin2018hotstuff,streamlet} as $\PIbft$.
The focus of this paper is on the permissioned setting. The resulting permissioned protocol can be viewed as a core around which a full PoS protocol can be built, much like Sleepy \cite{sleepy} is the permissioned core of the PoS protocol SnowWhite \cite{snowwhite}. To build a full PoS protocol, issues such as stake grinding \cite{snowwhite,badertscher2018ouroboros} have to be considered.
\Sac protocols can also be used in a hybrid PoS-PoW setting, where validators run the BFT sub-protocol and miners power the dynamically available sub-protocol (see Section~\ref{sec:discussion-pow}).
These are topics for future work.

\subsection{Model and Formulation}
\label{sec:model}

The execution model of $\PIanyexample$ inherits the cryptographic assumptions and primitives used in \cite{sleepy,streamlet,yin2018hotstuff}.
The cornerstones of the model are:
\begin{itemize}
    \item There are in total $n$ nodes numbered from $1$ thru $n$.
    \item Time proceeds in slots. %
        Nodes have synchronized clocks.\footnote{%
        Bounded clock offsets can be captured as part of the network delay.}
    \item There
    is
    a public-key infrastructure and each node is equipped with a unique cryptographic identity. %
    \item There
    is
    a
    random oracle,
    which serves as the source of randomness in \theconstruction.
    \item The adversary is a probabilistic poly-time algorithm.
\end{itemize}

\emph{Corruption:}
Before the protocol execution starts, the adversary gets to corrupt (up to) $f$ nodes, then called \emph{adversarial}. Adversarial nodes surrender their internal state to the adversary and can deviate from the protocol arbitrarily (Byzantine faults) under the adversary's control.
The remaining $(n-f)$ nodes are \emph{honest} and
follow the protocol as specified.

\emph{Networking:}
Nodes can send each other messages which arrive with a certain delay controlled by the adversary, subject to constraints elaborated below.

\emph{Sleeping:}
The adversary chooses,
for every time slot and honest node,
whether the node is \emph{awake} or \emph{asleep} in that slot, subject to constraints elaborated below.
An honest node that is awake in a slot executes the protocol faithfully in that slot.
An honest node that is asleep in a slot does not execute the protocol in that slot, and messages that would have arrived in that slot are queued and delivered in the first slot in which the node is awake again.
Adversarial nodes are always awake.
Using the features above,
dynamic participation
in the permissioned setting can be modelled,
where
all nodes' cryptographic identities are common knowledge
but
honest nodes do not know which nodes are awake or asleep at any given time.
Thus, the permissioned nature and dynamic participation represent two orthogonal aspects of the environment.
As building blocks for the environment
adopted for \eaf protocols,
recall that in a traditional \emph{synchronous network},
messages sent by honest nodes arrive within a known finite delay bound.
In a \emph{partially synchronous network} \cite{model-psync},
initially,
messages can be delayed arbitrarily.
After some time, the network turns synchronous.
Thus, partial synchrony models a network
with a period of partition followed by synchrony.
Although in reality,
multiple such periods of (a-)synchrony could alternate,
we follow the long-standing practice in the BFT literature
and study only a single such transition.
Now, recall the informal Theorem of Section~\ref{sec:abstract-eaf-construction}.
The theorem provides two sets of security guarantees, labelled as \Pone and \Ptwo, for the finalized and available ledgers.
These guarantees are stated under two sets of assumptions on the environment $\Env$ and the adversary $\Adv$.
The assumptions model a partially synchronous network and a synchronous network with dynamic participation, respectively. 

$\AdvEnvParameterized{1}{\beta}$ formalizes the model of \textbf{P1},
a partially synchronous network under dynamic participation,
with respect to
the fraction $\beta$ of adversary nodes:
\begin{itemize}
    \item 
        $\Adv_1$ corrupts $f = \beta n$ nodes.
        
    \item
        Before a global stabilization time $\GST$,
        $\Adv_1$ can delay network messages arbitrarily.
        After $\GST$,
        $\Adv_1$ is required to deliver all messages sent between honest nodes in at most $\Delta$ slots.
        $\GST$ is chosen by $\Adv_1$, unknown to the honest nodes, and can be a causal function of the randomness in the protocol.
        
    \item
        Before a global awake time
        $\GOT$,
        $\Adv_1$ determines which honest nodes are awake/asleep and when.
        After $\GOT$,
        all honest nodes are awake.\footnote{%
        Without slightly restricting
        dynamic participation via a $\GOT$
        after which all nodes are awake,
        this adversary would fall under the CAP theorem
        so that no secure protocol against it can exist.
        In many applications
        it is realistic that every now and then there is a period in which all nodes are awake.%
        }
        $\GOT$ is chosen by $\Adv_1$, unknown to the honest nodes and can be a causal function of the randomness in the protocol.
\end{itemize}

$\AdvEnvParameterized{2}{\beta}$ formalizes the model of
\textbf{P2},
a synchronous network under dynamic participation,
with respect to
a bound $\beta$ on the fraction of awake nodes that are adversarial:
\begin{itemize}
    \item
        At all times,
        $\Adv_2$ is required to deliver all messages sent between honest nodes in at most $\Delta$ slots.

    \item
        At all times,
        $\Adv_2$ determines which honest nodes are awake/asleep and when,
        subject to the constraint that
        at all times
        at most fraction $\beta$ of awake nodes are adversarial
        and at least one honest node is awake.
\end{itemize}
We next formalize the notion of safety, liveness and security \emph{after a certain time}.
For this purpose, we adopt and modify the security definition given in \cite{sleepy}.
This definition has a security parameter $\sigma$
which in the context of longest-chain protocols represents the confirmation delay for transactions.
In our analysis,
we consider a finite time horizon of size polynomial in $\sigma$.
Note that in the definition below, $\LOG{i}{t}$ denotes the ledger $\LOG{}{}$ in view of node $i$ at time $t$.

\begin{definition}
\label{def:security-after}
Let $\Tconfirm$ be a polynomial function of the security parameter $\sigma$. 
We say that a state machine replication protocol $\PI$ outputting a ledger $\LOG{}{}$ is \emph{secure after time} $T$ and has transaction confirmation time $\Tconfirm$
if $\LOG{}{}$ satisfies:
\begin{itemize}
    \item \textbf{Safety:} For any two times $t \geq t' \geq T$, and any two honest nodes $i$ and $j$ awake at times $t$ and $t'$ respectively, either $\LOG{i}{t} \preceq \LOG{j}{t'}$ or $\LOG{j}{t'} \preceq \LOG{i}{t}$.
    \item \textbf{Liveness:} If a transaction is received by an awake honest node at some time $t \geq T$, then, for any time $t' \geq t+\Tconfirm$ and honest node $j$ that is awake at time $t'$, the transaction will be included in $\LOG{j}{t'}$.
\end{itemize}
\end{definition}

Definition \ref{def:security-after} formalizes the meaning of `safety, liveness and security \emph{after a certain time $T$}'.
In general, there it might be two different times after which a protocol is safe (live).
A protocol that is safe (live) at all times (\ie, after $T=0$) is simply called \emph{safe} (\emph{live}) without further qualification.
With a slight abuse of notation, we also call a ledger $\LOG{}{}$ secure/safe/live to mean that the protocol $\PI$ outputting the ledger $\LOG{}{}$ is secure/safe/live, respectively.
Now we are ready to define an \eaf protocol and its
notion of security. First we define formally a flexible protocol.

\begin{definition}
A {\em flexible} protocol is a pair of state machine replication protocols $(\PI_1, \PI_2)$, where $\PI_1$ and $\PI_2$ have the same input transactions $\Txs{}{}$ and output ledgers $\LOG{1}{}$ and $\LOG{2}{}$, respectively.
\end{definition}

\begin{definition}
\label{def:ebb-and-flow}
An \emph{$(\beta_1,\beta_2)$-secure \eaf protocol} $\PI$ is a flexible protocol $(\PIda,\PIfin)$ which outputs an available ledger $\LOGda{}{}$ and a finalized ledger $\LOGfin{}{}$, such that
for security parameter $\sigma \triangleq \Tconfirm$:
\begin{enumerate}
    \item \textbf{P1 - Finality:} Under $\AdvEnvParameterized{1}{\beta_1}$, $\LOGfin{}{}$ is safe at all times,
    and there exists a constant $\const$ such that $\LOGfin{}{}$ is live after time $\const (\max\{\GST, \GOT\}+\sigma)$ except with probability $\negl(\sigma)$.
    \item \textbf{P2 - Dynamic Availability:} Under $\AdvEnvParameterized{2}{\beta_2}$, $\LOGda{}{}$ is secure  except with probability $\negl(\sigma)$.
    \item \textbf{Prefix:} For any honest node $i$ and time $t$, $\LOGfin{i}{t}$ is a prefix of $\LOGda{i}{t}$.
\end{enumerate}
\end{definition}
In the above definition, the negligible function %
$\negl(\cdot)$
decays faster than all polynomials, \ie,
$\forall c>0: \exists \sigma_0: \forall \sigma > \sigma_0: \negl(\sigma) < \sigma^{-c}$.
Designing a state machine replication protocol $\PIfin$ that satisfies property \Pone is the well-studied problem of designing partially synchronous BFT protocols; the optimal resilience that can be achieved is $\beta_1 = \frac{1}{3}$. Designing a state machine replication protocol $\PIda$ that satisfies property \Ptwo is the problem of designing dynamically available protocols; the optimal resilience that can be achieved is $\beta_2 = \frac{1}{2}$. An \eaf protocol $(\PIda, \PIfin)$ has a further requirement that $\LOGfin{}{}$ should be a prefix of $\LOGda{}{}$; this requires a careful joint design of $(\PIda,\PIfin)$. We now present a construction for which we show that $\beta_1= \frac{1}{3}$ and $\beta_2 = \frac{1}{2}$ can be simultaneously achieved while respecting the prefix constraint.

\subsection{Protocol}
\label{sec:sleepy-streamlet}

\begin{figure*}
    \centering
    \begin{tikzpicture}[
            x=0.75cm,
            y=0.5cm,
        ]
        \small
        
        \tikzset{blockchain/.style={
                x=0.3cm,
                y=0.6cm,
                caption/.style = {
                    minimum width=0,
                    minimum height=0,
                },
                block/.style = {
                    minimum width=0.25cm,
                    minimum height=0.25cm,
                    draw,
                    shade,
                    top color=white,
                    bottom color=black!10,
                },
                link/.style = {
                },
                block-mining/.style = {
                    block,
                    draw=myParula01Blue,
                    text=myParula01Blue,
                    bottom color=myParula01Blue!10,
                    inner sep=0,
                },
                link-mining/.style = {
                    link,
                    draw=myParula01Blue,
                    densely dotted,
                },
                block-notarized/.style = {
                    block,
                    draw=myParula05Green!50!black,
                    bottom color=myParula05Green!50,
                },
                block-propose/.style = {
                    block,
                    draw=myParula01Blue,
                    text=myParula01Blue,
                    bottom color=myParula01Blue!10,
                    inner sep=0,
                },
                link-propose/.style = {
                    link,
                    draw=myParula01Blue,
                    densely dotted,
                },
                accepted/.style = {
                    draw=red,
                    ultra thick,
                },
                vote/.style = {
                    -latex,
                    draw=myParula01Blue,
                },
            }
        }
        
        \node (box_lc) at (-5,0) [draw,minimum width=6cm,minimum height=4cm,align=center] {%
            $\PIlc$: Permissioned-Longest-Chain \\[1em]
            \begin{tikzpicture}[blockchain,baseline=(baseline)]
                \coordinate (baseline) at (0,0.5);
                
                \node (caption) at (0,0.5) [caption] {Propose \& Vote:\\Lottery};
            
                \coordinate (genesis) at (0,-0.5);
                \node (b11) at (0,-1) [block] {};
                \node (b21) at (-1,-2) [block] {};
                \node (b22) at (+1,-2) [block] {};
                \node (b31) at (-1,-3) [block-mining] {\scriptsize\textbf{?}};
                \node (b32) at (+1,-3) [block-mining] {\scriptsize\textbf{?}};
                
                \draw [link] (b11) -- (genesis);
                \draw [link] (b21) -- (b11);
                \draw [link] (b22) -- (b11);
                \draw [link-mining] (b31) -- (b21);
                \draw [link-mining] (b32) -- (b22);
                
            \end{tikzpicture}
            \begin{tikzpicture}[x=0.25cm,y=0.25cm,baseline=(baseline)]
                \coordinate (baseline) at (0,6);
                
                \draw [draw=none,fill=black!10] (-0.5,1) -- (0,1) -- (0,2) -- (1,0) -- (0,-2) -- (0,-1) -- (-0.5,-1) -- cycle;
            \end{tikzpicture}
            \begin{tikzpicture}[blockchain,baseline=(baseline)]
                \coordinate (baseline) at (0,0.5);
                
                \node (caption) at (0,0.5) [caption] {Confirm:\\$T$-deep LC};
            
                \coordinate (genesis) at (0,-0.5);
                \node (b11) at (0,-1) [block,accepted] {};
                \node (b21) at (-1,-2) [block,accepted] {};
                \node (b22) at (+1,-2) [block] {};
                \node (b31) at (-1,-3) [block] {};
                
                \draw [link,accepted] (b11) -- (genesis);
                \draw [link,accepted] (b21) -- (b11);
                \draw [link] (b22) -- (b11);
                \draw [link] (b31) -- (b21);
                
                \draw [red,thick,shorten <=-0.2em,shorten >=-0.2em] (b31.north west) -- (b31.south east);
                \draw [red,thick,shorten <=-0.2em,shorten >=-0.2em] (b31.south west) -- (b31.north east);
                
            \end{tikzpicture}
        };
        
        \begin{scope}[spy using outlines={circle,black,densely dotted,magnification=3,size=1.3cm,connect spies}]
            \node (box_bft) at (+5,0) [draw,minimum width=6cm,minimum height=4cm,align=center] {%
                $\PIbft$: Psync-BFT (Streamlet)\\[0.75em]
                \begin{tikzpicture}[blockchain,baseline=(baseline)]
                    \coordinate (baseline) at (0,-0.5);
                    
                    \node (caption) at (0,0) [caption] {Propose:};
                
                    \coordinate (genesis) at (0,-0.5);
                    \node (b11) at (0,-1) [block-notarized] {};
                    \node (b21) at (-1,-2) [block-notarized] {};
                    \node (b22) at (+1,-2) [block] {};
                    \node (b31) at (-1,-3) [block-notarized] {};
                    \node (b41) at (-1,-4) [block-propose] {};
                    
                    \draw [link] (b11) -- (genesis);
                    \draw [link] (b21) -- (b11);
                    \draw [link] (b22) -- (b11);
                    \draw [link] (b31) -- (b21);
                    \draw [link-propose] (b41) -- (b31);
                    
                    \coordinate (refintrospectbftblock) at (-1,4);
                    
                \end{tikzpicture}
                \begin{tikzpicture}[x=0.25cm,y=0.25cm,baseline=(baseline)]
                    \coordinate (baseline) at (0,5);
                    
                    \draw [draw=none,fill=black!10] (-0.5,1) -- (0,1) -- (0,2) -- (1,0) -- (0,-2) -- (0,-1) -- (-0.5,-1) -- cycle;
                \end{tikzpicture}
                \begin{tikzpicture}[blockchain,baseline=(baseline)]
                    \coordinate (baseline) at (0,-0.5);
                    
                    \node (caption) at (0,0) [caption] {Vote:};
                
                    \coordinate (genesis) at (0,-0.5);
                    \node (b11) at (0,-1) [block-notarized] {};
                    \node (b21) at (-1,-2) [block-notarized] {};
                    \node (b22) at (+1,-2) [block] {};
                    \node (b31) at (-1,-3) [block-notarized] {};
                    \node (b41) at (-1,-4) [block] {};
                    
                    \draw [link] (b11) -- (genesis);
                    \draw [link] (b21) -- (b11);
                    \draw [link] (b22) -- (b11);
                    \draw [link] (b31) -- (b21);
                    \draw [link] (b41) -- (b31);
                    
                    \draw [vote] ($(b41.north west)-(1,0.5)$) -- (b41);
                    \draw [vote] ($(b41.west)-(1,0)$) -- (b41);
                    \draw [vote] ($(b41.south west)-(1,-0.5)$) -- (b41);
                    \draw [vote] ($(b41.north east)+(1,-0.5)$) -- (b41);
                    \draw [vote] ($(b41.east)+(1,0)$) -- (b41);
                    \draw [vote] ($(b41.south east)+(1,+0.5)$) -- (b41);
                    
                \end{tikzpicture}
                \hspace{0.1em}
                \begin{tikzpicture}[x=0.25cm,y=0.25cm,baseline=(baseline)]
                    \coordinate (baseline) at (0,5);
                    
                    \draw [draw=none,fill=black!10] (-0.5,1) -- (0,1) -- (0,2) -- (1,0) -- (0,-2) -- (0,-1) -- (-0.5,-1) -- cycle;
                \end{tikzpicture}
                \begin{tikzpicture}[blockchain,baseline=(baseline)]
                    \coordinate (baseline) at (0,-0.5);
                    
                    \node (caption) at (0,0) [caption] {Finalize:};
                
                    \coordinate (genesis) at (0,-0.5);
                    \node (b11) at (0,-1) [block-notarized,accepted] {};
                    \node (b21) at (-1,-2) [block-notarized,accepted] {};
                    \node (b22) at (+1,-2) [block] {};
                    \node (b31) at (-1,-3) [block-notarized,accepted] {};
                    \node (b41) at (-1,-4) [block-notarized] {};
                    
                    \draw [link,accepted] (b11) -- (genesis);
                    \draw [link,accepted] (b21) -- (b11);
                    \draw [link] (b22) -- (b11);
                    \draw [link,accepted] (b31) -- (b21);
                    \draw [link] (b41) -- (b31);
                    
                    \draw [red,thick,shorten <=-0.2em,shorten >=-0.2em] (b41.north west) -- (b41.south east);
                    \draw [red,thick,shorten <=-0.2em,shorten >=-0.2em] (b41.south west) -- (b41.north east);
                    
                \end{tikzpicture}
            };
            
            \spy on (2.05,-3.0) in node [fill=white] at (0.75,-5);
        \end{scope}
        
        \node (chain_lc) at (-5,-6) [align=center] {$\Chlc{i}{t}$};
        \node (chain_bft) at (5,-6) [align=center] {$\Chbft{i}{t}$};
        
        \node (txs) at (-5,6) [align=center] {Mempool $\Txs{i}{t}$};
        
        \node (ledger_da) at (-5,-8.25) [align=center] {$\LOGda{i}{t} \triangleq \sanitize(\flatten(\Chbft{i}{t}) \| \Chlc{i}{t})$};
        \node (ledger_fin) at (5,-8.25) [align=center] {$\LOGfin{i}{t} \triangleq \sanitize(\flatten(\Chbft{i}{t}))$};
        
        \draw[-Latex] (txs) -- (box_lc) node [midway,left] {Transactions};
        \draw[-Latex] (box_lc) -- (chain_lc) node [midway,left] {Confirmed chain};
        \draw[-Latex] (chain_lc) -- (-0.5,-6) -- (0.5,5.5) -- (5,5.5) -- (box_bft) node [midway,left] {Chain};
        \draw[-Latex] (box_bft) -- (chain_bft) node [midway,right] {Final chain of chains};
        
        \draw[-Latex,dashed] (0,0) -- (box_bft);
        
        \draw[-Latex] (chain_lc) -- (ledger_da);
        \draw[-Latex] (chain_bft) -- (ledger_da);
        \draw[-Latex] (chain_bft) -- (ledger_fin);

        \node (network) at (0,7.25) {Network};
        \draw [Latex-Latex,double] (network) -| (-2,4);
        \draw [Latex-Latex,double] (network) -| (8,4);
        \draw [Latex-Latex,double] (network) -| (txs);
        
        \node at (0.75,-5) {%
            \scalebox{0.45}{%
                \begin{tikzpicture}[blockchain]
                    \coordinate (genesis) at (0,-0.5);
                    \node (b11) at (0,-1) [block,accepted] {};
                    \node (b21) at (-1,-2) [block,accepted] {};
                    \draw [link,accepted] (b11) -- (genesis);
                    \draw [link,accepted] (b21) -- (b11);
                \end{tikzpicture}
            }
        };
        \draw [-Latex,red,dashed] (-2.8,-2.5) -- (0.5,-5);

    \end{tikzpicture}%
    \vspace{-0.75em}%
    \caption[]{%
        Example \sac protocol (\cf Figure~\ref{fig:protocol-overview-eaf})
        where $\PIlc$ is instantiated with permissioned longest chain
        and $\PIbft$ instantiated with Streamlet,
        as viewed by node $i$ at time $t$.
        Transactions are held in
        mempool $\Txs{i}{t}$. Batched into blocks,
        they are ordered by $\PIlc$
        which outputs a chain $\Chlc{i}{t}$
        (representing $\LOGlc{}{}$)
        of confirmed transactions.
        Snapshots of $\Chlc{}{}$
        (which themselves are chains,
        \cf the magnifying glass)
        are input to and ordered by $\PIbft$
        which outputs a chain $\Chbft{i}{t}$
        (representing $\LOGbft{}{}$)
        of final snapshots.
        In addition, $\Chlc{i}{t}$ is used
        as side information in $\PIbft$
        to boycott the finalization of invalid snapshots
        (dashed arrow).
        Finally, $\Chbft{i}{t}$ is flattened
        and
        sanitized
        to obtain the finalized ledger $\LOGfin{i}{t}$,
        which is prepended to $\Chlc{i}{t}$ and sanitized
        to form the available ledger $\LOGda{}{}$
        (\cf Figure~\ref{fig:ledger-extraction-details}).
    }
    \label{fig:protocol-overview}
\end{figure*}

In this section, we give an example of \theconstruction,
$\PItheexample{Sleepy}{Streamlet}$,
where we instantiate $\PIlc$ with
a permissioned longest-chain protocol %
and $\PIbft$ with a variant of (partially synchronous) Streamlet \cite{streamlet}.
Note that all of the longest chain protocols such as \cite{sleepy,snowwhite, kiayias2017ouroboros,david2018ouroboros, badertscher2018ouroboros,posat}
are suited to instantiate $\PIlc$.
For concreteness, we will follow Sleepy \cite{sleepy} when we get to details.
Streamlet \cite{streamlet} is the latest representative
of a line of works \cite{pbft,tendermint,sbft,yin2018hotstuff}
striving to
simplify and speed up
BFT consensus.
Due to its remarkable simplicity, Streamlet
is well-suited to illustrate our approach.
For application requirements,
other BFT protocols might be better suited.
We demonstrate
in Section~\ref{sec:other-bft-protocols}
that our technique readily extends to other BFT protocols
such as HotStuff \cite{yin2018hotstuff} and PBFT \cite{pbft}.

Before we delve into the details of our construction,
we review the basic mechanics of the constituent protocols
$\PIlc$ and $\PIbft$
(illustrated in
the two boxes of Figure~\ref{fig:protocol-overview}).
In permissioned longest chain protocols
a cryptographic lottery rate-limits
the production of new blocks
(\tikz[baseline=-3pt]{
    \node at (0,0) [
        minimum width=0.25cm,
        minimum height=0.25cm,
        draw,
        shade,
        top color=white,
        bottom color=black!10,
        draw=myParula01Blue,
        text=myParula01Blue,
        bottom color=myParula01Blue!10,
        inner sep=0,
    ] {\scriptsize\textbf{?}};
}).
Honest block proposers
extend the longest chain (and thus vote for it),
and blocks of a certain depth on the longest chain are confirmed
(\tikz[baseline=-3pt]{
    \node at (0,0) [
        minimum width=0.25cm,
        minimum height=0.25cm,
        draw,
        shade,
        top color=white,
        bottom color=black!10,
        draw=red,
        ultra thick,
    ] {};
}).
Streamlet proceeds in epochs of fixed duration,
each of which is associated with a pseudo-randomly chosen leader.
At the beginning of each epoch,
the leader proposes a new block
(\tikz[baseline=-3pt]{
    \node at (0,0) [
        minimum width=0.25cm,
        minimum height=0.25cm,
        draw,
        shade,
        top color=white,
        bottom color=black!10,
        draw=myParula01Blue,
        text=myParula01Blue,
        bottom color=myParula01Blue!10,
        inner sep=0,
    ] {};
})
extending the longest chain of notarized blocks
(\tikz[baseline=-3pt]{
    \node at (0,0) [
        minimum width=0.25cm,
        minimum height=0.25cm,
        draw,
        shade,
        top color=white,
        bottom color=black!10,
        draw=myParula05Green!50!black,
        bottom color=myParula05Green!50,
    ] {};
}).
Then, all nodes vote %
(\tikz[baseline=-3pt]{
    \node (b41) at (0,0) [
        minimum width=0.25cm,
        minimum height=0.25cm,
        draw,
        shade,
        top color=white,
        bottom color=black!10,
    ] {};
    \draw [x=0.25cm,y=0.5cm,-latex,draw=myParula01Blue] ($(b41.north west)-(1,0.5)$) -- (b41);
    \draw [x=0.25cm,y=0.5cm,-latex,draw=myParula01Blue] ($(b41.west)-(1,0)$) -- (b41);
    \draw [x=0.25cm,y=0.5cm,-latex,draw=myParula01Blue] ($(b41.south west)-(1,-0.5)$) -- (b41);
    \draw [x=0.25cm,y=0.5cm,-latex,draw=myParula01Blue] ($(b41.north east)+(1,-0.5)$) -- (b41);
    \draw [x=0.25cm,y=0.5cm,-latex,draw=myParula01Blue] ($(b41.east)+(1,0)$) -- (b41);
    \draw [x=0.25cm,y=0.5cm,-latex,draw=myParula01Blue] ($(b41.south east)+(1,+0.5)$) -- (b41);
}),
and the block becomes notarized if at least two-thirds of the nodes have voted for it.
Out of three adjacent notarized blocks from consecutive epochs,
the middle one gets finalized
(\tikz[baseline=-3pt]{
    \node at (0,0) [
        minimum width=0.25cm,
        minimum height=0.25cm,
        draw,
        shade,
        top color=white,
        bottom color=black!10,
        draw=myParula05Green!50!black,
        bottom color=myParula05Green!50,
        draw=red,
        ultra thick,
    ] {};
})
along with its prefix.

For the example construction, we follow the
blueprint of
Section~\ref{sec:abstract-eaf-construction}
but, in line with the protocols adopted for $\PIlc$ and $\PIbft$,
choose blockchains as a more suitable representation for
ledgers.
The above instantiation leads from
the high-level Figure~\ref{fig:protocol-overview-eaf}
to the concrete Figure~\ref{fig:protocol-overview}
which illustrates the overall protocol as viewed by node $i$ at time $t$.

Transactions are received from the environment and held in the
\emph{mempool} $\Txs{i}{t}$.
Batched into \emph{blocks}, they are ordered by $\PIlc$
which outputs a block\emph{chain} $\Chlc{i}{t}$
(comprised of \emph{\blockslc} and representing the ledger $\LOGlc{}{}$ in Figure~\ref{fig:protocol-overview-eaf})
of transactions considered \emph{confirmed}.
Snapshots of $\Chlc{}{}$
(which themselves are chains)
are input to and ordered by $\PIbft$
which outputs a blockchain $\Chbft{i}{t}$
(comprised of \emph{\blocksbft} and representing the ledger $\LOGbft{}{}$ in Figure~\ref{fig:protocol-overview-eaf})
of snapshots considered \emph{final}.
In addition, $\Chlc{i}{t}$ is used
as side information in $\PIbft$
to boycott the finalization of invalid snapshots proposed by the adversary.
Finally, $\Chbft{i}{t}$ is \emph{flattened} (\ie, all snapshots
are concatenated as ordered)
and
\emph{sanitized} (\ie, only the first
valid
occurrence of a
transaction
remains)
to obtain the finalized ledger $\LOGfin{i}{t}$,
which is prepended to $\Chlc{i}{t}$ and sanitized
to form the available ledger $\LOGda{}{}$
(see Section~\ref{sec:ledger-extraction-details}).%
\footnote{%
Formally, $\LOGda{}{}$ and $\LOGfin{}{}$ are now
represented as sequences of \blockslc.
Proper transactions ledgers are readily obtained by
concatenating the transactions contained in the blocks
and removing duplicate and invalid transactions (\emph{sanitization}).%
}

In the following, we provide more explanation
for the following three details,
\begin{enumerate*}
    \item how snapshots are represented efficiently,
    \item how Streamlet is modified
        to prevent that an adversary can input an ostensible snapshot
        which is really unconfirmed (this would break safety),
        and
    \item how the transaction ledgers are extracted from
        the blockchains $\Chlc{i}{t}$ and $\Chbft{i}{t}$.
\end{enumerate*}

\subsubsection{Efficient representation of snapshots}

We use (variants of) the symbols `$b$' and `$B$' to refer to
blocks in the blockchains $\Chlc{i}{t}$ and $\Chbft{i}{t}$
output by
$\PIlc$
and
$\PIbft$,
respectively.
An \blocklc $b$ contains as payload transactions denoted as `$b.\PayloadTxs$'.
Note that due to the blockchain structure, a single block
uniquely identifies a whole chain of blocks, namely that of its
ancestors all the way back to the genesis block.
A snapshot of a blockchain can thus be represented efficiently by
pointing to the block at the tip of the chain.
Thus, instead of copying a whole chain of \blockslc
into each \blockbft,
a \blockbft $B$ contains as payload only a \emph{reference},
denoted by `$B.\PayloadCh$',
to an \blocklc representing the snapshot.

For
ledgers and blockchains,
`$\preceq$' is canonically defined as the `is a prefix of' relation.
As blocks identify chains, the definition of `$\preceq$'
naturally carries over: for two blocks $\flat$ and $\flat'$,
$\flat \preceq \flat'$ iff the chain identified by $\flat$
is a prefix of the chain identified by $\flat'$.
The depth of a block
is
the length of the chain it identifies,
excluding the genesis block.

\subsubsection{Modification of Streamlet}

\begin{algorithm}
    \caption{\footnotesize Pseudocode of example \eaf construction with Sleepy as $\PIlc$ and Streamlet as $\PIbft$}
    \label{algo:pseudocode-sleepy-streamlet}
    \begin{algorithmic}[1]
        \footnotesize
        \Procedure{LcSlot}{$t$}
            \If{$\operatorname{SleepyIsWinningLotteryTicket}(i, t)$}
                \State $b^* \gets \Call{SleepyTipLC}{\null}$
                \State $b \gets \operatorname{SleepyNewBlock}(b^*, t, i, \Txs{}{t})$
                \State \Call{Broadcast}{$b$}
            \EndIf
        \EndProcedure
        \Procedure{BftSlot}{$t$}
            \State $e \gets \operatorname{StreamletEpoch}(t)$
            \If{$\operatorname{StreamletIsStartOfProposePhase}(t)$}
                \If{$\operatorname{StreamletEpochLeader}(e) = i$}
                    \State $B^* \gets \Call{StreamletTipNotarizedLC}{\null}$
                    \State $B \gets \operatorname{StreamletNewBlock}(B^*, e, \Chlc{}{t})$
                    \State $P \gets \operatorname{StreamletNewProposal}(B, i)$
                    \State \Call{Broadcast}{$B, P$}
                \EndIf
            \ElsIf{$\operatorname{StreamletIsStartOfVotePhase}(t)$}
                \State $P \gets \Call{StreamletFirstValidProposal}{e}$
                \If{\textcolor{red}{$P.B.\PayloadCh \preceq \Chlc{}{t}$}}
                        \label{algo:pseudocode-sleepy-streamlet-condition}
                    \State $V \gets \operatorname{StreamletNewVote}(P.B, i)$
                    \State \Call{Broadcast}{$V$}
                \EndIf
            \EndIf
        \EndProcedure
        \Procedure{Main}{\null}
            \For{time slot $t \gets 1, 2, 3, ...$}
                \State \Call{ProcessIncomingNetworkMessages}{\null}
                    \label{algo:pseudocode-sleepy-streamlet-process-msgs}
                \State \Call{EchoIncomingNetworkMessages}{\null}
                    \label{algo:pseudocode-sleepy-streamlet-echo-msgs}
                \State \Call{LcSlot}{$t$}
                \State $\Chlc{}{t} \gets \Call{LcConfirmedChain}{\null}$
                \State \Call{BftSlot}{$t$}
                \State $\Chbft{}{t} \gets \Call{BftFinalChain}{\null}$
            \EndFor
        \EndProcedure
    \end{algorithmic}
\end{algorithm}

With the payload
of Streamlet being snapshots,
honest epoch leaders are instructed to, when they propose a block,
take a snapshot of $\Chlc{i}{t}$ and include a reference to its tip
as payload in the new \blockbft.
Furthermore, Streamlet needs to be modified to ensure that
an adversary cannot input an ostensible snapshot
which is not really entirely confirmed.
To this end, the voting rule of Streamlet is extended by the following condition:
An honest node only votes for a proposed \blockbft $B$ if it views
$B.\PayloadCh$ as confirmed.
In effect,
side information about $\PIlc$ is used in $\PIbft$
to prevent the finalization of invalid snapshots
proposed by the adversary.
Pseudocode of the overall protocol as executed on node $i$
is found in Algorithm~\ref{algo:pseudocode-sleepy-streamlet}.
Proper functions of only their inputs
and procedures that access global state
are denoted as `$\operatorname{Function}(...)$'
and `$\Call{Procedure}{...}$', respectively.
Incoming network messages (new blocks, proposals and votes) are processed, and
the global state is adjusted accordingly, in line~\ref{algo:pseudocode-sleepy-streamlet-process-msgs}.
Honest nodes echo messages they receive,
see line~\ref{algo:pseudocode-sleepy-streamlet-echo-msgs}.
As a result,
if an honest node observes a message at time $t$ then
all honest nodes will have observed the message
by time $\max(\GST, t + \Delta)$.
The additional constraint in the voting rule
with respect to `vanilla'
Streamlet is highlighted \textcolor{red}{red}
(line~\ref{algo:pseudocode-sleepy-streamlet-condition}).
Note that Sleepy is applied unaltered and the modification required
for Streamlet is minor.
The same is true when instantiating
the sub-protocol $\PIbft$
with other
partially synchronous BFT protocols
such as HotStuff \cite{yin2018hotstuff} or PBFT \cite{pbft}, detailed
in Section~\ref{sec:other-bft-protocols}.

\subsubsection{Ledger extraction}
\label{sec:ledger-extraction-details}

\begin{figure}
    \centering
    \begin{tikzpicture}[
        x=1.8cm,
        y=1.2cm,
    ]
        \footnotesize
        
        \tikzset{
            _block/.style = {
                draw,
                shade,
                top color=white,
                bottom color=black!10,
            },
            largeblock/.style = {
                _block,
                minimum width=1cm,
                minimum height=1cm,
                draw,
                draw=myParula05Green!50!black,
                bottom color=myParula05Green!50,
            },
            largelink/.style = {
                draw,
            },
            smallblock/.style = {
                _block,
                minimum width=0.25cm,
                minimum height=0.28cm,
                draw,
                inner sep=0,
                text=black,
            },
            smalllink/.style = {
                draw,
            },
        }

        \coordinate (blgen) at (0,0.60);
        \node (bl1) at (0,0) [largeblock] {};
        \node (bl2) at (0,-1) [largeblock] {};
        \node (bl3) at (0,-2) [largeblock] {};
        
        \draw [largelink] (bl3) -- (bl2);
        \draw [largelink] (bl2) -- (bl1);
        \draw [largelink] (bl1) -- (blgen);

        \coordinate (bsgen) at (0,-2.75);
        \node (bs1) at (0,-3) [smallblock] {1};
        \node (bs2) at (0,-3.33) [smallblock] {2};
        \node (bs3) at (0,-3.66) [smallblock] {3};
        
        \draw [largelink] (bs3) -- (bs2);
        \draw [largelink] (bs2) -- (bs1);
        \draw [largelink] (bs1) -- (bsgen);

        \node at (-0.6,0) {\small $\Chbft{i}{t}$:};
        \node at (-0.6,-3) {\small $\Chlc{i}{t}$:};

        \begin{scope}[yshift=0.0cm]
            \coordinate (bsgen) at (0,0.2);
            \node (bs1) at (0,0) [smallblock] {1};
            
            \draw [largelink] (bs1) -- (bsgen);
        \end{scope}
        
        \begin{scope}[yshift=-1.05cm]
            \coordinate (bsgen) at (0,0.2);
            \node (bs1) at (0,0) [smallblock] {1};
            \node (bs2) at (0,-0.33) [smallblock] {2};
            
            \draw [largelink] (bs2) -- (bs1);
            \draw [largelink] (bs1) -- (bsgen);
        \end{scope}
        
        \begin{scope}[yshift=-2.25cm]
            \coordinate (bsgen) at (0,0.2);
            \node (bs1) at (0,0) [smallblock] {1};
            \node (bs2) at (0,-0.33) [smallblock] {2};
            
            \draw [largelink] (bs2) -- (bs1);
            \draw [largelink] (bs1) -- (bsgen);
        \end{scope}
        
        \draw [-Latex,shorten <=0.5em,shorten >=0.85em] (bl1) -- ++(1,0);
        \draw [-Latex,shorten <=0.5em,shorten >=0.85em] (bl2) -- ++(1,0) node [midway,xshift=-4pt,fill=white,rotate=90] {\scriptsize\emph{Flatten}};
        \draw [-Latex,shorten <=0.5em,shorten >=0.85em] (bl3) -- ++(1,0);
        
        \begin{scope}[yshift=0.0cm,xshift=1.8cm]
            \coordinate (bsgen) at (0,0.2);
            \node (bs1) at (0,0) [smallblock] {1};
        \end{scope}
        
        \begin{scope}[yshift=-1.05cm,xshift=1.8cm]
            \node (bs1) at (0,0) [smallblock] {1};
            \node (bs2) at (0,-0.33) [smallblock] {2};
            
            \draw [red,thick,shorten <=-0.2em,shorten >=-0.2em] (bs1.north west) -- (bs1.south east);
            \draw [red,thick,shorten <=-0.2em,shorten >=-0.2em] (bs1.south west) -- (bs1.north east);
        \end{scope}
        
        \begin{scope}[yshift=-2.25cm,xshift=1.8cm]
            \node (bs1) at (0,0) [smallblock] {1};
            \node (bs2) at (0,-0.33) [smallblock] {2};
            
            \draw [red,thick,shorten <=-0.2em,shorten >=-0.2em] (bs1.north west) -- (bs1.south east);
            \draw [red,thick,shorten <=-0.2em,shorten >=-0.2em] (bs1.south west) -- (bs1.north east);
            
            \draw [red,thick,shorten <=-0.2em,shorten >=-0.2em] (bs2.north west) -- (bs2.south east);
            \draw [red,thick,shorten <=-0.2em,shorten >=-0.2em] (bs2.south west) -- (bs2.north east);
        \end{scope}

        \draw [decorate,decoration={brace,amplitude=5pt,aspect=0.85}]
        ($(bl1.north east) + (0.9,0)$) -- ($(bl3.south east) + (0.9,0)$);
        
        \draw [-Latex,shorten >=0.85em] (1.35,-2) -- (1.75,-2) node [midway,xshift=-4pt,yshift=2.5em,rotate=90] {\scriptsize\emph{Sanitize}}; %

        \begin{scope}[xshift=-0.4cm]
        
            \begin{scope}[yshift=-2.25cm,xshift=3.6cm]
                \node (bs1) at (0,0) [smallblock] {1};
                \node (bs2) at (0,-0.33) [smallblock] {2};
            \end{scope}
            
            \draw [-Latex,shorten >=0.85em] (2.15,-2) -- (2.5,-2);
            
            \begin{scope}[yshift=-2.25cm,xshift=4.5cm]
                \node (bs1) at (0,0) [smallblock,alias=br11] {1};
                \node (bs2) at (0,-0.33) [smallblock] {2};
            \end{scope}
            
            \begin{scope}[xshift=4.5cm]
                \node (bs1) at (0,-3) [smallblock] {1};
                \node (bs2) at (0,-3.33) [smallblock] {2};
                \node (bs3) at (0,-3.66) [smallblock,alias=br12] {3};
                
                \draw [red,thick,shorten <=-0.2em,shorten >=-0.2em] (bs1.north west) -- (bs1.south east);
                \draw [red,thick,shorten <=-0.2em,shorten >=-0.2em] (bs1.south west) -- (bs1.north east);
                
                \draw [red,thick,shorten <=-0.2em,shorten >=-0.2em] (bs2.north west) -- (bs2.south east);
                \draw [red,thick,shorten <=-0.2em,shorten >=-0.2em] (bs2.south west) -- (bs2.north east);
            \end{scope}
            
            \draw [decorate,decoration={brace,amplitude=5pt,aspect=0.78}]
            ($(br11.north east) + (0.1,0)$) -- ($(br12.south east) + (0.1,0)$);
            
            \begin{scope}[xshift=-0.4cm]
            
            \begin{scope}[xshift=6.3cm]
                \node (bs1) at (0,-3) [smallblock] {1};
                \node (bs2) at (0,-3.33) [smallblock] {2};
                \node (bs3) at (0,-3.66) [smallblock,alias=br12] {3};
            \end{scope}
            
            \node (LOGda) at (3.5,-4.5) {$\LOGda{i}{t}$};
            \draw [-Latex,dashed,shorten <=3.9em] (3.5,-2.9) -- (LOGda);
            
            \end{scope}
            
            \draw [-Latex,shorten >=0.85em,shorten <=1em] (0.2,-3.33) -- (2.5,-3.33);
            
            \draw [-Latex,shorten >=0.85em,shorten <=1em] (2.7,-3.33) -- (3.25,-3.33) node [midway,xshift=0pt,yshift=2.5em,rotate=90] {\scriptsize\emph{Sanitize}}; %
            
            \node (LOGfin) at (2,-4.5) {$\LOGfin{i}{t}$};
            \draw [-Latex,dashed,shorten <=2em] (2,-2) -- (LOGfin);
        
        \end{scope}
    
    \end{tikzpicture}%
    \vspace{-0.75em}%
    \caption[]{%
        $\Chbft{i}{t}$ is flattened and sanitized
        to obtain the finalized ledger $\LOGfin{i}{t}$,
        which is prepended to $\Chlc{i}{t}$ and sanitized
        to form the available ledger $\LOGda{}{}$.
    }
    \label{fig:ledger-extraction-details}
 \end{figure}
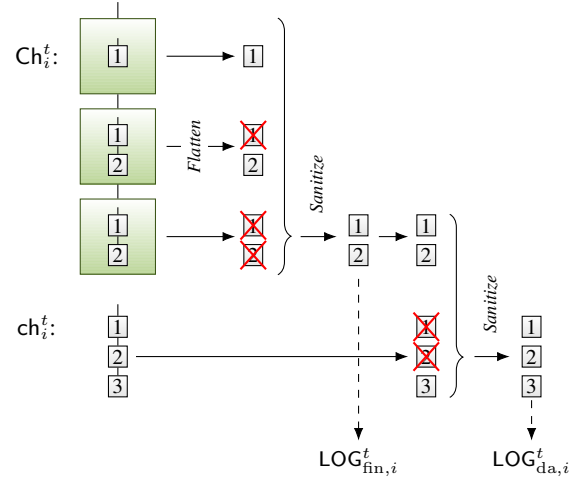

Finally,
how honest nodes compute $\LOGfin{i}{t}$ and $\LOGda{i}{t}$ from $\Chbft{i}{t}$ and $\Chlc{i}{t}$
is illustrated in Figure~\ref{fig:ledger-extraction-details}.
Recall that $\Chbft{i}{t}$ is an ordering of snapshots, \ie,
a chain of chains of \blockslc.
First, $\Chbft{i}{t}$ is flattened,
\ie, the chains of blocks are concatenated as ordered
to arrive at a single sequence of \blockslc.
Then, all but the first occurrence of each block are removed (sanitized)
to arrive at the finalized ledger $\LOGfin{i}{t}$ of \blockslc.
To form the available ledger $\LOGda{i}{t}$,
$\Chlc{i}{t}$, which is a sequence of \blockslc,
is appended to $\LOGfin{i}{t}$ and the result again sanitized.

\subsection{Analysis}
\label{sec:analysis}

\begin{figure}
    \centering
    \begin{tikzpicture}[
        x=1.5cm,
        y=1.8cm,
        _box/.style = {
            minimum width=2.5cm,
            minimum height=1.5cm,
            align=center,
            inner sep=0,
        },
        statement/.style = {
            _box,
            text depth=0.4cm,
            draw,
        },
        label/.style = {
            minimum height=0.4cm,
            minimum width=0.5cm,
            align=center,
            anchor=south west,
            fill=black!10,
            draw,
            inner sep=0,
        },
        reference/.style = {
            minimum height=0.4cm,
            align=center,
            anchor=south east,
            fill=black!10,
            draw,
            inner xsep=3pt,
        },
        leadsto/.style = {
            -Latex,
        },
    ]
        \footnotesize
        
        \node (ass1) at (-2,0) [statement] {Safety of\\ $\PIbft$};
        \node at (ass1.south west) [label] {1};
        \node at (ass1.south east) [reference] {Lemma~\ref{thm:streamlet-safety}};
        
        \node (ass2) at (0,0) [statement] {Liveness of\\ $\PIbft$ after\\ $\max(\GST,\GOT)$};
        \node at (ass2.south west) [label] {2};
        \node at (ass2.south east) [reference] {Lemmas~\ref{thm:streamlet-liveness-1},\ref{thm:streamlet-liveness}};
        
        \node (ass3) at (2,0) [statement] {Security of\\ $\PIlc$ after\\ $\max(\GST,\GOT)$};
        \node at (ass3.south west) [label] {3};
        \node at (ass3.south east) [reference] {Theorem~\ref{thm:pos-security-under-advp-main}};
        
        \node (con1) at (1,-1) [statement] {Liveness of\\ $\LOGfin{}{}$ after\\ $\max(\GST,\GOT)$};
        \node at (con1.south west) [label] {4};
        \node at (con1.south east) [reference] {Lemma~\ref{thm:pibft-liveness}};
        
        \node (con2) at (0,-2) [statement] {Security of\\ $\LOGfin{}{}$};
        \node at (con2.south west) [label] {5};
        \node at (con2.south east) [reference] {Theorem~\ref{thm:main-security}};
        
        \draw [leadsto] (ass2) -> (con1);
        \draw [leadsto] (ass3) -> (con1);
        \draw [leadsto] (ass1) -> (con2);
        \draw [leadsto] (con1) -> (con2);
    
    \end{tikzpicture}%
    \vspace{-0.5em}%
    \caption[]{Dependency of the security of $\LOGfin{}{}$ under $\AdvEnvOneOpt$ on the properties of $\PIlc$ and $\PIbft$. 
    Boxes represent the properties and the arrows indicate the implications of these properties.
    Theorems and lemmas used to validate the properties are displayed at the bottom right corner of each box.}
    \label{fig:proof-layout-1}
\end{figure}

\begin{figure}
    \centering
    \begin{tikzpicture}[
        x=1.5cm,
        y=1.8cm,
        _box/.style = {
            minimum width=2.5cm,
            minimum height=1.5cm,
            align=center,
            inner sep=0,
        },
        statement/.style = {
            _box,
            text depth=0.4cm,
            draw,
        },
        label/.style = {
            minimum height=0.4cm,
            minimum width=0.5cm,
            align=center,
            anchor=south west,
            fill=black!10,
            draw,
            inner sep=0,
        },
        reference/.style = {
            minimum height=0.4cm,
            align=center,
            anchor=south east,
            fill=black!10,
            draw,
            inner xsep=3pt,
        },
        leadsto/.style = {
            -Latex,
        },
    ]
        \footnotesize
        
        \node (ass1) at (-1,0) [statement] {Security of\\ $\PIlc$};
        \node at (ass1.south west) [label] {6};
        \node at (ass1.south east) [reference] {Given};
        
        \node (ass2) at (1,0) [statement] {Consistency of\\ 
        $\LOGfin{}{}$ with\\ 
        the output of $\PIlc$};
        \node at (ass2.south west) [label] {7};
        \node at (ass2.south east) [reference] {Lemma~\ref{thm:pibft-safe-under-2}};
        
        \node (con1) at (0,-1) [statement] {Security of\\ $\LOGda{}{}$};
        \node at (con1.south west) [label] {8};
        \node at (con1.south east) [reference] {Theorem~\ref{thm:main-security}};
        
        \draw [leadsto] (ass1) -> (con1);
        \draw [leadsto] (ass2) -> (con1);
    
    \end{tikzpicture}%
    \vspace{-0.5em}%
    \caption[]{Dependency of the security of $\LOGda{}{}$ under $\AdvEnvTwoOpt$ on the properties of $\PIlc$ and $\PIbft$. 
    Boxes represent the properties and the arrows indicate the implications of these properties.
    Theorems and lemmas used to validate the properties are displayed at the bottom right corner of each box.}
    \label{fig:proof-layout-2}
\end{figure}

In this section, we analyze the security of $\PItheexample{Sleepy}{Streamlet}$ as an \eaf protocol and show that it is optimally resilient:

\begin{theorem}
\label{thm:main-security}
$\PItheexample{Sleepy}{Streamlet}$ is a $(\frac{1}{3},\frac{1}{2})$-secure \eaf protocol.
\end{theorem}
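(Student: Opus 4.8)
The plan is to verify the three requirements of Definition~\ref{def:ebb-and-flow} separately, reusing the two off-the-shelf guarantees (safety and liveness of $\PIbft$ in partial synchrony, and security of $\PIlc$ under dynamic participation) while isolating all the genuinely new work in the way the \emph{boycott} condition (line~\ref{algo:pseudocode-sleepy-streamlet-condition}) couples the two sub-protocols. I would dispatch the \textbf{Prefix} property first, since it is the easiest: because $\LOGfin{i}{t} = \sanitize(\flatten(\Chbft{i}{t}))$ and $\LOGda{i}{t} = \sanitize(\flatten(\Chbft{i}{t}) \| \Chlc{i}{t})$, sanitizing a sequence that \emph{begins} with $\flatten(\Chbft{i}{t})$ retains the first occurrences of those \blockslc, in order, at the front, so $\LOGfin{i}{t} \preceq \LOGda{i}{t}$. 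This holds unconditionally for every honest node and time, needs no assumption on the adversary, and follows purely from the monotonicity of $\sanitize$ under prepending.

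For \textbf{P1} under $\AdvEnvParameterized{1}{1/3}$ I would split into safety and liveness, following the dependency in Figure~\ref{fig:proof-layout-1}. Safety of $\LOGfin{}{}$ reduces to safety of $\PIbft$: with $f < n/3$, Streamlet is safe in partial synchrony, so $\Chbft{}{}$ is prefix-consistent across honest views and times; as $\flatten$ and $\sanitize$ are monotone on prefix-ordered chains, $\LOGfin{}{}$ inherits this, and the boycott condition only \emph{removes} votes (it can never cause a conflicting block to be notarized), so it cannot harm safety. Liveness after $\const(\max\{\GST,\GOT\}+\sigma)$ is the substantive part. After $\max\{\GST,\GOT\}$ the network is synchronous and all nodes are awake, so $\PIlc$ is secure (adversary fraction $1/3<1/2$ among the now-all-awake nodes) and vanilla Streamlet is live. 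The point is to show the boycott condition does not stall finalization: when an honest Streamlet leader proposes $B$ carrying its own confirmed chain $B.\PayloadCh = \Chlc{i}{t}$, I would invoke safety of $\PIlc$ after $\max\{\GST,\GOT\}$ to argue $B.\PayloadCh \preceq \Chlc{j}{t'}$ in the view of every honest voter $j$ by its voting time $t'$, so no honest node boycotts an honestly proposed snapshot and Streamlet's liveness argument carries through unchanged. Chaining this with liveness of $\PIlc$ (a transaction received after stabilization enters $\Chlc{}{}$ within $\sigma = \Tconfirm$), the transaction enters a finalized snapshot, hence $\LOGfin{}{}$; accounting for the confirmation delay of $\PIlc$, the constant Streamlet finalization latency, and $\max\{\GST,\GOT\}$ yields the stated bound.

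For \textbf{P2} under $\AdvEnvParameterized{2}{1/2}$ (so $\GST=0$ and at all times at most half the awake nodes are adversarial) the crux is the consistency property of Figure~\ref{fig:proof-layout-2}: $\LOGfin{i}{t} \preceq \Chlc{i}{t}$ as sequences of \blockslc. Here $\PIlc$ is secure, so honest confirmed chains are mutually prefix-consistent. I would then argue that \emph{every} finalized snapshot was confirmed by some honest node: the awake-fraction bound forces $f \le n/2 < 2n/3$ (since the adversarial nodes are always awake, at any time $f \le \text{(awake honest)} \le n-f$), so a Streamlet notarization quorum of $2n/3$ must contain an honest vote, and an honest vote is cast only when the voter views $B.\PayloadCh$ as confirmed; safety of $\PIlc$ then makes that snapshot a prefix of every honest node's confirmed chain. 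Distinct finalized snapshots, being prefixes of a common confirmed chain, are nested, so $\flatten$ and $\sanitize$ collapse $\Chbft{i}{t}$ to a prefix of $\Chlc{i}{t}$, giving $\LOGfin{i}{t} \preceq \Chlc{i}{t}$ and hence $\LOGda{i}{t} = \sanitize(\flatten(\Chbft{i}{t}) \| \Chlc{i}{t}) = \Chlc{i}{t}$. Safety and liveness of $\LOGda{}{}$ then follow directly from the security of $\PIlc$.

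The main obstacle throughout is the coupling introduced by the boycott condition. In P1 it is the risk that the condition blocks honest finalization, resolved by showing honestly proposed snapshots are eventually seen as confirmed by all honest voters; in P2 it is the dual risk that the adversary finalizes an unconfirmed snapshot in an environment where $\PIbft$ is not even safe, resolved by the counting bound $f < 2n/3$ together with safety of $\PIlc$. I expect the $f \le n/2$ estimate under the awake-fraction constraint, and the precise latency bookkeeping behind the constant $\const$ in P1, to demand the most care.
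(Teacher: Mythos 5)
Your decomposition is the same as the paper's: Prefix by construction; P1-safety from Streamlet's quorum-intersection safety (unharmed by the extra voting condition, which only withholds votes); P1-liveness from the observation that an honest leader's snapshot is viewed as \finalda by every honest voter by voting time, so the boycott condition never blocks honest proposals; and P2 from the fact that any notarization quorum under $\AdvEnvTwoOpt$ contains an honest vote, so every finalized snapshot is a \finalda prefix of the (safe) output of $\PIlc$, whence $\LOGfin{}{}\preceq\Chlc{}{}$ and $\LOGda{}{}$ inherits the security of $\PIlc$. These correspond exactly to Lemmas~\ref{thm:streamlet-safety}--\ref{thm:pibft-safe-under-2} and Figures~\ref{fig:proof-layout-1} and \ref{fig:proof-layout-2}.

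There is, however, one genuine gap. In your P1-liveness step you assert that ``after $\max\{\GST,\GOT\}$ the network is synchronous and all nodes are awake, so $\PIlc$ is secure.'' This is precisely the statement the paper cannot take off the shelf and must prove as Theorem~\ref{thm:pos-security-under-advp-main}: Sleepy's security analysis assumes synchrony and participation from slot $0$, and before $\max\{\GST,\GOT\}$ the adversary can accumulate a long private chain (and prevent honest nodes from building any), so the pivot conditions of \cite{sleepy} fail on intervals straddling $\max\{\GST,\GOT\}$. The paper handles this by introducing $\GST$-strong pivots, which count only convergence opportunities occurring after $\max\{\GST,\GOT\}$ against adversarial slots accumulated since time $0$, and then shows such pivots occur within $O(\max\{\GST,\GOT\}+\sigma)$ time. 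This is also why the liveness bound has the form $\const\,(\max\{\GST,\GOT\}+\sigma)$ with a multiplicative constant on $\max\{\GST,\GOT\}$ -- the healing period of the longest chain is proportional to the length of the partition -- rather than the additive $\max\{\GST,\GOT\}+O(\sigma)$ that your ``latency bookkeeping'' framing suggests. Without this argument, your P1-liveness claim would fail: the adversary's pre-mined blocks can displace honest blocks from the longest chain for a duration on the order of the partition itself, during which no live snapshots reach $\PIbft$. The rest of your outline is sound and matches the paper's reasoning.
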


Observe that no \eaf protocol can tolerate a Byzantine adversary $\Adv_1(\beta_1)$ with $\beta_1 \geq \frac{1}{3}$ in a partially synchronous network.
Similarly, no \eaf protocol can tolerate a Byzantine adversary $\Adv_2(\beta_2)$ with $\beta_2 \geq \frac{1}{2}$ in a synchronous network.
Hence the security of $\PItheexample{Sleepy}{Streamlet}$ implies that it is optimally resilient.
We denote the worst-case adversary-environments as
$\AdvEnvOneOpt \triangleq \AdvEnvParameterized{1}{\frac{1}{3}}$ and $\AdvEnvTwoOpt \triangleq \AdvEnvParameterized{2}{\frac{1}{2}}$.

We now focus on the proof of Theorem \ref{thm:main-security},
which
proceeds as illustrated in Figures~\ref{fig:proof-layout-1} and \ref{fig:proof-layout-2}
and
along with proofs for the Lemmas
can be found in Appendix~\ref{sec:analysis-streamlet}.
Proof of Theorem~\ref{thm:pos-security-under-advp-main} is given in Appendix~\ref{sec:sleepy-security}.

We first show the safety and liveness (after time $\max\{\GST,\GOT\}$) of the ledger $\LOGfin{}{}$ under $\AdvEnvOneOpt$. 
Figure \ref{fig:proof-layout-1} visualizes the dependency of the security of $\LOGfin{}{}$ on the properties of the sub-protocols $\PIlc$ and $\PIbft$.
We see from Figure \ref{fig:proof-layout-1} that the safety of $\PIbft$ (box 1) implies the safety of $\LOGfin{}{}$ (box 5).
However, in Figure \ref{fig:protocol-overview}, $\Txs{}{}$ do not immediately arrive at $\PIbft$. 
They are first received by $\PIlc$ and become part of its output ledger, snapshots of which are then inputted to $\PIbft$.
Consequently, liveness of $\LOGfin{}{}$ after time $\max\{\GST,\GOT\}$ (box 4) does not only require the liveness of $\PIbft$ (box 2), but also the security $\PIlc$ after time $\max\{\GST,\GOT\}$ (box 3).

We observe via Lemmas \ref{thm:streamlet-safety}, \ref{thm:streamlet-liveness-1} and \ref{thm:streamlet-liveness} that the changes in Streamlet described by lines 13 and 19 of Algorithm \ref{algo:pseudocode-sleepy-streamlet} does not affect the validity of the safety and liveness proofs in \cite{streamlet}.
Hence, security of $\PIbft$ (boxes 1 and 2) directly follows from the security proof of Streamlet.
However, showing the security of $\PIlc$ (box 3) claimed by Theorem \ref{thm:pos-security-under-advp-main}, requires some work. 
For this purpose, we extend the concept of \emph{pivots} as defined in \cite{sleepy}, to a partially synchronous network. 
Pivots are time slots such that every honest node has the same view of the prefix of the longest chain up to the pivot. %
The original definition of pivots in \cite{sleepy} ensures the convergence of longest chains by requiring any time interval containing the pivot to have more convergence opportunities (honest slots which are sufficiently apart)  than adversarial slots.
However, this requirement fails to ensure convergence under partial synchrony as the isolated honest nodes can fail to build a blockchain before $\max\{\GST,\GOT\}$.
Hence, we define the concept of a $\GST$-strong pivot that considers only the honest slots after $\max\{\GST,\GOT\}$ within any interval around the $\GST$-strong pivot.
Although this definition makes the arrival of $\GST$-strong pivots less likely, Appendix~\ref{sec:sleepy-security} proves that $\GST$-strong pivots appear in $O(\max\{\GST,\GOT\})$ time following $\max\{\GST,\GOT\}$, thus, concluding the security of $\PIlc$ after $\max\{\GST,\GOT\}$.
Finally, Lemma \ref{thm:pibft-liveness} combines the security of $\PIlc$ and liveness of $\PIbft$ after time $\max\{\GST,\GOT\}$ to show the liveness of $\LOGfin{}{}$.

We next show the safety and liveness of the ledger $\LOGda{}{}$ under $\AdvEnvTwoOpt$. 
Figure \ref{fig:proof-layout-2} visualizes the dependency of security of $\LOGda{}{}$ on the properties of sub-protocols $\PIlc$ and $\PIbft$.

In Figure \ref{fig:protocol-overview}, $\LOGda{}{}$ is a concatenation of $\LOGfin{}{}$ with the output ledger of $\PIlc$.
Hence, although the security of $\PIlc$ (box 6) is a necessary condition for the security of $\LOGda{}{}$ (box 8), we also need the prefix $\LOGfin{}{}$ to be consistent with the output of $\PIlc$ in the view of every honest node at all times (box 7), to guarantee the safety of the whole ledger $\LOGda{}{}$.

Security of $\PIlc$ follows from the security proofs of the respective protocol used for $\PIlc$.
However, proving the consistency of $\LOGfin{}{}$ with the output of $\PIlc$ as claimed by Lemma \ref{thm:pibft-safe-under-2}, requires a careful look at the finalization rule of $\PIbft$.
As indicated by Algorithm~\ref{algo:pseudocode-sleepy-streamlet}, a snapshot of the output of $\PIlc$ becomes \finalp as part of a \blockbft only if that snapshot is seen as \finalda by at least one honest node.
However, since $\PIlc$ is safe, the fact that one honest node sees that snapshot as \finalda implies that every honest node sees the same snapshot as \finalda.
Consequently, the ledger $\LOGfin{}{}$ will be generated from the same snapshots in the view of every honest node.
Moreover, as these snapshots are \finalda prefixes of the output of $\PIlc$ and $\PIlc$ is safe, $\LOGfin{}{}$ is a prefix of the output of $\PIlc$ in the view of any honest node at all times.
Finally, since $\LOGfin{}{}$ is a prefix of $\LOGda{}{}$ by construction, the prefix property holds trivially.

\tikzset{blockchain/.style={
        x=0.3cm,
        y=0.6cm,
        node distance=0.5cm,
        block/.style = {
            minimum width=0.25cm,
            minimum height=0.25cm,
            draw,
            shade,
            top color=white,
            bottom color=black!10,
        },
        block-notarized/.style = {
            block,
            draw=myParula05Green!50!black,
            bottom color=myParula05Green!50,
        },
        link/.style = {
        },
        refsnapshot/.style = {
            {Circle[length=3pt]}-latex,
            myParula01Blue,
            bend right=30,
            shorten <=-1.5pt,
            thick,
        },
    }
}

\begin{figure}%
    \centering%
    \vspace{-1.25em}%
    \subfloat[$\PIlc$ and $\PIbft$ safe\label{fig:examples-one-protocol-unsafe-allsafe}]{%
        \begin{minipage}[t]{0.3\linewidth}\centering%
        \begin{tikzpicture}%
            \small
            
            \begin{scope}[blockchain,xshift=-0.65cm]
            
                \node (label) at (0,0.25) {LC};
            
                \coordinate (b0) at (0,-0.5);
                \node (b11) at (0,-1) [block] {};
                \node (b21) at (0,-2) [block] {};
                \node (b31) at (0,-3) [block] {};
                \node (b41) at (0,-4) [block] {};
                \node (b51) at (0,-5) [block] {};
                \node (b61) at (0,-6) [block] {};

                \draw [link] (b11) -- (b0);
                \draw [link] (b21) -- (b11);
                \draw [link] (b31) -- (b21);
                \draw [link] (b41) -- (b31);
                \draw [link] (b51) -- (b41);
                \draw [link] (b61) -- (b51);
                
            \end{scope}
            
            \begin{scope}[blockchain,xshift=+0.65cm]
            
                \node (label) at (0,0.25) {BFT};
                
                \coordinate (B0) at (0,-0.5);
                \node (B11) at (0,-1) [block-notarized] {};
                \node (B21) at (0,-2) [block-notarized] {};
                \node (B31) at (0,-3) [block-notarized] {};
                \node (B41) at (0,-4) [block-notarized] {};
                \node (B51) at (0,-5) [block-notarized] {};

                \draw [link] (B11) -- (B0);
                \draw [link] (B21) -- (B11);
                \draw [link] (B31) -- (B21);
                \draw [link] (B41) -- (B31);
                \draw [link] (B51) -- (B41);
                
            \end{scope}
            
            \begin{scope}[blockchain]
            
                \draw [refsnapshot] (B11.center) to (b11);
                \draw [refsnapshot] (B21.center) to (b21);
                \draw [refsnapshot] (B31.center) to (b31);
                \draw [refsnapshot] (B41.center) to (b41);
                \draw [refsnapshot] (B51.center) to (b51);
                
            \end{scope}
            
        \end{tikzpicture}%
        \end{minipage}%
    }%
    \hfill%
    \subfloat[$\PIlc$ unsafe\label{fig:examples-one-protocol-unsafe-PIlc}]{%
        \begin{minipage}[t]{0.33\linewidth}\centering%
        \begin{tikzpicture}
            \small
            
            \begin{scope}[blockchain,xshift=-0.75cm]
            
                \node (label) at (0,0.25) {LC};
            
                \coordinate (b0) at (0,-0.5);
                \node (b11) at (0,-1) [block] {};
                \node (b21) at (-1,-2) [block] {};
                \node (b22) at (+1,-2) [block] {};
                \node (b31) at (-1,-3) [block] {};
                \node (b32) at (+1,-3) [block] {};
                \node (b41) at (-1,-4) [block] {};
                \node (b42) at (+1,-4) [block] {};
                \node (b52) at (+1,-5) [block] {};
                \node (b62) at (+1,-6) [block] {};

                \draw [link] (b11) -- (b0);
                \draw [link] (b21) -- (b11);
                \draw [link] (b22) -- (b11);
                \draw [link] (b31) -- (b21);
                \draw [link] (b32) -- (b22);
                \draw [link] (b41) -- (b31);
                \draw [link] (b42) -- (b32);
                \draw [link] (b52) -- (b42);
                \draw [link] (b62) -- (b52);
                
            \end{scope}
            
            \begin{scope}[blockchain,xshift=+0.75cm]
            
                \node (label) at (0,0.25) {BFT};
                
                \coordinate (B0) at (0,-0.5);
                \node (B11) at (0,-1) [block-notarized] {};
                \node (B21) at (0,-2) [block-notarized] {};
                \node (B31) at (0,-3) [block-notarized] {};
                \node (B41) at (0,-4) [block-notarized] {};
                \node (B51) at (0,-5) [block-notarized] {};

                \draw [link] (B11) -- (B0);
                \draw [link] (B21) -- (B11);
                \draw [link] (B31) -- (B21);
                \draw [link] (B41) -- (B31);
                \draw [link] (B51) -- (B41);
                
            \end{scope}
            
            \begin{scope}[blockchain]
            
                \draw [refsnapshot] (B11.center) to (b11);
                \draw [refsnapshot] (B21.center) to (b22);
                \draw [refsnapshot] (B31.center) to (b31);
                \draw [refsnapshot] (B41.center) to (b42);
                \draw [refsnapshot] (B51.center) to (b52);
                
            \end{scope}
            
        \end{tikzpicture}%
        \end{minipage}%
    }%
    \hfill%
    \subfloat[$\PIbft$ unsafe\label{fig:examples-one-protocol-unsafe-PIbft}]{%
        \begin{minipage}[t]{0.33\linewidth}\centering%
        \begin{tikzpicture}
            \small
            
            \begin{scope}[blockchain,xshift=-0.75cm]
            
                \node (label) at (0,0.25) {LC};
            
                \coordinate (b0) at (0,-0.5);
                \node (b11) at (0,-1) [block] {};
                \node (b21) at (0,-2) [block] {};
                \node (b31) at (0,-3) [block] {};
                \node (b41) at (0,-4) [block] {};
                \node (b51) at (0,-5) [block] {};
                \node (b61) at (0,-6) [block] {};

                \draw [link] (b11) -- (b0);
                \draw [link] (b21) -- (b11);
                \draw [link] (b31) -- (b21);
                \draw [link] (b41) -- (b31);
                \draw [link] (b51) -- (b41);
                \draw [link] (b61) -- (b51);
                
            \end{scope}
            
            \begin{scope}[blockchain,xshift=+0.75cm]
            
                \node (label) at (0,0.25) {BFT};
                
                \coordinate (B0) at (0,-0.5);
                \node (B11) at (0,-1) [block-notarized] {};
                \node (B21) at (-1,-2) [block-notarized] {};
                \node (B22) at (+1,-2) [block-notarized] {};
                \node (B31) at (-1,-3) [block-notarized] {};
                \node (B32) at (+1,-3) [block-notarized] {};
                \node (B41) at (-1,-4) [block-notarized] {};
                \node (B51) at (-1,-5) [block-notarized] {};

                \draw [link] (B11) -- (B0);
                \draw [link] (B21) -- (B11);
                \draw [link] (B22) -- (B11);
                \draw [link] (B31) -- (B21);
                \draw [link] (B32) -- (B22);
                \draw [link] (B41) -- (B31);
                \draw [link] (B51) -- (B41);
                
            \end{scope}
            
            \begin{scope}[blockchain]
            
                \draw [refsnapshot] (B11.center) to (b11);
                \draw [refsnapshot] (B21.center) to (b21);
                \draw [refsnapshot] (B31.center) to (b31);
                \draw [refsnapshot] (B41.center) to (b41);
                \draw [refsnapshot] (B51.center) to (b51);
                
                \draw [refsnapshot,bend right=20] (B22.center) to (b11);
                \draw [refsnapshot,bend right=35] (B32.center) to (b31);
                
            \end{scope}
            
        \end{tikzpicture}%
        \end{minipage}%
    }%
    \caption[]{%
        Snapshots are depicted as arrows
        (\tikz{ \draw [blockchain,refsnapshot,bend right=0,shorten <=0pt] (0,0) to (-2,0); }).
        \subref{fig:examples-one-protocol-unsafe-allsafe}
        Safe $\PIlc$ and $\PIbft$ means
        $\Chlc{}{}$ and $\Chbft{}{}$ do not fork.
        \subref{fig:examples-one-protocol-unsafe-PIlc}
        Forking in $\Chlc{}{}$ is absorbed by safe $\PIbft$.
        \subref{fig:examples-one-protocol-unsafe-PIbft}
        Safe $\PIlc$ renders forking in $\Chbft{}{}$ inconsequential.
    }
    \label{fig:examples-one-protocol-unsafe} 
\end{figure}

To understand how
$\LOGfin{}{}$ can be safe even if $\PIlc$ is unsafe
(\ie, under network partition)
or how
$\LOGda{}{}$ can be safe even if $\PIbft$ is unsafe
(\ie, when $n/3 < f < n/2$),
consider the following two examples (Figure~\ref{fig:examples-one-protocol-unsafe}).
During a network partition,
$\LOGlc{}{}$, the ledger output by $\PIlc$,
can be unsafe (Figure~\ref{fig:examples-one-protocol-unsafe-PIlc}).
Thus, snapshots taken by different nodes
or at different times can conflict.
However, $\PIbft$ is still safe
and thus orders these snapshots linearly.
Any transactions invalidated by conflicts
are sanitized during ledger extraction.
As a result, $\LOGfin{}{}$ remains safe.
In a synchronous network with $n/3 < f < n/2$,
$\PIlc$ and thus $\LOGlc{}{}$ is safe.
Even if $\PIbft$ is unsafe (Figure~\ref{fig:examples-one-protocol-unsafe-PIbft}),
finalization of a snapshot requires at least
one honest vote, and thus only valid snapshots
become finalized.
Since finalized snapshots are consistent,
$\LOGfin{}{}$ is consistent with $\LOGlc{}{}$.
Thus, prefixing $\LOGlc{}{}$ with $\LOGfin{}{}$
to form $\LOGda{}{}$ does not introduce inconsistencies,
and $\LOGda{}{}$ remains safe.

\subsection{Other BFT Sub-Protocols}
\label{sec:other-bft-protocols}

In the example of Section~\ref{sec:sleepy-streamlet},
Streamlet is readily replaced with other BFT sub-protocols for $\PIbft$,
such as HotStuff \cite{yin2018hotstuff} or PBFT \cite{pbft}.
Furthermore, the analysis of Section~\ref{sec:analysis}
carries over with minor alterations
and the security Theorem~\ref{thm:main-security} holds
for these variants as well.
The necessary modifications are described in the following.

\subsubsection{HotStuff}

\begin{algorithm}
    \caption{\footnotesize Pseudocode of example \sac construction with HotStuff as $\PIbft$ and a longest-chain protocol as $\PIlc$}
    \label{algo:pseudocode-sleepy-hotstuff}
    \begin{algorithmic}[1]
        \footnotesize
        \Procedure{Main}{\null}
            \State $\mathcal Q \gets \emptyset$
            \For{time slot $t \gets 1, 2, 3, ...$}
                \State \Call{EchoIncomingNetworkMessages}{\null}
                    \label{algo:pseudocode-sleepy-hotstuff-echo-msgs}
                \State $\mathcal M \gets \Call{GetIncomingNetworkMessages}{\null}$
                \State $\mathcal M_{\mathrm{lc}} \gets \operatorname{FilterForLcMessages}(\mathcal M)$
                \State $\mathcal M_{\mathrm{bft}} \gets \operatorname{FilterForBftMessages}(\mathcal M)$
                \State \Call{LcProcessNetworkMessages}{$\mathcal M_{\mathrm{lc}}$}
                    \label{algo:pseudocode-sleepy-hotstuff-lcmsgs-process}
                \State \Call{LcSlot}{$t$}
                \State $\Chlc{}{t} \gets \Call{LcConfirmedChain}{\null}$
                \State $\mathcal Q \gets \mathcal Q \cup \mathcal M_{\mathrm{bft}}$
                    \label{algo:pseudocode-sleepy-hotstuff-bftmsgs-queue}
                \State $\mathcal M_{\mathrm{bft},0} \gets \left\{ m \in \mathcal Q \,\middle\vert\,
                    \begin{varwidth}[c]{4cm}
                    \centering
                    $\operatorname{IsInLocalView}(m.\mathrm{node})$ \\
                    $\land\: m.\mathrm{node}.\PayloadCh \preceq \Chlc{}{t}$
                    \end{varwidth}
                \right\}$
                    \label{algo:pseudocode-sleepy-hotstuff-bftmsgs-process}
                \State $\mathcal Q \gets \mathcal Q \setminus \mathcal M_{\mathrm{bft},0}$
                \State \Call{BftProcessNetworkMessages}{$\mathcal M_{\mathrm{bft},0}$}
                \State \Call{ChainedHotstuffBftSlot}{$t$}
                \State $\Chbft{}{t} \gets \Call{BftFinalChain}{\null}$
            \EndFor
        \EndProcedure
    \end{algorithmic}
\end{algorithm}

Two minor modifications suffice to use HotStuff as $\PIbft$ in the
example of Section~\ref{sec:sleepy-streamlet}.

\paragraph{Snapshots as Payload}
To use HotStuff for $\PIbft$,
a HotStuff block $B$ contains a snapshot $B.\PayloadCh$ as payload.
Whenever the output of $\PIlc$ updates,
an honest leader $i$ takes a snapshot of its $\Chlc{i}{t}$
and proposes it in a HotStuff block.

\paragraph{Side information about $\PIlc$}
To ensure that honest nodes only vote for \blocksbft
of which the payload snapshot is viewed as confirmed in $\PIlc$,
we piggy-back on the following provision
(terminology adapted to that of this paper):
`During the protocol, a [node] [processes] a message only after the [chain] [identified] by the [block] is already in its local tree. [...] For brevity, these details are also omitted from the pseudocode.' \cite[Section 4.2]{yin2018hotstuff}
We add the condition that a node processes a message only
after the snapshot contained in the block referred to by the message
is viewed as confirmed.
We explicate the resulting queueing mechanism
as pseudocode in Algorithm~\ref{algo:pseudocode-sleepy-hotstuff}.

Messages for $\PIlc$ are unaffected by the changes
(line~\ref{algo:pseudocode-sleepy-hotstuff-lcmsgs-process}).
Messages for $\PIbft$ are queued in $\mathcal Q$
(line~\ref{algo:pseudocode-sleepy-hotstuff-bftmsgs-queue})
and only processed by $\PIbft$ once the blocks that are referred to by the message
are in view \emph{and the payload snapshot is viewed as confirmed}
(line~\ref{algo:pseudocode-sleepy-hotstuff-bftmsgs-process}).
Intuitively, for honest proposals soon after $\GST$
this leads to a delay of at most $\Delta$
until the \blockslc, which confirm the honest proposer's snapshot,
are received by all honest nodes, and thus the proposal
is considered for voting by all honest nodes.
Hence, liveness is unaffected.
On the other hand, adversarial proposals containing an unconfirmed snapshot
will look like tardy or missing proposals to HotStuff,
an adversarial behavior in the face of which HotStuff
remains safe.
Hence, safety is unaffected.
Proof of security follows the same structure outlined in Section \ref{sec:analysis}.
A detailed analysis with security proofs can be found in
\begin{onlyonarxiv}
Appendix~\ref{sec:appendix-hotstuff-analysis}.
\end{onlyonarxiv}
\begin{onlyinproceedings}
\cite[Appendix~D]{neu2020ebbandflow}.
\end{onlyinproceedings}

\subsubsection{PBFT and Other Propose-and-Vote Protocols}

Conceptually, the same adaptation as for HotStuff can be used to employ
one of the variety of propose-and-vote BFT protocols for $\PIbft$,
even ones from the pre-blockchain era.
Consider, \eg, PBFT \cite{pbft}.
PBFT is not blockchain-based, instead, it outputs a ledger
of client requests which are denoted by $m$.
To use PBFT as $\PIbft$ in the
example of Section~\ref{sec:sleepy-streamlet},
client requests are replaced by snapshots, $m \triangleq \PayloadCh$.
Whenever the output of $\PIlc$ updates,
an honest leader $i$ takes a snapshot of its $\Chlc{i}{t}$
and starts the
three-phase protocol that constitutes the core of PBFT
to atomically multicast the snapshot to the other nodes.
Honest clients queue
the messages \textsc{pre-prepare}, \textsc{prepare} and \textsc{commit},
which contain a snapshot as payload, and only processes them
once the snapshot is locally viewed as confirmed -- again,
conceptually similar to the adaptation for HotStuff.
The processing of the remaining messages is unaltered.
For PBFT, the output $\Chbft{i}{t}$ is not a blockchain
but still a sequence of snapshots
of the output of $\PIlc$.
Thus, the ledger extraction
(Section~\ref{sec:ledger-extraction-details})
carries over
readily.

Again, intuitively, as for HotStuff,
for honest proposals soon after $\GST$
the queueing of protocol messages leads to a delay of at most $\Delta$
until the \blockslc, which confirm the honest proposer's snapshot,
are received by all honest nodes, and thus the proposal
is considered for voting by all honest nodes.
Hence, liveness is unaffected.
On the other hand, adversarial proposals containing an unconfirmed snapshot
will look like tardy or missing proposals to PBFT,
an adversarial behavior in the face of which PBFT
remains safe.
Hence, safety is unaffected.

\section{Simulation Experiments}
\label{sec:simulations}

To give the reader some insight into the dynamics of the \eaf construction,
we simulate it in the presence of intermittent network partitions
and under dynamic participation of nodes.\footnote{%
The code of our simulations can be found here:
\url{https://github.com/tse-group/ebb-and-flow}}
The adversary attempts to prevent liveness for as long as possible,
\eg, by launching a private chain attack on $\PIlc$ after a partition using blocks
pre-mined during the partition, or by refusing to participate in $\PIbft$.

\paragraph{Setup}

We simulate a system of $n=100$ nodes,
$f=25$ of which adversarial.
Network messages are delayed by $\Delta=\SI{1}{\second}$.
For Sleepy, $\lambda = \SI{1e-1}{\per\second}$,
so that each node produces blocks at rate
$\lambda_0 = \lambda/n = \SI{1e-3}{\per\second}$.
One lottery slot takes $\SI{1}{\second}$. %
\Blockslc are \finalda if $k=20$ deep.
Streamlet uses $\Delta_{\mathrm{bft}}=\SI{5}{\second}$.
The system undergoes intermittent network partitions (as detailed below)
and dynamic participation of honest nodes (as detailed below).
At every time, a majority of at least $f+1 = 26$ honest nodes are awake.
Adversarial nodes are always awake.
We observe the length of the shortest ledgers $|\LOGda{i}{t}|$ and $|\LOGfin{i}{t}|$
observed by any honest node $i$, \ie,
\begin{IEEEeqnarray}{C}
    \min_i |\LOGda{i}{t}|
    \qquad
    \text{and}
    \qquad
    \min_i |\LOGfin{i}{t}|.
\end{IEEEeqnarray}

\paragraph{Dynamic Participation}

\begin{figure}
    \centering%
    \begin{tikzpicture}%
        \begin{axis}[
            mysimpleplot,
            name=plot1,
            ylabel={Ledger length [blks]},
            legend columns=2,
            xmin=0, xmax=3600,
            ymin=0, ymax=250,
            height=0.5\linewidth,
            xmajorticks=false,
            transpose legend,
        ]
        
            \def\DATAPREFIX{./figures/simulation-03}

            \foreach \tStart/\tStop in {255/285,1215/1410,1725/1785,1860/2100,2130/2205,2715/2745,3105/3210,3300/3405,3435/3480,3525/3600} {
                \edef\temp{\noexpand\draw [fill=myParula05Green,fill opacity=0.2,draw=none] (axis cs:\tStart,-100) rectangle (axis cs:\tStop,1000);}
                \temp
            }

            \addplot [myparula11,mark=none] table [x=t,y=l_Lp]
            {\DATAPREFIX/sim-03.dat};
            \label{leg:simulations-dynamic-availability-Lp}
            \addlegendentry{$\min_i |\LOGfin{i}{t}|$};
            
            \addplot [myparula21,mark=none] table [x=t,y=l_Lda]
            {\DATAPREFIX/sim-03.dat};
            \label{leg:simulations-dynamic-availability-Lda}
            \addlegendentry{$\min_i |\LOGda{i}{t}|$};

            \addlegendimage{black,mark=none}
            \addlegendentry{Awake honest nodes};
            
            \addlegendimage{myParula05Green,no markers}
            \addlegendentry{Liveness threshold};

            \addlegendimage{area legend,draw opacity=0,fill=myParula05Green,fill opacity=0.2,draw=none};
            \addlegendentry{Quorum live};

        \end{axis}%
        \begin{axis}[
            mysimpleplot,
            name=plot2,
            at=(plot1.below south), anchor=above north,
            xlabel={Time [s]},
            ylabel={Awake honest},
            xmin=0, xmax=3600,
            ymin=45, ymax=80,
            height=0.4\linewidth,
            ytick={50,67,75},
        ]
        
            \def\DATAPREFIX{./figures/simulation-03}

            \foreach \tStart/\tCenter/\tStop in {255/270/285,1215/1312.5/1410,1725/1755/1785,1860/1980/2100,2130/2167.5/2205,2715/2730/2745,3105/3152.5/3210,3300/3352.5/3405,3435/3462.5/3480,3525/3562.5/3600} {
                \edef\temp{\noexpand\draw [fill=myParula05Green,fill opacity=0.2,draw=none] (axis cs:\tStart,-100) rectangle (axis cs:\tStop,1000);}%
                \temp
            }

            \addplot [name path=dacurve,black,mark=none] table [x=t,y=l_awake]
            {\DATAPREFIX/sim-03.dat};
            \label{leg:simulations-dynamic-availability-awake}

            \addplot[name path=threshold,myParula05Green,no markers,domain=-1000:4600] {67};

        \end{axis}
    \end{tikzpicture}%
    \vspace{-0.75em}%
    \caption[]{In a synchronous network where
        the number of awake honest nodes is modelled by a reflected Brownian motion, %
        $\LOGda{}{}$ grows steadily over time.
        During intervals in which enough honest nodes are awake
        there is a $2/3$-quorum to advance $\LOGfin{}{}$
        so that it catches up with $\LOGda{}{}$.}
    \label{fig:simulations-dynamic-availability}
\end{figure}
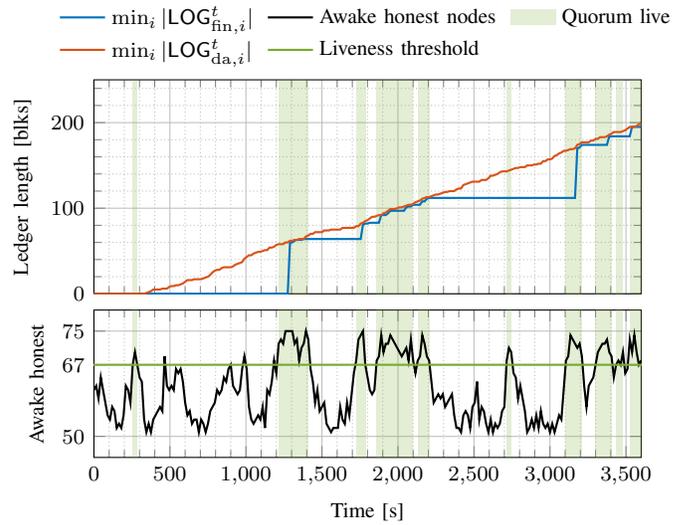

We examine the effect of dynamic participation of honest nodes on our construction.
For this purpose, we assume a synchronous network, \ie, $\GST = 0$.
The number of awake honest nodes follows a reflected Brownian motion between
$51$ and $75$.

Figure~\ref{fig:simulations-dynamic-availability} shows a sample path of the simulation.
$\LOGda{}{}$ grows steadily over time (because the conditions of \Ptwo are satisfied,
$\LOGda{}{}$ is secure, cf. \ref{leg:simulations-dynamic-availability-Lda})
at a rate proportional to the number of awake nodes (cf. \ref{leg:simulations-dynamic-availability-awake}).
Only during intervals when $67$ or more honest nodes are awake (shaded in Figure~\ref{fig:simulations-dynamic-availability},
recall that the adversary refuses to
participate in the protocol) there is a $2/3$-quorum to advance $\LOGfin{}{}$
(cf. \ref{leg:simulations-dynamic-availability-Lp}), whenever
conditions in Streamlet permit (\ie, whenever there is a sufficiently long sequence
of honest leaders).
During a sufficiently long such interval, $\LOGfin{}{}$ catches up with $\LOGda{}{}$.

\paragraph{Intermittent Network Partitions}

\begin{figure}
    \centering%
    \begin{tikzpicture}%
        \begin{axis}[
            mysimpleplot,
            xlabel={Time [s]},
            ylabel={Ledger length [blks]},
            legend columns=2,
            transpose legend,
            xmin=0, xmax=3600,
            ymin=0, ymax=225,
            height=0.5\linewidth,
        ]
        
            \def\DATAPREFIX{./figures/simulation-01}

            \draw[fill=black,fill opacity=0.1,draw=none] (axis cs:600,0) rectangle (axis cs:1200,500);
            \draw[fill=black,fill opacity=0.1,draw=none] (axis cs:1800,0) rectangle (axis cs:2700,500);

            \addplot [myparula11,mark=none] table [x=t,y=l_Lp]
            {\DATAPREFIX/sim-01-phase1.dat};
            \label{leg:simulations-partitions-Lp}
            \addlegendentry{$\min_i |\LOGfin{i}{t}|$};
            
            \addplot [myparula21,mark=none] table [x=t,y=l_Lda]
            {\DATAPREFIX/sim-01-phase1.dat};
            \label{leg:simulations-partitions-Lda}
            \addlegendentry{$\min_i |\LOGda{i}{t}|$};

            \addplot [myparula11,mark=none,forget plot] table [x=t,y=l_Lp]
            {\DATAPREFIX/sim-01-phase2.dat};

            \addplot [myparula22,mark=none] table [x=t,y=l_Lda_A]
            {\DATAPREFIX/sim-01-phase2.dat};
            \label{leg:simulations-partitions-Lda-A}
            \addlegendentry{$\min_{i\in P_1} |\LOGda{i}{t}|$};
            
            \addplot [myparula23,mark=none] table [x=t,y=l_Lda_B]
            {\DATAPREFIX/sim-01-phase2.dat};
            \label{leg:simulations-partitions-Lda-B}
            \addlegendentry{$\min_{i\in P_2} |\LOGda{i}{t}|$};

            \addlegendimage{area legend,draw opacity=0,fill=black,fill opacity=0.1,draw=none};
            \addlegendentry{Partitions};

            \addplot [myparula11,mark=none] table [x=t,y=l_Lp]
            {\DATAPREFIX/sim-01-phase3.dat};
            \addplot [myparula21,mark=none] table [x=t,y=l_Lda]
            {\DATAPREFIX/sim-01-phase3.dat};

            \addplot [myparula11,mark=none] table [x=t,y=l_Lp]
            {\DATAPREFIX/sim-01-phase4.dat};
            \addplot [myparula22,mark=none] table [x=t,y=l_Lda_A]
            {\DATAPREFIX/sim-01-phase4.dat};
            \addplot [myparula23,mark=none] table [x=t,y=l_Lda_B]
            {\DATAPREFIX/sim-01-phase4.dat};

            \addplot [myparula11,mark=none] table [x=t,y=l_Lp]
            {\DATAPREFIX/sim-01-phase5.dat};
            \addplot [myparula21,mark=none] table [x=t,y=l_Lda]
            {\DATAPREFIX/sim-01-phase5.dat};

        \end{axis}
    \end{tikzpicture}%
    \vspace{-0.75em}%
    \caption[]{Under intermittent network partitions, during which
        honest nodes are split into two parts  of $2(n-f)/3$ and $(n-f)/3$ nodes, respectively,
        \finalizationp of \blocksbft stalls because no $2/3$-quorum is live.
        The ledgers $\LOGda{}{}$ as seen by the different parts drift apart.
        Once the network reunites, the honest nodes converge on the longer $\LOGda{}{}$
        and $\LOGfin{}{}$ catches up.}
    \label{fig:simulations-partitions}
\end{figure}
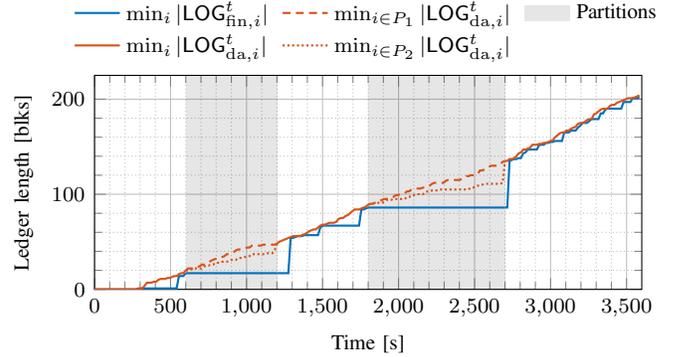

We simulate the system under intermittent network partitions, during which
honest nodes are split into two parts $P_1$ and $P_2$ of $2(n-f)/3$ and $(n-f)/3$ nodes, respectively.
Inter-part communication is prevented, intra-part communication incurs
delay $\Delta$.
All honest nodes are awake throughout the experiment.
During partitions %
we consider
the ledgers as seen by honest nodes in the respective parts.

Figure~\ref{fig:simulations-partitions} shows a sample path of the simulation.
Periods of network partition are shaded in Figure~\ref{fig:simulations-partitions}.
As expected, \finalizationp of \blocksbft stalls during periods of partition (cf. \ref{leg:simulations-partitions-Lp}),
because no $2/3$-quorum consensus is achieved, as communication between parts is blocked.
The ledgers $\LOGda{}{}$ as seen by nodes in the different parts $P_1$ and $P_2$ drift apart (cf. \ref{leg:simulations-partitions-Lda-A}, \ref{leg:simulations-partitions-Lda-B}).
Once the network reunites, the honest nodes converge on the longer $\LOGda{}{}$
(which is that produced by part $P_1$, cf. \ref{leg:simulations-partitions-Lda})
and $\LOGfin{}{}$ quickly catches up with $\LOGda{}{}$.
Note that the shorter $\LOGda{}{}$ (produced by part $P_2$) is abandoned
and disappears from $\LOGda{}{}$ after the partition.

Note that because part $P_1$ outnumbers the adversary,
the adversary does not have a chance to build a long enough private chain that it can use to
delay honest nodes' convergence on $\LOGda{}{}$.
Instead, convergence on $\LOGda{}{}$ is reached once honest nodes have synchronized
their blocktrees and picked the longest chain.
This is different if honest nodes are partitioned into smaller parts, as examined next.

\paragraph{Convergence of $\LOGdaBLANKFIX$ After Network Partition and/or Low Participation}

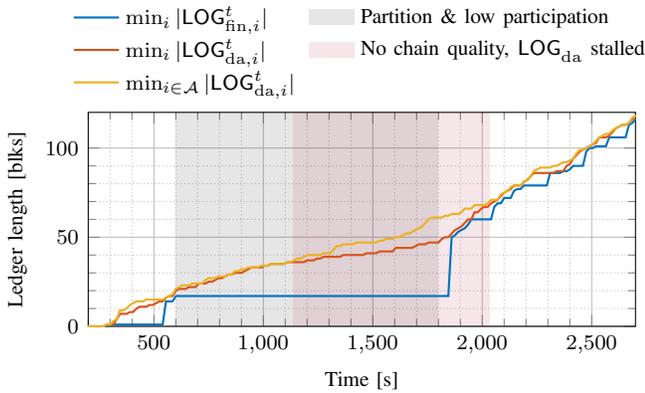
\begin{figure}
    \centering%
    \begin{tikzpicture}%
        \begin{axis}[
            mysimpleplot,
            xlabel={Time [s]},
            ylabel={Ledger length [blks]},
            legend columns=3,
            transpose legend,
            xmin=200, xmax=2700,
            ymin=0, ymax=120,
            height=0.5\linewidth,
        ]
        
            \def\DATAPREFIX{./figures/simulation-02}

            \draw[fill=black,fill opacity=0.1,draw=none] (axis cs:600,0) rectangle (axis cs:1800,500);

            \draw[fill=myParula07Red,fill opacity=0.1,draw=none] (axis cs:1135,0) rectangle (axis cs:2035,500);

            \addplot [myparula11,mark=none] table [x=t,y=l_Lp]
            {\DATAPREFIX/sim-02-phases.dat};
            \label{leg:simulations-convergence-Lp}
            \addlegendentry{$\min_i |\LOGfin{i}{t}|$};
            
            \addplot [myparula21,mark=none] table [x=t,y=l_Lda]
            {\DATAPREFIX/sim-02-phases.dat};
            \label{leg:simulations-convergence-Lda}
            \addlegendentry{$\min_i |\LOGda{i}{t}|$};

            \addplot [myparula31,mark=none] table [x=t,y=l_Lda_adv]
            {\DATAPREFIX/sim-02-phases.dat};
            \label{leg:simulations-convergence-Lda-adv}
            \addlegendentry{$\min_{i\in\Adv} |\LOGda{i}{t}|$};

            \addlegendimage{area legend,draw opacity=0,fill=black,fill opacity=0.1,draw=none};
            \addlegendentry{Partition \& low participation};
            
            \addlegendimage{area legend,draw opacity=0,fill=myParula07Red,fill opacity=0.1,draw=none};
            \addlegendentry{No chain quality, $\LOGda{}{}$ stalled};

        \end{axis}
    \end{tikzpicture}%
    \vspace{-0.75em}%
    \caption[]{During a period of network partition and low participation
        honest block production slows down
        and the adversary can successfully pre-mine a private adversarial structure.
        The adversary releases private blocks to displace
        honest blocks from the longest chain.
        Thus, the longest chain suffers from low chain quality
        and the dynamic ledger $\LOGda{}{}$ stalls.
        Once the network reunites and all honest nodes awake,
        $\LOGda{}{}$ grows at a fast rate and
        the adversary eventually runs out of pre-mined blocks.
        Honest blocks enter the longest chain and liveness of $\LOGda{}{}$ and with it
        liveness of $\LOGfin{}{}$ ensues.}
    \label{fig:simulations-convergence}
\end{figure}

We focus on the convergence of $\LOGda{}{}$ after a network partition and/or period of
low participation, where the largest awake part is smaller than $f$.
For this purpose, suppose that during a partition, $50$ of the $75$ honest nodes are asleep.
The remaining $25$ nodes are awake but partitioned into two parts of $15$ and $10$
nodes, respectively.
Thus, the largest awake part with $15$ honest nodes is smaller than $f = 25$
and the adversary can successfully pre-mine a private chain during the period
of partition and low participation.
As before, only inter-part communication is prohibited.

Figure~\ref{fig:simulations-convergence} shows a sample path of the simulation.
Before the partition, honest block production is fast and the adversary cannot build a substantial
private chain.
During the period
of partition and low participation
the honest block production slows down.
The adversary gains a considerable lead
in that its private chain
(cf. \ref{leg:simulations-convergence-Lda-adv})
grows much faster than the longest chain in honest view
(cf. \ref{leg:simulations-convergence-Lda}).
As honest nodes produce blocks, the adversary releases its withheld adversarial blocks to displace
the honest blocks from the longest chain.
As a result, the longest chain suffers a sustained period of low chain quality (all blocks
are adversarial and thus might not include any transactions)
and the dynamic ledger $\LOGda{}{}$ effectively stalls.
Once the network reunites and asleep honest nodes awake,
all honest nodes
join forces on $\LOGda{}{}$, which now grows at a fast rate again.
Eventually, the adversary runs out of pre-mined blocks and cannot displace honest blocks any longer.
An honest block enters the longest chain and liveness of $\LOGda{}{}$ and with it
liveness of $\LOGfin{}{}$ ensues (cf. \ref{leg:simulations-convergence-Lp}).

Note that during the period of partition and low participation,
$\LOGfin{}{}$ does not grow because (as in the previous experiment) no $2/3$-quorum consensus is achieved.
Once the network reunites, $\LOGfin{}{}$ catches up with $\LOGda{}{}$, but since the most recent blocks in $\LOGda{}{}$
are adversarial (and thus potentially empty), neither $\LOGda{}{}$ nor $\LOGfin{}{}$ are live for some time.
Once honest blocks return to $\LOGda{}{}$ and get referenced by $\LOGfin{}{}$, both return to be live.

\section{Discussion and Conclusion}
\label{sec:discussion}

\subsection{\SAC Protocols and Finality Gadgets}
\label{sec:finality-gadgets}

Finality gadgets, initiated by \cite{buterin2017casper}, are a body of work
\cite{fg_nightshade,dinsdale2019afgjort,ethresearch-hierarchical-finality-gadget,stewart2020grandpa}
that aims to add finality to a Nakamoto-style protocol.
As far as we can gather, there is no mathematical definition of  a finality gadget; indeed different works have different goals on what their finality gadgets are supposed to achieve, and these goals are often not explicitly spelled out. For example, \cite{dinsdale2019afgjort} seems to be using their finality gadget to achieve opportunistic responsiveness. On the other hand,  the goals of \cite{stewart2020grandpa} seem to be aligned with the \eaf property we studied here, but there is no mathematical formulation on what should be achieved.  In contrast, we focus on the \eaf property, precisely define what it means, and construct \sac protocols to achieve the property. So it is difficult to have a scientific comparison between \sac protocols and finality gadgets. However, there is one important {\em structural} difference between the construction of \sac protocols and the construction of all existing finality gadgets which we want to point out.
The difference is that the \sac protocol construction can use any off-the-shelf dynamically available protocol unmodified (and the BFT sub-protocol with minor modifications), while all existing 
finality gadgets involve a joint design of the finality voting and the fork choice rule of 
the underlying Nakamoto-style chain. In particular, the native fork choice rule of the Nakamoto-style chain has to be altered to accommodate the finalization process.
In Casper FFG \cite{buterin2017casper}, for example, the `correct by construction' rule specifies that blocks should be proposed on the chain with the highest justified block, as opposed to the longest chain.
Another example is the hierarchical finality gadget  \cite{ethresearch-hierarchical-finality-gadget}, which specifies that proposal should be done on the chain with the deepest finalized block.
In contrast, the dynamically available sub-protocol in our construction is off-the-shelf and so the fork choice rule as well as the confirmation rule are unaltered. Finalization by the BFT sub-protocol occurs {\em after} transactions are confirmed in the $\LOGlc{}{}$ ledger. The confirmation and the finalization properties are completely {\em decoupled}.
The decoupled nature of our construction has several advantages. First, construction adds finality to {\em any} existing dynamically available chain without change. Second, our construction allows the use of state-of-the-art dynamically available protocols and state-of-the-art partially synchronous BFT protocols without the need to reinvent the wheel. In contrast, existing finality gadget designs entail handcrafting brand new protocols
(\eg, \cite{dinsdale2019afgjort,stewart2020grandpa}),
and the tight coupling between the two layers makes reasoning about security difficult in the design process. The attack on Gasper in Section~\ref{sec:attacks-gasper} is a good example of the perils of this approach. Another example is the {\em bouncing attack} on Casper FFG \cite{ethresearch-bouncing-attack,ethresearch-bouncing-attack-analysis}
\begin{onlyonarxiv}
(recapitulated in Appendix~\ref{sec:attacks-casper}).
\end{onlyonarxiv}
\begin{onlyinproceedings}
(recapitulated in \cite[Appendix~E]{neu2020ebbandflow}).
\end{onlyinproceedings}
Third, our construction is `future-proof' because it can take advantage of future advances in the design of dynamically available protocols and in the design of  partially synchronous BFT protocols; both problems have received and are continuing to receive significant attention from the community.
\subsection{\EAF and \SAC for Proof-of-Work}
\label{sec:discussion-pow}

Another common goal for finality gadgets
is to add a permissioned finality layer
to a permissionless PoW Na\-ka\-mo\-to-style protocol.
In this setting,
nodes come in two flavors:
\emph{miners} are quantified by hash rate and power the PoW longest chain,
and
\emph{validators} with unique cryptographic identities provide finality.
The two different resources (hash rate and cryptographic identities)
require a
modification of the environment
in
the Theorem of Section~\ref{sec:abstract-eaf-construction}.
\begin{onlyinproceedings}
The \sac construction
using Nakamoto's PoW longest chain as $\PIlc$
and any of the BFT protocols from Sections~\ref{sec:sleepy-streamlet} and \ref{sec:other-bft-protocols} as $\PIbft$
satisfies an \eaf variant
where
the adversarial hash rate is less than
$50\%$
of the awake hash rate for both \textbf{P1} and \textbf{P2},
and
the number of adversarial validators is fewer than
$33\%$
and
$66\%$
of the validators for \textbf{P1} and \textbf{P2}, respectively.
Security is proved analogously
to the permissioned case (Section~\ref{sec:analysis}).
For further discussion, see %
\cite[Section~\ref{sec:discussion-pow}]{neu2020ebbandflow}.
\end{onlyinproceedings}
\begin{onlyonarxiv}
The \sac construction
using Nakamoto's PoW longest chain as $\PIlc$
and any of the BFT protocols from Sections~\ref{sec:sleepy-streamlet} and \ref{sec:other-bft-protocols} as $\PIbft$
satisfies the following \eaf variant:
\begin{theorem*}[Informal, Permissioned Finality for Permissionless PoW]
Consider
a network environment where:
\begin{enumerate}
    \item Communication is asynchronous until a global stabilization time $\GST$ after which communication becomes synchronous, 
    \item validators are always awake, and
    \item miners can sleep and wake up at any time.
\end{enumerate}
Then
\begin{enumerate}
    \item ({\bf P1 - Finality}): The finalized ledger $\LOGfin{}{}$ is guaranteed to be safe at all times, and live after $\GST$, if fewer than $33\%$ of validators are adversarial, \emph{fewer than $50\%$ of awake hash rate is adversarial, and awake hash rate is bounded away from zero}.
    \item ({\bf P2 - Dynamic Availability}): If $\GST = 0$, then the available ledger $\LOGda{}{}$ is guaranteed to be safe and live at all times, if \emph{fewer than $66\%$ of validators are adversarial}, fewer than $50\%$ of awake hash rate is adversarial, and awake hash rate is bounded away from zero.
\end{enumerate}
\end{theorem*}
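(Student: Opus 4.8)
The plan is to mirror the proof of Theorem~\ref{thm:main-security}, keeping the decomposition of Figures~\ref{fig:proof-layout-1} and~\ref{fig:proof-layout-2} intact and substituting only the two ingredients that change when moving to PoW: the longest-chain sub-protocol $\PIlc$ (now Nakamoto PoW rather than a permissioned lottery chain) and the accounting of the two now-separate resources (hash rate for miners, cryptographic identities for validators). The BFT sub-protocol $\PIbft$ is unchanged, so its safety (Lemma~\ref{thm:streamlet-safety}) and its liveness after $\GST$ (Lemmas~\ref{thm:streamlet-liveness-1},~\ref{thm:streamlet-liveness}) carry over verbatim: since validators are always awake and there is no sleeping on the finality layer, the only requirement for these is synchrony after $\GST$ together with fewer than $1/3$ adversarial validators. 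Note that because validators never sleep, the role of $\GOT$ disappears and the finality-liveness deadline collapses from $\max\{\GST,\GOT\}$ to simply $\GST$.

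For \Pone I would argue exactly along Figure~\ref{fig:proof-layout-1}. Safety of $\LOGfin{}{}$ follows from safety of $\PIbft$ (box~1 $\Rightarrow$ box~5) and holds at all times under fewer than $1/3$ adversarial validators, independently of the miners. Liveness of $\LOGfin{}{}$ after $\GST$ (box~4) combines liveness of $\PIbft$ (box~2) with security of $\PIlc$ after $\GST$ (box~3), via the analog of Lemma~\ref{thm:pibft-liveness}: once $\PIlc$ confirms fresh transactions, honest validators can finalize snapshots referencing them. The substantive new work is box~3, where Theorem~\ref{thm:pos-security-under-advp-main} must be replaced by a PoW longest-chain security statement after $\GST$ under dynamic participation of miners, using fewer than $50\%$ adversarial awake hash rate and awake hash rate bounded away from zero.

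For \Ptwo (with $\GST=0$) I would follow Figure~\ref{fig:proof-layout-2}. Security of $\LOGda{}{}$ reduces to security of $\PIlc$ (box~6) together with consistency of $\LOGfin{}{}$ with the output of $\PIlc$ (box~7, the analog of Lemma~\ref{thm:pibft-safe-under-2}). Security of $\PIlc$ is the PoW longest-chain guarantee under full synchrony with dynamic hash rate. The key place where the thresholds differ from the permissioned case is box~7: finalization of a snapshot in $\PIbft$ requires a $2/3$-quorum of validators, and honest validators only vote for snapshots they view as confirmed. Because miners and validators are disjoint resources, the only thing needed to force an honest validator into every quorum is that adversarial validators number fewer than $2/3$ of all validators — precisely the relaxed $66\%$ condition, which decouples cleanly from the $50\%$ hash-rate condition governing availability. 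Safety of $\PIlc$ then propagates a single honest validator's "confirmed" view to all honest nodes, so $\LOGfin{}{}$ is a prefix of $\LOGlc{}{}$ in every view, prefixing introduces no forks, and the Prefix property holds by construction as before.

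The hard part will be box~3 and box~6: establishing security of Nakamoto's PoW longest chain as $\PIlc$ with a time-varying, dynamically participating honest hash rate, and in particular its security \emph{after} $\GST$ under partial synchrony. I would adapt the $\GST$-strong-pivot machinery from the permissioned analysis, where the "awake hash rate bounded away from zero" hypothesis plays the role that "at least one honest node is awake" plays in the permissioned setting: it guarantees that convergence opportunities (sufficiently isolated honest blocks) keep arriving at a positive rate, so that $\GST$-strong pivots appear within $O(\GST)$ of $\GST$. The main care is that in PoW the honest and adversarial block-production rates are set by hash rate rather than by a per-identity lottery, so the counting of convergence opportunities against adversarial blocks must be redone in terms of hash rate and the network delay $\Delta$ in the style of standard backbone analyses; the remainder of the argument — including the modified voting rule of $\PIbft$ and the ledger-extraction sanitization — transfers unchanged from Section~\ref{sec:analysis}.
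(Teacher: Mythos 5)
Your proposal is correct and follows essentially the same route as the paper, which itself only states that ``security is proved analogously to Section~\ref{sec:analysis}'' and then gives exactly the two pieces of intuition you reproduce: the hash-rate conditions are needed under \Pone so that $\PIlc$ stays live and keeps feeding snapshots to $\PIbft$, and the $66\%$ validator bound under \Ptwo is what guarantees every finalization quorum contains an honest vote so that $\LOGfin{}{}$ remains consistent with $\LOGlc{}{}$. Your additional remarks on collapsing $\max\{\GST,\GOT\}$ to $\GST$ and on redoing the pivot/convergence-opportunity counting in terms of hash rate are consistent with, and somewhat more explicit than, what the paper provides.
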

Security is proved analogously to Section~\ref{sec:analysis}.
Observe that the $33\%$ bound on adversarial validators
under \textbf{P1}
and
the $50\%$ bound on adversarial hash rate
under \textbf{P2}
are analogous to the permissioned
case analyzed in this paper.
The \sac construction
for permissioned finality on top of permissionless PoW Nakamoto
has further requirements.
Under \textbf{P1}, fewer than $50\%$ of awake hash rate have to be adversarial, and awake hash rate has to be bounded away from zero, since otherwise $\PIlc$ might not be live and there would be no input for $\PIbft$ to finalize -- then $\LOGfin{}{}$ would not be live.
Similarly, under \textbf{P2}, fewer than $66\%$ of validators have to be adversarial, since otherwise adversarial validators could finalize $\PIlc$ blocks that are not on the longest chain, which would later enter the prefix of $\LOGda{}{}$ -- then $\LOGda{}{}$ would not be safe.
It remains an open question whether these additional requirements
are fundamental or a limitation
of the \sac construction.
\end{onlyonarxiv}

\subsection{\SAC Protocols for Ethereum 2.0}
\label{sec:sac-for-eth2}

Our construction yields provably secure \eaf protocols
from off-the-shelf sub-protocols and provides
a flexible resolution
of the availability-finality dilemma.
In addition,
the composition enables us
to benefit from advances
in the design of sub-protocols
and to pass along
(rather than having to build from scratch)
additional features of
the constituent protocols
which are desired from
a decentralized Internet-scale open-participation
consensus infrastructure
such as Ethereum.

\paragraph{Scalability to Many Nodes}

The partially synchronous BFT sub-protocol $\PIbft$ used in the \sac construction presents the main
scalability bottleneck.
HotStuff is the BFT protocol with the lowest known
message complexity
$O(n)$.
When used alongside a longest-chain-based
protocol, which are known to scale well to
many
participants, the overall \sac protocol
promises good scalability.

\paragraph{Accountability}

Gasper \cite{buterin2020combining} provides accountability in the form that
a safety violation implies that at least a third of
nodes have provably violated the protocol.
As a
punitive and deterrent response, those nodes'
stake is slashed.
This attaches a price tag to safety violations
and leads to notions of economic security.
\Sac protocols
inherit accountability properties from the BFT sub-protocol $\PIbft$
for the finalized ledger $\LOGfin{}{}$.
For instance,
for many partially synchronous BFT protocols following
the propose-and-vote paradigm, such as HotStuff, PBFT
or Streamlet,
a safety violation
requires equivocating votes from more than a third of the nodes.
(Recall that this fact is the cornerstone of these protocols' safety argument.)
Due to the use of digital signatures,
equivocating votes can be attributed to nodes irrefutably,
and equivocating nodes can be held accountable
for the safety violation
(\cf \cite{neu2020snapandchat,sheng2020bft})
, \eg, by slashing the nodes' stake.
To what extent accountability can be provided for
the available ledger $\LOGda{}{}$
is less clear at this point, both because accountability
has not been widely studied in the context of
dynamically available protocols, as well as due to the
non-trivial ledger extraction that leads to $\LOGda{}{}$.
\paragraph{High Throughput}
High transaction throughput can be achieved
by choosing a high throughput $\PIlc$,
such as a longest chain protocol
with separate transaction and backbone blocks
(\cf Prism \cite{prism}) or OHIE \cite{ohie} or ledger combiners \cite{fitzi2020ledger}.

\paragraph{Fast Confirmation Latency}

Using
ledger combiners \cite{fitzi2020ledger} or Prism \cite{prism}
for $\PIlc$, fast latency, in particular, latency independent of the
confirmation error probability,
can be achieved by \sac protocols.
For $\PIbft$, responsive BFT protocols can be used which
finalize snapshots with a latency in the order of the actual network
delay
rather than the delay bound $\Delta$.
Hence, $\PIbft$ does not present a bottleneck in terms of reducing the latency
of \sac protocols
and the finalized ledger $\LOGfin{}{}$ can catch up with the available ledger $\LOGda{}{}$ very quickly, when network conditions allow.

\section*{Acknowledgment}
We thank Yan X.\ Zhang, Danny Ryan and Vitalik Buterin for fruitful discussions.
JN is supported by the Reed-Hodgson Stanford Graduate Fellowship.
ENT is supported by the
Stanford Center for Blockchain Research.

\nocite{nakamoto_paper}
\bibliographystyle{IEEEtran}
\bibliography{references}

\appendices

\section{Details of the Liveness Attack on Gasper}
\label{sec:appendix-attack-gasper}

\newcommand{\Ev}[0]{\ensuremath{\mathcal E}}
\newcommand{\cEv}[0]{\ensuremath{\overline{\Ev}}}

\subsection{Setting of the Attack}
\label{sec:appendix-attack-gasper-introduction}

This appendix describes an attack on the liveness of the Gasper protocol \cite{buterin2020combining}.
We first state the assumptions about the adversary's capabilities and control over the network
that suffice for the adversary to launch our attack.
Subsequently, we describe the attack in detail.
(The attack is summarized in Section~\ref{sec:attacks-gasper}.)
Then we demonstrate
using probabilistic analysis and Monte Carlo simulation
that the adversary is likely in a position to launch the attack
within a short period of time.

\subsubsection{Goal}
\label{sec:appendix-attack-gasper-introduction-goal}

We describe an attack on the liveness of the Gasper protocol \cite{buterin2020combining}.
That is, we describe a situation which is likely to occur
and a sequence of adversarial actions
such that the adversary can prevent any Casper finalizations indefinitely.

Our exposition assumes the reader is familiar with Gasper \cite{buterin2020combining},
Casper \cite{buterin2017casper},
and the synchronous network model \cite{model-sync}.

\subsubsection{Assumptions}
\label{sec:appendix-attack-gasper-introduction-assumptions}

We assume an adversary has the following capabilities:
\begin{enumerate*}[label=(\alph*)]
    \item
        The adversary knows at what point in time honest validators
        execute the Gasper fork choice rule
        $\mathsf{HLMD}(G)$ \cite[Algorithm 4.2]{buterin2020combining}.

    \item
        The adversary is able to target a message (such as a block or a vote) for
        delivery to an honest validator just before a certain point in time.
    
    \item
        Honest validators cannot update each other
        arbitrarily quickly about messages
        they have just received.
\end{enumerate*}

Note that (a) is given by design of Gasper which has predetermined
points in time at which honest validators are supposed to cast their votes.
Conditions
(b) and (c) are satisfied in standard consensus-theoretic adversary
and network models such as $\Delta$-synchrony \cite{model-sync}
or $\Delta$-partial-synchrony \cite{model-psync}
where the adversary controls network delay.
The probabilistic liveness proof of \cite{buterin2020combining}
does not apply
because it assumes a weaker adversary
(network delays are assumed to be stochastic
rather than adversarial in \cite{buterin2020combining})
which does not have
capability (b).

\subsubsection{Terminology}
\label{sec:appendix-attack-gasper-introduction-terminology}

Recall that Gasper proceeds in epochs which are subdivided into slots.
We assume that Gasper is run with $C$ slots per epoch,
$n$ validators in total,
of which $f$ are adversarial.
Let $\beta \triangleq f/n$.
We assume that $C$ divides $n$ such that
every slot has a \emph{committee} of integer size $n/C$.
For each epoch, a random permutation of all $n$ validators
fixes the assignment of validators to committees.
The first validator in every committee is the designated
\emph{proposer} for the respective slot and gets to propose
a new block at a location in the block tree determined by $\mathsf{HLMD}(G)$.
Then, each validator of the slot's committee executes $\mathsf{HLMD}(G)$
in its own view $G$ to determine what block to vote for.

A vote consists of a GHOST vote and a Casper (FFG) vote.
The Casper vote's source and target blocks are deterministic functions
of the block the GHOST vote is cast for
(see \cite[Definition 4.7]{buterin2020combining}).
A block can only become finalized if a supermajority of $\geq 2n/3$ validators
vote for it.
The goal of the attack is to keep honest validators split between two options
(a `left' and a `right' chain, see Figure~\ref{fig:attack-gasper-overview}, p.~\pageref{fig:attack-gasper-overview})
indefinitely, such that no supermajority of $\geq 2n/3$ validators
ever votes for one of the two options and thus no block ever gets finalized.

\subsection{Attack}
\label{sec:appendix-attack-gasper-attack}

In this section we describe our attack in detail,
\cf \cite{ethresearch-gasper-liveness-attack}.
For an illustration of the attack, see Figure~\ref{fig:attack-gasper-overview}
(p.~\pageref{fig:attack-gasper-overview}).

\subsubsection{Recap: Proposing and Voting in Gasper}
\label{sec:appendix-attack-gasper-attack-recap-gasper}

To understand how the adversary can keep the honest nodes split indefinitely
between two chains it is necessary to revisit the proposing and voting
algorithms of Gasper.
For each of the two roles, proposing and voting,
honest validators use the fork chain rule
$\mathsf{HLMD}(G)$
(see \cite[Algorithm 4.2]{buterin2020combining})
in their local view $G$ to determine
(a) when proposing, what block to extend, and
(b) when voting, what block to endorse with a vote.

Roughly speaking, $\mathsf{HLMD}(G)$ does this.
First, $\mathsf{HLMD}(G)$ finds
the justified pair with highest attestation epoch
among all possible chains,
but taking into account only votes that have
already been referenced on said chain
(see \cite[Algorithm 4.2]{buterin2020combining}, line $3$, `$J(\mathsf{ffgview}(B_l))$').
Votes that the validator might have received from the network but have not
yet been referenced in a block are not considered.
Second, $\mathsf{HLMD}(G)$ filters for only those chains that
contain said highest justified pair, \ie, are consistent with the prior
justification
(see \cite[Algorithm 4.2]{buterin2020combining}, lines $4$ and $5$).
Third, among the remaining chains, $\mathsf{HLMD}(G)$ picks greedily the
`heaviest' chain (GHOST paradigm), \ie, the chain which among
the most recent votes for each validator has received the most votes
(LMD paradigm,
see \cite[Algorithm 4.2]{buterin2020combining}, lines $7$ to $10$).

In addition, to vote, the source and target of the Casper vote are determined
as follows (see \cite[Definition 4.7]{buterin2020combining}).
The Casper vote's source $\mathsf{LJ}$
is the last justified pair, considering only
votes that have been included in blocks on the chain determined by $\mathsf{HLMD}(G)$.
This ensures that all validators voting for the tip of a certain chain
have a consistent view of and vote from the last justified pair.
The Casper vote's target $\mathsf{LE}$
is the last epoch boundary pair (\ie, of the current epoch)
on the chain determined by $\mathsf{HLMD}(G)$.
Again, all validators voting for the tip of a certain chain
have a consistent view of and vote for the same last epoch boundary pair.

\subsubsection{How to Sway Honest Validators}
\label{sec:appendix-attack-gasper-attack-swaying}

Suppose there are two competing chains as depicted in
Figure~\ref{fig:attack-gasper-overview}.
The only time a non-trivial fork choice occurs
in $\mathsf{HLMD}(G)$
(see \cite[Algorithm 4.2]{buterin2020combining}, line $9$)
is when a validator
chooses whether to go down the `left' or the `right' chain.
This decision is based on where the majority of the most recent votes
(one per validator) fall, \emph{in the instant when $\mathsf{HLMD}(G)$ is
executed}. Thus, if half of the most recent votes are `left'
and the other half is `right', then the adversary can release a single withheld
vote to an honest validator who is just about to execute $\mathsf{HLMD}(G)$
and thereby `tip the balance' and
sway that honest validator to vote on a chain of the adversary's choosing.
Note that the adversary can release that same withheld vote to multiple honest
validators, all of which will then vote for the chain of the adversary's choice.
Furthermore, note that the adversary can release two different withheld votes to different
sets of honest validators and thus steer one group towards `left' and the other
group towards `right'. Ultimately (due to the assumption of there being
a bound $\Delta$ on the maximum network delay the adversary can inflict on a
message, and the fact that honest validators gossip about recently received
messages in an attempt to keep consistent views of the protocol execution)
the withheld votes will become known to all honest validators, but
(a) the adversary can prevent this synchronization until after the honest
validators have cast their votes by releasing the withheld votes just
before the honest validators execute $\mathsf{HLMD}(G)$, and
(b) after two withheld votes, one for `left' and one for `right', are released,
and if the honest validators either vote `left' and `right' in equal number (during epoch $0$)
or simply reaffirm their prior votes (during epoch $1$ and beyond),
then after sharing all votes with all honest validators there is still
an equal number of votes for `left' and `right', respectively.
Thus, in the next slot the adversary can release another two withheld votes
to continue keeping up the equal split of honest validators. And so on.

Swaying honest validators by releasing withheld votes selectively is the
key
technique underlying our attack.
Since the Casper votes are consistent with the GHOST votes by construction,
as long as the GHOST votes are split equally between the two chains, the Casper
votes are split equally between the two chains.
Thus, neither of the two chains will ever receive a supermajority of $\geq 2n/3$
votes as would be necessary for a justification or finalization.
Thus, no epoch boundary pair will ever get finalized and thus liveness is lost
indefinitely and with certainty, once the attack has been launched.
In the remainder of this section we describe under what sufficient condition
and with what sequence of adversarial actions
the adversary is able to affect a permanent split among honest validators
and thus a permanent loss of liveness of Gasper.

\subsubsection{Epoch \texorpdfstring{$0$}{0}: Kick-Starting the Attack}
\label{sec:appendix-attack-gasper-attack-epoch0}

The adversary waits for an opportune epoch to kick-start the attack.
For ease of exposition, we assume that epoch $0$ is opportune.
An epoch is opportune if there are enough adversarial validators in every
slot of the epoch to fill the following roles:
\begin{itemize}[leftmargin=1em]
    \item
        The proposer of slot $0$ needs to be adversarial.
        The adversarial proposer equivocates and produces two conflicting blocks
        (`left' and `right', dashed blocks $0$ and $0'$
        in Figure~\ref{fig:attack-gasper-overview})
        which it reveals to two suitably chosen subsets of the validators in slot $0$.
        Thus, the honest validators' votes are split equally between the two chains.
        (Equivocating on block production is a slashable offense and thus the stake
        corresponding to the adversarial block producer will be slashed.
        Besides this equivocation, none of the adversarial actions are slashable.
        We note that there are variants of our attack that do not require any slashable
        adversarial actions, but these variants are more involved.)
    
    \item
        For every but the last slot of epoch $0$ the adversary
        recruits two `swayers'. The role of these swayers is to withhold
        their votes in slot $i$ and release the votes selectively to subsets of
        the honest validators in slot $i+1$ in order to split the honest validators' votes equally
        between the two chains.
    
    \item
        For every slot of epoch $0$ the adversary
        recruits two more `swayers'. The role of these additional swayers is to withhold
        their votes during slot $i$ of epoch $0$ and release the votes selectively to subsets of
        the honest validators in slot $C+i$ of epoch $1$ in order to split the honest validators' votes equally
        between the two chains in epoch $1$.
        Similarly, these swayers withhold
        their votes during epoch $1$ and release the votes selectively to subsets of
        the honest validators in epoch $2$ in order to split the honest validators' votes equally
        between the two chains in epoch $2$.
        This repeats beyond epoch $2$.
    
    \item
        Finally, to achieve an equal split of honest validators' votes for every slot
        in epoch $0$, we require that every slot has an even number of honest validators.
        If a slot does not have an even number of honest validators, then the adversary
        recruits a `filler'
        (`\tikz{ \node[gasperattackplot,voter-adversarial-filler,scale=0.75] {}; }'
        in Figure~\ref{fig:attack-gasper-overview}) which behaves like an honest validator
        for the rest of the attack.
\end{itemize}

Thus, sufficient for an epoch to be opportune to start the attack
is that the following conditions are all satisfied:
\begin{enumerate}[label=(\alph*),leftmargin=1.25cm]
    \item[$\Ev_{\mathrm{(a)}}^{(0)}$:] The proposer of slot $0$ is adversarial.
    
    \item[$\Ev_{\mathrm{(b)}}^{(0)}$:] Slot $0$ has $\geq 6$ adversarial validators (the adversarial proposer, two swayers
        for epoch $0$, two swayers for epoch $1$, potentially one filler).
    
    \item[$\Ev_{\mathrm{(c)},i}^{(0)}$:] Slots $i = 1,...,(C-2)$ have $\geq 5$ adversarial validators
        (two swayers
        for epoch $0$, two swayers for epoch $1$, potentially one filler).
    
    \item[$\Ev_{\mathrm{(d)}}^{(0)}$:] Slot $(C-1)$ has $\geq 3$ adversarial validators
        (two swayers for epoch $1$, potentially one filler).
\end{enumerate}
We show in Appendix~\ref{sec:appendix-attack-gasper-analysis} that,
in particular in the regime of many validators ($n \to \infty$),
the probability that a particular epoch is opportune is approximately
equal to $\beta$,
the fraction of adversarial validators.

For slots $i=1,...,(C-1)$ of epoch $0$
the adversary uses two `swayers'to withhold
their votes in slot $i$ and release the votes selectively to equally sized subsets of
the honest validators in slot $i+1$ in order to split the honest validators' votes equally
between the two chains.
Thus, in each slot, an equal number of validators votes `left' and `right', respectively,
so that at the end of epoch $0$ both chains have equal weight.
In particular, none of the chains achieves a supermajority.
Thus, no Casper finalization can take place.

\subsubsection{Epoch \texorpdfstring{$1$}{1}:
Transition to Steady-State
}
\label{sec:appendix-attack-gasper-attack-epoch1}

During epoch $1$, the adversary uses the other group of swayers recruited in epoch $0$
to
selectively release more withheld votes from epoch $0$ to keep splitting validators
into two groups, one of which sees `left' as leading and votes for it,
the other sees `right' as leading and votes for it.
All the adversary needs to do is release withheld votes
so as to
reaffirm
the honest validators in their illusion that whatever chain they previously voted
on in epoch $0$
happens to be still leading, so that they renew their vote.
At the end of epoch $1$ there are still two chains
with equal number of votes and thus neither gets finalized.

\subsubsection{Epoch \texorpdfstring{$2$}{2} and Beyond:
Steady-State
}
\label{sec:appendix-attack-gasper-attack-epoch2}

During epoch $2$ and beyond the attack reaches steady-state
in that the adversarial actions now repeat in each epoch.
Note that the validators whose epoch $0$ votes the adversary released during epoch $1$
to sway honest validators have themselves not voted in epoch $1$ yet.
Thus, during epoch $2$ the adversary
selectively releases withheld votes from epoch $1$ to keep honest validators
split between the two chains.
Again, all the adversary needs to do is to release withheld votes such that it reaffirms
the honest validators in their illusion that whatever chain they previously voted
on in epoch $1$
happens to be still leading, so that they renew their vote.
This continues indefinitely.
Neither chain ever reaches a supermajority, thus, no Casper finalizations
take place.
As a result of this attack, the ledger of Gasper does not incorporate new transactions
and thus is not live.

\subsection{Analysis \& Simulation}
\label{sec:appendix-attack-gasper-analysis-and-simulation}
\label{sec:appendix-attack-gasper-analysis}

We analyze the probability $\Prob{\Ev^{(\ell)}}$ that an adversary can launch the attack in epoch $\ell$.
Without loss of generality, we consider $\ell = 0$.
Recall that the
events $\Ev_{\mathrm{(a)}}^{(0)}$ to $\Ev_{\mathrm{(d)}}^{(0)}$
are sufficient for the adversary to be able
to launch the attack.
Obviously,
\begin{IEEEeqnarray}{rCl}
    \Prob{\cEv_{\mathrm{(a)}}^{(0)}} &=& 1 - \beta.
\end{IEEEeqnarray}

For fixed $C$ and large $n$ such that $\beta n/C \geq 6$,
due to tail bounds for the hypergeometric distribution,
\begin{IEEEeqnarray}{rCl}
    \Prob{\cEv_{\mathrm{(b)}}^{(0)}}, \Prob{\cEv_{\mathrm{(c)},i}^{(0)}}, \Prob{\cEv_{\mathrm{(d)}}^{(0)}} &\leq& \exp\left(- \Theta(n) \right)
\end{IEEEeqnarray}

Thus, with a straightforward application of the union bound,
\begin{IEEEeqnarray}{rCl}
    \Prob{\Ev^{(\ell)}} &\geq& \beta - C \exp\left(- \Theta(n) \right).
\end{IEEEeqnarray}

Note that, since the events $\Ev^{(\ell_1)}$ and $\Ev^{(\ell_2)}$
of the adversary being able to kick-start
the attack in two epochs $\ell_1 \neq \ell_2$ are independent,
the number of epochs until the first epoch in which the adversary
can kick-start the attack follows a geometric distribution
with mean $1 / \Prob{\Ev^{(0)}}$.
It is thus exponentially unlikely (in the number of epochs considered)
that the adversary is not able to kick-start the attack in any
of a number of epochs,
even for small $\beta$.
As soon as an opportune epoch occurs and
the adversary can kick-start the attack, liveness
is prevented with certainty, assuming that the networking assumptions
given in Appendix~\ref{sec:appendix-attack-gasper-introduction-assumptions} are satisfied.
\label{sec:appendix-attack-gasper-simulation}

\begin{figure}
    \centering%
    \begin{tikzpicture}[]%
        \begin{axis}[
            mysimpleplot,
            xmode=log,
            xtick={50,100,200,500,1000},
            log ticks with fixed point,
            ymode=normal,
            xlabel={Committee size $n/C$},
            ylabel={$\Prob{\Ev^{(\ell)}}$},
            ymin=0, ymax=0.4,
            legend columns=2,
            transpose legend,
            height=0.4\linewidth,
            width=0.9\linewidth,
            y tick label style={
                /pgf/number format/.cd,
                    fixed,
                    precision=2,
                /tikz/.cd
            },
            x tick label style={
                /pgf/number format/.cd,
                    fixed,
                    precision=2,
                /tikz/.cd
            },
        ]

            \def\DATAPREFIX{./figures/pgfplot-opportune-epochs-2}

            \addlegendimage{myparula11,mark=none};
            \addlegendentry{$\beta=0.1$};
            
            \addlegendimage{myparula21,mark=none};
            \addlegendentry{$\beta=0.3$};
            
            \addlegendimage{myparula11,gray};
            \addlegendentry{$C=8$};
            
            \addlegendimage{myparula12,gray};
            \addlegendentry{$C=16$};
            
            \addlegendimage{myparula13,gray};
            \addlegendentry{$C=32$};
            
            \addlegendimage{myparula14,gray};
            \addlegendentry{$C=64$};

            \addplot [myparula11] table [x=N_multiple,y=probability]
            {\DATAPREFIX/beta0.1-C8.dat};
            \label{leg:simulation-beta0.1-C8}

            \addplot [myparula12] table [x=N_multiple,y=probability]
            {\DATAPREFIX/beta0.1-C16.dat};
            \label{leg:simulation-beta0.1-C16}

            \addplot [myparula13] table [x=N_multiple,y=probability]
            {\DATAPREFIX/beta0.1-C32.dat};
            \label{leg:simulation-beta0.1-C32}

            \addplot [myparula14] table [x=N_multiple,y=probability]
            {\DATAPREFIX/beta0.1-C64.dat};
            \label{leg:simulation-beta0.1-C64}

            \addplot [myparula21] table [x=N_multiple,y=probability]
            {\DATAPREFIX/beta0.3-C8.dat};
            \label{leg:simulation-beta0.3-C8}

            \addplot [myparula22] table [x=N_multiple,y=probability]
            {\DATAPREFIX/beta0.3-C16.dat};
            \label{leg:simulation-beta0.3-C16}

            \addplot [myparula23] table [x=N_multiple,y=probability]
            {\DATAPREFIX/beta0.3-C32.dat};
            \label{leg:simulation-beta0.3-C32}

            \addplot [myparula24] table [x=N_multiple,y=probability]
            {\DATAPREFIX/beta0.3-C64.dat};
            \label{leg:simulation-beta0.3-C64}

        \end{axis}
    \end{tikzpicture}%
    \vspace{-0.75em}%
    \caption[]{%
        Monte Carlo estimate of the
        probability that an adversary who controls $\beta$ fraction of
        stake can launch the attack in epoch $\ell$,
        as a function of number of slots per epoch $C$
        and committee size $n/C$.
        Observe that $\Prob{\Ev^{(\ell)}} \approx \beta$
        is a good rule of thumb, even for moderate $n$.
    }
    \label{fig:attack-gasper-simulation}
\end{figure}
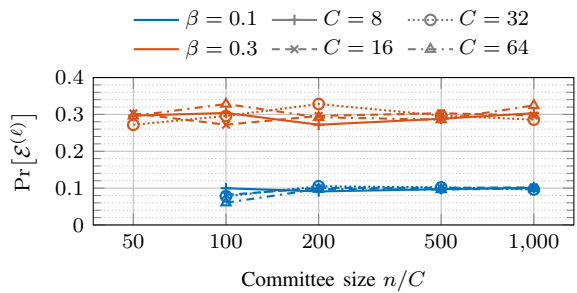

We use a Monte Carlo simulation to numerically evaluate
the probability $\Prob{\Ev^{(\ell)}}$.\footnote{%
The source code of the simulation
can be found at: \url{https://github.com/tse-group/gasper-attack}.
}
The result is shown
in Figure~\ref{fig:attack-gasper-simulation}.
We observe that
the approximation $\Prob{\Ev^{(\ell)}} \approx \beta$
is a pretty good rule of thumb, even for moderate
numbers of validators.
This matches the intuition that the probability
of successfully kick-starting the attack in a given epoch
is largely dominated by the probability
that the proposer in the first slot of the epoch is adversarial.
All further conditions are satisfied as soon as there are six
adversarial validators per each slot, which happens with high
probability as $n$ grows and $\beta$ is held fix.
\section{Analysis and Security Proof for the \SAC Construction Using Streamlet}
\label{sec:analysis-streamlet}

We prove Theorem \ref{thm:main-security} for the protocol $\PItheexample{Sleepy}{Streamlet}$ composing a permissioned longest chain protocol and Streamlet.

\begin{lemma}[Safety Lemma for $\PIbft$]
\label{thm:streamlet-safety}
(See \cite[Lemma 14, Theorem 3]{streamlet} and Algorithm~\ref{algo:pseudocode-sleepy-streamlet})
If some honest node sees a notarized chain with three adjacent \blocksbft $B_0$, $B_1$, $B_2$ with consecutive epoch numbers $e$, $e+1$, and $e+2$, then there cannot be a conflicting block $B \neq B_1$ that also gets notarized in any honest view at the same depth as $B_1$.
Hence, there cannot be conflicting \finalp \blocksbft in %
any honest view.
\end{lemma}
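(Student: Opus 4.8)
The plan is to reduce the statement to the original Streamlet safety theorem and then verify that the single modification to the voting rule is harmless. First I would recall the structure of Streamlet's safety argument \cite[Lemma~14, Theorem~3]{streamlet}. It rests on three facts, none of which concerns the \emph{payload} of the \blocksbft: (i) with $f < n/3$ corruptions, any two notarization quorums, each of size $\geq 2n/3$, intersect in $\geq 2\cdot 2n/3 - n = n/3 > f$ nodes and therefore share at least one honest node; (ii) an honest node casts at most one vote per epoch; and (iii) an honest node votes for a proposed block only if it extends a longest notarized chain in that node's local view. From these one shows, by a case analysis on the epoch of a hypothetical conflicting notarized block at the same depth as $B_1$, that no such block can accumulate $2n/3$ votes.

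Next I would spell out the case analysis, deferring the details to \cite[Lemma~14]{streamlet}. Writing $k$ for the depth of $B_1$ (so that $B_0$ and $B_2$ sit at depths $k-1$ and $k+1$), suppose toward a contradiction that a block $B \neq B_1$ conflicting with $B_1$ is notarized at depth $k$ in some honest view, with epoch $e_B$. If $e_B = e+1$, then the honest node guaranteed by (i) in the intersection of the notarizing quorums of $B_1$ and $B$ must have voted for both in epoch $e+1$, contradicting (ii). Otherwise one traces, via (iii), the longest notarized chains witnessed by the honest voters of $B_1$ and $B_2$; since these voters saw notarized prefixes whose epochs increase strictly along the chain, the consecutiveness of $e, e+1, e+2$ together with quorum intersection leaves no room for a competing notarized block at depth $k$. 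I would cite this argument rather than reproduce it in full.

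Finally, and this is the crux, I would check that the extra voting condition on line~\ref{algo:pseudocode-sleepy-streamlet-condition} of Algorithm~\ref{algo:pseudocode-sleepy-streamlet} -- an honest node votes for a proposed \blockbft $B$ only if additionally $B.\PayloadCh \preceq \Chlc{i}{t}$ -- leaves the above intact. The key observation is monotonicity: this condition only \emph{further restricts} honest voting, so it can cause an honest node to abstain but never to vote for a block it would not otherwise have endorsed. Hence facts (i)--(iii) all survive: the notarization threshold is unchanged, so (i) holds; honest nodes still vote at most once per epoch, so (ii) holds; and the implication in (iii), ``an honest node voted for $B$ $\Rightarrow$ $B$ extends a longest notarized chain,'' remains valid (indeed it is strengthened, since the hypothesis now also implies the snapshot was viewed as confirmed). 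Because the safety argument uses these facts only in the direction ``a block is notarized, hence some honest node voted, hence the conditions held,'' and never in the converse (liveness) direction, a strictly more conservative honest voting rule cannot manufacture new notarizations or new conflicts. The main obstacle is precisely pinning down this monotonicity; once it is established, no block conflicting with $B_1$ is notarized at $B_1$'s depth, and the concluding sentence follows, since two conflicting finalized \blocksbft would -- via the three-consecutive-epoch finalization pattern and the fact that every prefix of a notarized chain is itself notarized -- yield two conflicting notarized blocks at a common depth, which we have just excluded.
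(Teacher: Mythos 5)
Your proposal is correct and follows essentially the same route as the paper: both reduce the claim to Streamlet's Lemma~14 and Theorem~3 and observe that the extra voting condition of line~\ref{algo:pseudocode-sleepy-streamlet-condition} only further restricts honest voting, so the quorum-intersection and no-equivocation facts underlying Streamlet's safety argument are preserved. Your explicit articulation of the monotonicity point (a more conservative honest voting rule cannot create new notarizations) is exactly the observation the paper's one-paragraph proof relies on, just spelled out in more detail.
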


\begin{proof}
The proof of \cite[Lemma 14]{streamlet}, which is based on a quorum intersection argument, is unaffected by the fact that honest nodes do not vote for a proposed \blockbft if they do not view the referenced \blocklc as \finalda.
Even with the modification shown at line~\ref{algo:pseudocode-sleepy-streamlet-condition} of
Algorithm~\ref{algo:pseudocode-sleepy-streamlet}, honest nodes would not equivocate or vote for proposed \blocksbft that do not extend the longest notarized chain.
Then, via \cite[Theorem 3]{streamlet}, there cannot be conflicting \finalp \blocksbft in the views of honest nodes.
\end{proof}

By the ledger extraction
explained in
Figure~\ref{fig:ledger-extraction-details},
Lemma \ref{thm:streamlet-safety}
completes the proof of safety for $\LOGfin{}{}$.

\begin{lemma}
\label{thm:streamlet-liveness-1}
(See \cite[Lemma 5]{streamlet} and Algorithm~\ref{algo:pseudocode-sleepy-streamlet})
After $\max\{\GST,\GOT\}$, suppose there are three consecutive epochs $e$, $e+1$, and $e+2$, all with honest leaders denoted by $L_e$, $L_{e+1}$, and $L_{e+2}$, \textbf{and the leaders' proposals reference \blockslc that are viewed as \finalda by all honest nodes.} 
Then the following holds: 
(Below, let $B$ denote the block proposed by $L_{e+2}$ during epoch $e+2$.)

\begin{enumerate}[label=(\alph*)]
    
    \item By the beginning of epoch $e+3$, every honest node will observe a notarized chain ending at $B$, which was not notarized before the beginning of epoch $e$.
        
    \item No conflicting block $B' \neq B$ with the same length as $B$ will ever get notarized in honest view.

\end{enumerate}
\end{lemma}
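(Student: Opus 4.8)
The plan is to reduce this lemma to the corresponding liveness lemma for vanilla Streamlet, namely \cite[Lemma 5]{streamlet}, by showing that the additional voting constraint introduced at line~\ref{algo:pseudocode-sleepy-streamlet-condition} of Algorithm~\ref{algo:pseudocode-sleepy-streamlet} is never the reason an honest node abstains from voting for the three honest proposals. The only way in which $\PIbft$ here differs from vanilla Streamlet is that an honest node votes for a proposed \blockbft $P.B$ only if it additionally views $P.B.\PayloadCh$ as \finalda in its copy of $\Chlc{i}{t}$. Thus, if I can argue that every honest node satisfies this extra predicate at the instant it would vote for each of the three proposals, then the votes honest nodes cast in $\PIbft$ coincide exactly with the votes they would cast in vanilla Streamlet, and the entire argument of \cite[Lemma 5]{streamlet} transfers verbatim.

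First I would fix the setting. Since we are past $\max\{\GST,\GOT\}$, all honest nodes are awake and, because honest nodes echo messages (line~\ref{algo:pseudocode-sleepy-streamlet-echo-msgs}), any message seen by one honest node is seen by all honest nodes within $\Delta$ slots; moreover fewer than $\tfrac13$ of the nodes are adversarial, so the honest nodes alone form a $>\tfrac23$ voting quorum. Next I would invoke the highlighted hypothesis: the proposals of $L_e$, $L_{e+1}$, $L_{e+2}$ reference \blockslc that are viewed as \finalda by all honest nodes. Because $\PIlc$ is safe after $\max\{\GST,\GOT\}$ (the security of $\PIlc$ after $\max\{\GST,\GOT\}$, Theorem~\ref{thm:pos-security-under-advp-main}), a \blocklc that is \finalda in one honest view remains a prefix of every honest node's confirmed chain thereafter, so the predicate $P.B.\PayloadCh \preceq \Chlc{i}{t}$ checked at line~\ref{algo:pseudocode-sleepy-streamlet-condition} holds for every honest node $i$ precisely when it reaches the vote phase of each of the three epochs. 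Consequently the extra condition is vacuous for these three proposals, and honest voting behavior is identical to vanilla Streamlet.

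With honest voting pinned down, I would replay the argument of \cite[Lemma 5]{streamlet}: the combination of three consecutive honest proposals, each extending the longest notarized chain in the proposer's (and, after synchronization, everyone's) view, forces a fresh notarized chain ending at $B$ to be observed by every honest node by the start of epoch $e+3$, the cross-epoch synchronization of honest views being exactly what \cite{streamlet} establishes; this gives part~(a), including that the chain was not notarized before the beginning of epoch $e$. Part~(b) then follows from the same quorum-intersection reasoning that underlies the safety direction in Lemma~\ref{thm:streamlet-safety}: a conflicting $B' \neq B$ of the same length as $B$ could only be notarized with the help of an honest vote, but any honest node contributing such a vote would have to contradict its vote for the honest supermajority-backed $B$ or violate the longest-notarized-chain voting rule.

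I expect the only genuine subtlety — and hence the main obstacle — to be the timing of the confirmation predicate: I must ensure that ``viewed as \finalda by all honest nodes'' is not merely an instantaneous statement but holds persistently through each epoch's vote phase. Monotonicity of the confirmed ledger under safety of $\PIlc$ after $\max\{\GST,\GOT\}$ is precisely what rescues this, and I would state that dependence explicitly so that the reduction to \cite[Lemma 5]{streamlet} is clean and the modified voting rule of line~\ref{algo:pseudocode-sleepy-streamlet-condition} is seen to be inert for the three honest proposals.
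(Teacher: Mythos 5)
Your proposal is correct and follows essentially the same route as the paper: the highlighted hypothesis makes the extra voting condition of line~\ref{algo:pseudocode-sleepy-streamlet-condition} vacuous for the three honest proposals, so honest nodes behave exactly as in vanilla Streamlet and \cite[Lemma 5]{streamlet} yields both (a) and (b). Your added care about the confirmation predicate persisting through each vote phase (via safety of $\PIlc$ after $\max\{\GST,\GOT\}$) is a reasonable elaboration of a point the paper's terser proof simply takes for granted, but it does not change the argument.
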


\begin{proof}
Note that every honest node is awake and the network is $\Delta$ synchronous after $\max\{\GST,\GOT\}$.
Due to the \textbf{highlighted condition} added to the Lemma, %
all honest nodes view the \blockslc referenced by the proposals as \finalda, thus, the additional condition for an honest node to cast a vote (see line~\ref{algo:pseudocode-sleepy-streamlet-condition} of Algorithm~\ref{algo:pseudocode-sleepy-streamlet}) is satisfied. 
Then, all honest nodes behave as they would in Streamlet, and the liveness lemma \cite[Lemma 5]{streamlet} ensures the validity of (a) and (b).
\end{proof}

\begin{lemma}[Liveness Lemma for $\PIbft$]
\label{thm:streamlet-liveness}
After $\max\{\GST,\GOT\}$, suppose that there are five consecutive epochs $e,e+1,..,e+4$ with honest leaders and \textbf{the leaders' proposals reference \blockslc that are viewed as \finalda by all honest nodes.}
Then, by the beginning of epoch $e+5$, every honest node observes a new \finalp \blockbft, proposed by an honest leader, that was not \finalp at the beginning of epoch $e$.
\end{lemma}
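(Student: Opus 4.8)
The plan is to reduce the statement to three overlapping applications of Lemma~\ref{thm:streamlet-liveness-1} followed by the Streamlet finalization rule (three adjacent notarized blocks with consecutive epoch numbers finalize the middle one together with its prefix). Denote by $B_k$ the block proposed by the honest leader $L_k$ in epoch $k$, for $k \in \{e, e+1, e+2, e+3, e+4\}$. Since all five leaders are honest and, by hypothesis, each of their proposals references an \blocklc that all honest nodes view as \finalda, the extra voting condition on line~\ref{algo:pseudocode-sleepy-streamlet-condition} of Algorithm~\ref{algo:pseudocode-sleepy-streamlet} is satisfied for every one of these proposals. Hence each triple of consecutive epochs contained in $\{e,\dots,e+4\}$ meets the hypotheses of Lemma~\ref{thm:streamlet-liveness-1}, in particular its highlighted snapshot-reference condition.

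First I would apply Lemma~\ref{thm:streamlet-liveness-1} to the triple $(e, e+1, e+2)$: part~(a) gives that by the beginning of epoch $e+3$ every honest node observes a notarized chain ending at $B_{e+2}$ which was not notarized before epoch $e$, and part~(b) gives that $B_{e+2}$ is the unique notarized block at its depth. Consequently, when the honest leader $L_{e+3}$ runs its proposal rule at the start of epoch $e+3$, the longest notarized chain in its view ends at $B_{e+2}$, so $B_{e+3}$ extends $B_{e+2}$; the two blocks carry consecutive epoch numbers $e+2,e+3$ and lie at consecutive depths. Repeating the argument with the triples $(e+1, e+2, e+3)$ and $(e+2, e+3, e+4)$ yields, in the same way, that $B_{e+4}$ extends $B_{e+3}$ and that by the beginning of epoch $e+5$ every honest node observes a notarized chain ending at $B_{e+4}$.

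At this point every honest node sees, by the beginning of epoch $e+5$, three adjacent notarized \blocksbft forming $B_{e+2} \to B_{e+3} \to B_{e+4}$ with consecutive epoch numbers $e+2, e+3, e+4$. Streamlet's finalization rule then finalizes the middle block $B_{e+3}$, together with its prefix, in every honest view. This $B_{e+3}$ is proposed by the honest leader $L_{e+3}$, and since it was first proposed in epoch $e+3$ (and, by Lemma~\ref{thm:streamlet-liveness-1}(a), the chain through $B_{e+2}$ was not even notarized before epoch $e$), it was certainly not \finalp at the beginning of epoch $e$. This establishes the claimed new finalization by an honest leader.

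The main obstacle I anticipate is the chaining step: showing that each honest leader $L_{k+1}$ genuinely extends the previous honest block $B_k$, so that $B_{e+2}, B_{e+3}, B_{e+4}$ become adjacent at consecutive epochs rather than three separately notarized blocks. This is where I must combine part~(b) of Lemma~\ref{thm:streamlet-liveness-1} (uniqueness of the notarized block at the relevant depths, ruling out a competing notarized chain at those heights) with the fact that honest leaders propose on the longest notarized chain in their view, which by part~(a) all honest nodes share at the start of each of these epochs. Everything else is bookkeeping mirroring the original Streamlet liveness argument \cite{streamlet}; the only new ingredient is that the snapshot-reference condition is carried along, so that the boycott rule of line~\ref{algo:pseudocode-sleepy-streamlet-condition} never withholds an honest vote.
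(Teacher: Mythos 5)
Your proposal is correct and follows essentially the same route as the paper: the paper's proof is a one-line deferral to Lemma~\ref{thm:streamlet-liveness-1} together with \cite[Theorem~6]{streamlet}, and your argument is precisely a reconstruction of that Streamlet liveness proof (overlapping applications of the three-honest-epoch lemma, the uniqueness-at-depth claim to force each honest leader to extend the previous honest block, and the three-adjacent-notarized-blocks finalization rule), with the only new ingredient being that the highlighted snapshot-confirmation hypothesis keeps the voting condition of line~\ref{algo:pseudocode-sleepy-streamlet-condition} from withholding any honest vote. No gaps.
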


Proof follows from Lemma~\ref{thm:streamlet-liveness-1} and \cite[Theorem 6]{streamlet}.

Notice that Lemma~\ref{thm:streamlet-liveness}, by itself, is not sufficient to show the liveness of $\LOGfin{}{}$ after $\max\{\GST,\GOT\}$ under $\AdvEnvOneOpt$, due to the highlighted condition in the lemma's statement.
In this context, the following theorem shows that after $\max\{\GST,\GOT\}$, the \blockslc referenced by honest proposals in $\PIbft$ are viewed as \finalda by all honest nodes, thus, ensuring that the \textbf{highlighted condition} in the statement of Lemma~\ref{thm:streamlet-liveness} is satisfied after $\max\{\GST,\GOT\}$.
Although, Theorem~\ref{thm:pos-security-under-advp-main} below is stated for the \emph{static} version of the longest chain protocol described in \cite{sleepy}, a similar statement can be made for \cite{david2018ouroboros}.
$\PIlc$ is initialized with a parameter $p$ which denotes the probability that any given node gets to produce a block in any given time slot.

\begin{theorem}
\label{thm:pos-security-under-advp}
\label{thm:pos-security-under-advp-main}
For all
    \begin{IEEEeqnarray}{C}
        p < \frac{n-2f}{2\Delta n(n-f)},
    \end{IEEEeqnarray}
there exists a constant\footnote{Value of $\const$ depends on $p$, $n$, $f$ and $\Delta$.} $\const>0$ such that for any $\GST$ and $\GOT$ specified by $\AdvEnvOneOpt$, $\PIlc(p)$ is secure after $\const (\max\{\GST,\GOT\}+\sigma)$, with transaction confirmation time $\Tconfirm = \sigma$, except with probability $e^{-\Omega(\sqrt{\sigma})}$.%
\footnote{Using the recursive bootstrapping argument developed in \cite[Section~4.2]{dem20}, it is possible to bring the error probability $e^{-\Omega(\sqrt{\sigma})}$ as close to an exponential decay as possible. 
In this context, for any $\epsilon>0$, it is possible to find constants $A_{\epsilon}$, $a_{\epsilon}$ such that $\PIlc(p)$ is secure after $\const \max\{\GST,\GOT\}$ with confirmation time $\Tconfirm = \sigma$ except with probability $A_{\epsilon} e^{-a_{\epsilon} \sigma^{1-\epsilon}}$.}
\end{theorem}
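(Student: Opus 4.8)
The plan is to adapt the security analysis of the permissioned longest-chain protocol from \cite{sleepy}, which is stated for a purely synchronous, full-participation setting, to the partially synchronous and dynamically-participating environment $\AdvEnvOneOpt$. Following \cite{sleepy}, I would classify each slot as \emph{honest} (some awake honest node wins the block-production lottery) or \emph{adversarial} (some adversarial node wins), and isolate \emph{convergence opportunities} -- honest slots separated from every other honest slot by at least $\Delta$ on either side, so that the resulting block reaches and is adopted by all honest nodes before any competing honest block appears. The crucial modification for partial synchrony is that a slot can serve as a convergence opportunity only once all honest nodes are awake and the network is synchronous, i.e.\ only after $\max\{\GST,\GOT\}$; before that time a freshly produced honest block may never reach an isolated or sleeping honest node and hence cannot force convergence.

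First, I would recast the \emph{pivot} notion of \cite{sleepy} into a \emph{$\GST$-strong pivot}: a slot $r$ such that in \emph{every} interval $[t_1,t_2]\ni r$ the number of convergence opportunities in $[t_1,t_2]$ (all of which, by the above, lie after $\max\{\GST,\GOT\}$) strictly exceeds the number of adversarial slots in the \emph{entire} interval $[t_1,t_2]$, including those before $\max\{\GST,\GOT\}$. As in \cite{sleepy}, the existence of such a pivot forces the $\Tconfirm$-deep confirmed prefixes in all honest views to agree on the block produced around slot $r$, yielding safety, and guarantees that fresh honest blocks keep entering the confirmed chain, yielding liveness with confirmation time $\Tconfirm=\sigma$. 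This step is the standard Nakamoto/Sleepy chain-growth-and-convergence argument and carries over essentially unchanged once a pivot is in hand; I would only need to check that the added voting/confirmation structure does not invalidate it.

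The substantive work is to show that $\GST$-strong pivots appear, and appear within $O(\max\{\GST,\GOT\})$ slots of $\max\{\GST,\GOT\}$, except with the claimed probability. Here the asymmetry in the definition is the crux: during $[0,\max\{\GST,\GOT\}]$ the adversary may build privately at rate $\approx fp$ per slot while \emph{no} convergence opportunities accrue, so the honest side starts with a deficit of order $fp\cdot\max\{\GST,\GOT\}$ to erase. The hypothesis $p<\frac{n-2f}{2\Delta n(n-f)}$ is precisely what guarantees that, after $\max\{\GST,\GOT\}$, the expected rate of convergence opportunities (an isolated honest win, rate $\approx(n-f)p$ discounted by the $\Delta$-isolation factor) exceeds the adversarial rate ($\approx fp$) by a positive constant margin. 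A renewal/random-walk argument together with concentration of the per-slot indicators then shows that the cumulative surplus of convergence opportunities over adversarial slots turns and stays positive within $\const\cdot\max\{\GST,\GOT\}$ slots, furnishing a pivot, while the additive $\sigma$ in the security time accounts for the $\Tconfirm\sim\sigma$ confirmation depth. The main obstacle is exactly this overtaking estimate: the head start scales \emph{linearly} with $\max\{\GST,\GOT\}$ and the honest surplus accrues only at a constant rate, so one must show the linearly-growing deficit is erased in linearly-many slots \emph{uniformly} over all adversarially-chosen, randomness-dependent values of $\GST$ and $\GOT$, and bound the failure probability. The stated $e^{-\Omega(\sqrt{\sigma})}$ bound comes from the union-bound-over-interval-endpoints step of this pivot concentration estimate; as the footnote records, the recursive bootstrapping of \cite{dem20} sharpens it toward $e^{-\Omega(\sigma)}$.
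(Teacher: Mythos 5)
Your proposal follows essentially the same route as the paper's Appendix~C: the same asymmetric $\GST$-strong-pivot definition (convergence opportunities counted only after $\max\{\GST,\GOT\}$ versus adversarial slots over the whole interval), the same reduction of safety/liveness to pivot existence via the Sleepy machinery, and the same quantitative crux of erasing a head start of order $\beta\max\{\GST,\GOT\}$ at a constant surplus rate, which the paper carries out via a uniform-in-time concentration bound on $A[0,t]$ (a geometric-checkpoint union bound) combined with an overtaking estimate. The only cosmetic difference is that the paper's additive $\sigma$ in the security time arises from taking $s=\sigma$ in that uniform bound to make the failure probability negligible even for small $\GST,\GOT$, while the $e^{-\Omega(\sqrt{\sigma})}$ indeed traces back to the pivot-existence union bound inherited from Sleepy, as you surmise.
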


Full proof and the associated analysis can be found in Appendix~\ref{sec:sleepy-security}.
The proof extends the technique of {\em pivots} in \cite{sleepy} from the synchronous model to the partially synchronous model. 
The %
technique of Nakamoto blocks \cite{dem20} can be used to further strengthen the result to get an optimal bound for the block generation rate $p$ given $n$, $f$ and $\Delta$.

Finally, the following Lemma completes the proof of liveness for $\LOGfin{}{}$ after $\max\{\GST,\GOT\}$:

\begin{lemma}[Liveness Lemma for $\LOGfin{}{}$]
\label{thm:pibft-liveness}
There exists a constant $\const>0$ such that for any $\GST$ and $\GOT$ specified by $\AdvEnvOneOpt$, $\LOGfin{}{}$ is live after time $\const (\max\{\GOT,\GST\}+\sigma)$ except with probability $e^{-\Omega(\sqrt{\sigma})}$.
\end{lemma}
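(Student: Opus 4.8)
The plan is to discharge the one hypothesis of the liveness lemma for $\PIbft$ (Lemma~\ref{thm:streamlet-liveness}) that is not yet available, namely that every honest leader's proposal references an \blocklc viewed as \finalda by all honest nodes, and then to feed in the security of $\PIlc$ (Theorem~\ref{thm:pos-security-under-advp-main}) together with a standard run-length argument for honest leaders. First I would fix the constant $\const_1$ and set $T^* \triangleq \const_1(\max\{\GST,\GOT\}+\sigma)$, after which Theorem~\ref{thm:pos-security-under-advp-main} guarantees that $\PIlc$ is safe and live with confirmation time $\sigma$ except with probability $e^{-\Omega(\sqrt{\sigma})}$; I condition on this good event, so that it alone dominates the final failure probability. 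Note that $T^* \geq \max\{\GST,\GOT\}$, so after $T^*$ the network is synchronous and all honest nodes are awake, as required by Lemma~\ref{thm:streamlet-liveness}.

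The crux is to show that after $T^*$ the highlighted hypothesis holds automatically for every honest proposal. When an honest leader proposes after $T^*$, its referenced snapshot $B.\PayloadCh$ equals the leader's own confirmed chain and is therefore \finalda in the leader's view at the propose time. By the echoing of messages (line~\ref{algo:pseudocode-sleepy-streamlet-echo-msgs} of Algorithm~\ref{algo:pseudocode-sleepy-streamlet}), within one delay $\Delta$ every honest node holds all \blockslc the leader holds, so each honest node's longest chain is at least as long as the leader's; safety of $\PIlc$ rules out a conflicting confirmation, and the depth-based confirmation rule then forces each honest node to view the snapshot as \finalda as well. Since the Streamlet propose--vote gap exceeds $\Delta$, the voting condition on line~\ref{algo:pseudocode-sleepy-streamlet-condition} is satisfied at every honest voter by the start of the vote phase, and honest proposals behave exactly as in vanilla Streamlet. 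This propagation step---turning ``confirmed in the leader's view'' into ``confirmed in all honest views by the vote phase''---is the part I expect to require the most care, since it must simultaneously invoke safety (to exclude conflicting confirmations), liveness and block-echoing (to make lagging nodes catch up), and the timing of Streamlet's phases.

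With the hypothesis discharged, Lemma~\ref{thm:streamlet-liveness} applies verbatim after $T^*$: any five consecutive epochs with honest leaders produce, by the beginning of the next epoch, a fresh \finalp \blockbft proposed by an honest leader. As leaders are drawn pseudo-randomly and at most a fraction $\frac13$ of nodes are adversarial under $\AdvEnvOneOpt$, each epoch has an honest leader with probability at least $\frac23$, so a window of $m$ epochs fails to contain five consecutive honest leaders only with probability exponential in $m$; choosing $m = \Theta(\sqrt{\sigma})$ keeps this below $e^{-\Omega(\sqrt{\sigma})}$ while adding only $O(\sqrt{\sigma})$ to the confirmation delay.

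Finally I would trace a transaction through the pipeline to read off liveness. Take $T \triangleq \const(\max\{\GST,\GOT\}+\sigma)$ with $\const \geq \const_1$ large enough to also cover the epoch window. A transaction received by an honest node at time $t \geq T$ is \finalda in $\PIlc$ in all honest views by $t+\sigma$ by liveness of $\PIlc$, after which every honest leader's snapshot contains it. By the previous paragraph a new honestly-proposed \finalp \blockbft appears within $O(\sqrt{\sigma})$ further time, and because its snapshot is taken after the transaction was confirmed, that snapshot contains the transaction; hence the transaction enters $\LOGfin{}{}$ after the \flatten/\sanitize extraction of Figure~\ref{fig:ledger-extraction-details}. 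Adding the $e^{-\Omega(\sqrt{\sigma})}$ failure of $\PIlc$-security to the $e^{-\Omega(\sqrt{\sigma})}$ failure of the honest-leader run gives the stated bound, establishing that $\LOGfin{}{}$ is live after $\const(\max\{\GST,\GOT\}+\sigma)$.
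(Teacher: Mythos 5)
Your proposal is correct and follows essentially the same route as the paper's proof: discharge the highlighted hypothesis of Lemma~\ref{thm:streamlet-liveness} via $\Delta$-propagation of confirmed status after $\const(\max\{\GST,\GOT\}+\sigma)$ (using Theorem~\ref{thm:pos-security-under-advp-main}), wait for five consecutive honest epochs, and trace a transaction through LC-confirmation into a finalized snapshot. The only cosmetic difference is that you locate the honest-leader run in a $\Theta(\sqrt{\sigma})$-epoch window (an $e^{-\Omega(\sqrt{\sigma})}$ failure term) whereas the paper searches the interval $[s+\sigma/2,s+\sigma]$ (an $e^{-\Omega(\sigma)}$ term); both are absorbed into the stated $e^{-\Omega(\sqrt{\sigma})}$ bound.
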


\begin{proof}
Via Theorem~\ref{thm:pos-security-under-advp-main}, there exists a constant $\const>0$ such that for any $\GST$ and $\GOT$ specified by $\AdvEnvOneOpt$, $\PIlc$ is safe and live, with confirmation time $\sigma$, after time $\const (\max\{\GOT,\GST\}+\sigma)$ except with probability $e^{-\Omega(\sqrt{\sigma})}$.
Hence, the following observation is true for any \blocklc $b$ except with probability $e^{-\Omega(\sqrt{\sigma})}$: 
If $b$ is first viewed as \finalda by an honest node at some time $t>\const (\max\{\GOT,\GST\}+\sigma)$, then, it will be regarded as \finalda in the views of all of the honest nodes by time $t+\Delta$.

Each \blockbft proposed by an honest leader at time $t$ references the deepest \finalda \blocklc in the view of the leader at time $t$.
Moreover, honest nodes vote $\Delta$ time into an epoch, \ie, $\Delta$ time after they see a proposal.
Hence, after time $\const(\max\{\GOT,\GST\}+\sigma)$, all of the proposals by honest leaders in $\PIbft$ reference \blockslc that are viewed as \finalda by all honest nodes when they vote, except with probability $e^{-\Omega(\sqrt{\sigma})}$.
Finally, via Lemma~\ref{thm:streamlet-liveness}, after time $\const(\max\{\GOT,\GST\}+\sigma)$, every honest node observes a new \finalp \blockbft proposed by an honest leader after all of the five consecutive honest epochs, except with probability $e^{-\Omega(\sqrt{\sigma})}$.

Next, consider a time interval $[s,s+\sigma]$ such that $s>\const(\max\{\GOT,\GST\}+\sigma)$.
Since the proposer of an epoch in $\PIbft$ is determined uniformly at random among all of the nodes, after time $\GOT$, any epoch has an honest proposer independent from other epochs, with probability at least $2/3$ under $\AdvEnvOneOpt$.
Hence, there exists a sequence of five consecutive honest epochs within the interval $[s+\sigma/2,s+\sigma]$ except with probability $e^{-\Omega(\sigma)}$.
Then, every honest node observes a new \finalp \blockbft proposed by an honest leader within the interval $[s+\sigma/2,s+\sigma]$ except with probability $e^{-\Omega(\sigma)}$.

Finally, via the liveness of $\PIlc$ after $\const(\max\{\GOT,\GST\}+\sigma)$, a transaction $\tx$ received by an awake honest node at time $s$ will be included in a \finalda \blocklc $b'$ by time $s+\sigma/2$ except with probability $e^{-\Omega(\sqrt{\sigma})}$.
Now, let $b$ denote the \finalda \blocklc referenced by the new \finalp \blockbft that was proposed by an honest node within the interval $[s+\sigma/2,s+\sigma]$. 
Via the safety $\PIlc$, we know that $b$ extends $b'$ containing the transaction $\tx$ except with probability $e^{-\Omega(\sqrt{\sigma})}$.
Consequently, any transaction received by an honest node at some time $s>\const(\max\{\GOT,\GST\}+\sigma)$ becomes part of $\LOGfin{}{}$ in the view of any honest node $i$, by time $s+\sigma$, except with probability $e^{-\Omega(\sigma)}+e^{-\Omega(\sqrt{\sigma})} = e^{-\Omega(\sqrt{\sigma})}$. 
This concludes the proof.
\end{proof}

The following Lemma shows the consistency of $\LOGfin{}{}$ with the output of $\PIlc$ under $\AdvEnvTwoOpt$, which is a necessary condition for the safety of $\LOGda{}{}$.
\begin{lemma}
\label{thm:pibft-safe-under-2}
$\LOGfin{}{}$ is a safe prefix of the output of $\PIlc$ in the view of every honest node at all times under $\AdvEnvTwoOpt$ except with probability $e^{-\Omega(\sqrt{\sigma})}$.
\end{lemma}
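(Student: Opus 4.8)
The plan is to exploit the asymmetry that, under $\AdvEnvTwoOpt$, $\PIlc$ is safe (box~6 of Figure~\ref{fig:proof-layout-2}, given) even though $\PIbft$ need not be, and to show that the modified voting rule pins every finalized snapshot onto the single unforked \finalda chain of $\PIlc$. First I would fix the total corruption budget: although $\AdvEnvTwoOpt$ constrains only the \emph{awake} fraction, the requirement that at all times at most half of the awake nodes are adversarial while at least one honest node is awake — applied to the instant when the adversary keeps all $f$ corrupted nodes awake — forces the number of awake honest nodes to be at least $f$, hence $n-f \geq f$, i.e. $f \leq n/2 < \tfrac{2}{3}n$. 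Thus any Streamlet notarization quorum of size $\geq \tfrac{2}{3}n$ necessarily contains at least one honest vote.

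Next I would chase a single finalized snapshot. Any $\blockbft$ $B$ appearing in some honest node's final chain $\Chbft{}{}$ is \finalp, hence notarized, hence received $\geq \tfrac{2}{3}n$ votes; by the counting above at least one of these came from an honest node. By the added voting condition (line~\ref{algo:pseudocode-sleepy-streamlet-condition}: an honest node votes for $B$ only if $B.\PayloadCh \preceq \Chlc{}{t}$ in its own view at voting time), that honest voter viewed the snapshot $B.\PayloadCh$ as \finalda in $\PIlc$ at some earlier time $t_k$. Invoking safety of $\PIlc$ (which holds except with probability $e^{-\Omega(\sqrt\sigma)}$), the \finalda chain $B.\PayloadCh$ is then comparable with $\Chlc{j}{s}$ for every honest $j$ and time $s$: the family of \finalda chains is totally ordered by $\preceq$, and $B.\PayloadCh$ is a prefix of a member of this family, so it forks off from no honest \finalda chain.

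With this established for each finalized snapshot, I would conclude globally. Every snapshot referenced by a $\blockbft$ in any honest node's $\Chbft{}{}$ lies on the single common \finalda chain of $\PIlc$; consequently all such snapshots are mutually consistent and totally ordered by $\preceq$. Flattening and sanitizing $\Chbft{i}{t}$ (Figure~\ref{fig:ledger-extraction-details}) therefore collapses to the longest snapshot it contains, which is a prefix of the $\PIlc$ output in every honest view; and because two prefixes of a common chain are comparable, $\LOGfin{i}{t}$ and $\LOGfin{j}{t'}$ are comparable across all honest nodes and times, yielding both the claimed prefix property and the safety of $\LOGfin{}{}$ — crucially without ever assuming safety of $\PIbft$. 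The error probability $e^{-\Omega(\sqrt\sigma)}$ is inherited entirely from the single application of $\PIlc$ safety; the quorum and ledger-extraction steps are deterministic.

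I expect the main obstacle to be this last point: arguing correctly that a \emph{forking} $\PIbft$ cannot corrupt $\LOGfin{}{}$. Unlike the standard Streamlet safety proof, here different honest nodes may hold conflicting $\Chbft{}{}$; the argument must show that all the conflicting BFT histories nevertheless reference only snapshots drawn from one linearly-ordered $\PIlc$ chain, so that $\flatten$ followed by $\sanitize$ always returns comparable ledgers. Care is also needed that the honest vote witnessing confirmation occurs no later than finalization (so that $t_k$ precedes the time at which the snapshot is used and $\PIlc$ safety applies in the right temporal direction), and that ledger extraction over possibly out-of-order but nested snapshots indeed returns their longest element after sanitization.
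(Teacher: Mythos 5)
Your proposal is correct and follows essentially the same route as the paper's proof: safety of $\PIlc$ totally orders all \finalda chains, every \finalp \blockbft requires at least one honest vote (since $f \le n/2 < \tfrac{2}{3}n$ under $\AdvEnvTwoOpt$), the modified voting rule ensures that honest vote witnesses the snapshot as \finalda, and hence all finalized snapshots across all (possibly conflicting) honest BFT views are nested prefixes of the common $\PIlc$ output, so flattening and sanitizing yields mutually comparable ledgers that prefix the $\PIlc$ output. Your explicit derivation of the corruption bound and the remark on the order-insensitivity of $\flatten$ followed by $\sanitize$ over nested snapshots are details the paper leaves implicit, but the argument is the same.
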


\begin{proof}
Via the security of $\PIlc$ under $\AdvEnvTwoOpt$, if any two honest nodes $i$ and $j$ view $b_i$ and $b_j$ as \finalda (at any time), either $b_i \preceq b_j$ or $b_j \preceq b_i$, except with probability $e^{-\Omega(\sqrt{\sigma})}$.
Moreover, for a \blockbft to become \finalp in the view of an honest node $i$ under $\AdvEnvTwoOpt$, at least one vote from an honest node is required, and honest nodes only vote for a \blockbft if they view the referenced \blocklc as \finalda.
Hence, given any two honest nodes $i$ and $j$, if \blockslc $b_i$ and $b_j$ are referenced by the \blocksbft $B_i$ and $B_j$ that are \finalp in the views of $i$ and $j$ respectively, then either $b_i \preceq b_j$ or $b_j \preceq b_i$.
This is true even if the \blocksbft $B_i$ and $B_j$ conflict with each other in the output of $\PIbft$ (see Figure~\ref{fig:examples-one-protocol-unsafe}).

Since the \blockslc referenced by \finalp \blocksbft in the view of an honest node $i$ does not conflict with the \blockslc referenced by \finalp \blocksbft in the view of any other honest node $j$ under $\AdvEnvTwoOpt$ (even when these \blocksbft might be conflicting), the ledgers $\LOGfin{i}{t}$ and $\LOGfin{j}{t'}$ also do not conflict for $i$ and $j$ at any times $t,t'$, except with probability $e^{-\Omega(\sqrt{\sigma})}$.
Finally, since the ledgers $\LOGfin{}{}$ are constructed from \finalda snapshots of the prefix of the output of $\PIlc$ which is safe, $\LOGfin{}{}$ is a safe prefix of the output of $\PIlc$ at any time and in the view of any honest node under $\AdvEnvTwoOpt$, except with probability $e^{-\Omega(\sqrt{\sigma})}$.
\end{proof}

Finally, we can start the main proof for Theorem~\ref{thm:main-security}.

\begin{proof}
We first observe via Lemma~\ref{thm:streamlet-safety} that $\PIbft$ is safe at all times under $\AdvEnvOneOpt$.
Then, since the ledger extraction for $\LOGfin{}{}$
(Section~\ref{sec:ledger-extraction-details})
preserves the safety of $\PIbft$, $\LOGfin{}{}$ is safe under $\AdvEnvOneOpt$ as well.
Second, via Lemma~\ref{thm:pibft-liveness}, there exists a constant $\const>0$ such that for any $\GST$ and $\GOT$ specified by $\AdvEnvOneOpt$, $\LOGfin{}{}$ is live after time $\const (\max\{\GOT,\GST\}+\sigma)$ except with probability $e^{-\Omega(\sqrt{\sigma})}$.
Consequently, under $\AdvEnvOneOpt$, $\LOGfin{}{}$ is safe with probability $1$ and live after time $\const (\max\{\GOT,\GST\}+\sigma)$ except with probability $e^{-\Omega(\sqrt{\sigma})}$.
This shows the property \textbf{P1}. %

Via \cite[Theorem~3, Lemma~1]{sleepy}, $\PIlc$ is secure with $\Tconfirm=\sigma$ under $\AdvEnvTwoOpt$ for any $p < (n-f)/(2\Delta n(n-f))$, except with probability $e^{-\Omega(\sqrt{\sigma})}$.
Moreover, via Lemma~\ref{thm:pibft-safe-under-2}, $\LOGfin{}{}$ is a safe prefix of the output of $\PIlc$ in the view of any honest node, under $\AdvEnvTwoOpt$ except with probability $e^{-\Omega(\sqrt{\sigma})}$.
Observe that the ledger extraction for $\LOGda{}{}$
(Section~\ref{sec:ledger-extraction-details})
preserves the liveness of $\PIlc$ and ensures the safety of $\LOGda{}{}$ as long as $\LOGfin{}{}$ is a safe prefix of the output of $\PIlc$.
Consequently, $\LOGda{}{}$ is secure under $\AdvEnvTwoOpt$, except with probability $e^{-\Omega(\sqrt{\sigma})}$.
This shows the property \textbf{P2}. %

Finally, $\LOGfin{}{}$ is always a prefix of $\LOGda{}{}$ by construction,
concluding the proof of Theorem \ref{thm:main-security}.
\end{proof}

\section{Security Proof for Longest Chain Protocol
After \texorpdfstring{$\max\{\GST,\GOT\}$}{max(GST,GAT)}}
\label{sec:sleepy-security}

In this section, we formalize and prove the fact that security of $\PIlc(p)$ is restored after $\max\{\GST,\GOT\}$ under $\AdvEnvOneOpt$ provided that $p$,
the probability that a given node gets to propose an \blocklc at a given time slot,
is sufficiently small (Theorem~\ref{thm:pos-security-under-advp-main}).
This is a prerequisite for the liveness of $\LOGfin{}{}$.

\begin{onlyonarxiv}
To understand why the security of $\PIlc$ matters for the liveness of $\LOGfin{}{}$ (see Figure \ref{fig:proof-layout-1}), consider the following two example attacks.
In the first example, before $\max\{\GST,\GOT\}$, the adversary isolates all of the honest nodes or puts them to sleep so that they cannot build a chain of \blockslc.
The adversary simultaneously builds a long and private chain with empty \blockslc.
After $\max\{\GST,\GOT\}$, honest nodes wake up and the communication between them is restored, thus, they start building a chain. 
However, whenever they release an honest \blocklc, the adversary replaces it with one of the pre-mined empty \blockslc and prompts the honest miners to mine on that empty \blocklc, thus, attacking the quality of $\PIlc$'s output chain.
In this scenario, although finalization of \blocksbft can occur in $\PIbft$, the \finalp \blocksbft only reference empty \blockslc for a long time after $\max\{\GST,\GOT\}$, implying the loss of liveness for $\LOGfin{}{}$.
In the second example, adversary builds two conflicting private chains of \blockslc before $\max\{\GST,\GOT\}$ while the honest nodes are asleepy or isolated.
After $\max\{\GST,\GOT\}$, the adversary releases these pre-mined private chains block-by-block, thus, making the honest nodes switch back and forth between the two chains.
If the adversary releases new blocks at opportune times, then the honest nodes are not able to agree on \finalda \blockslc, and thus, no finalization occurs in $\PIbft$ for a long time after $\max\{\GST,\GOT\}$.
However, since the honest nodes can collectively grow a chain of \blockslc faster than the adversary after $\max\{\GST,\GOT\}$, the adversary cannot sustain the aforementioned attacks except for a limited period of time, as it would eventually run out of private \blockslc to release.
Hence, in this case, $\PIlc$ eventually gains its safety and liveness after $\max\{\GST,\GOT\}$. 
\end{onlyonarxiv}

Before we state the main theorem for the security of $\PIlc(p)$ after $\max\{\GST,\GOT\}$ under $\AdvEnvOneOpt$, we recall the notation from Section~\ref{sec:model} and introduce %
notation from \cite[Section~4.3]{sleepy}.
Recall that $n$ denotes the total number of nodes and
$f$ denotes the number of adversarial nodes.
Let $\beta$ be the expected number of adversary nodes elected leader in any single time slot of Sleepy. 
Observe that $\beta = pf$.
Let $\alpha$ be the expected number of awake honest nodes elected leader in any single time slot of Sleepy. 
Since every node is awake after $\GOT$, $\alpha = p(n-f)$ after $\GOT$.

Since $f<n/3$ under $\AdvEnvOneOpt$, for any given $f$, $n$ and $\Delta$, $p$ can be selected such that there exist constants $0<c<1$ and $0<\Phi$ for which
    \begin{IEEEeqnarray}{C}
        2pn\Delta < 1-c, \qquad
        \frac{n-f}{f} \geq \frac{1+\Phi}{1-2pn\Delta}.
    \end{IEEEeqnarray}
(This holds for any $p$ smaller than $(n-2f)/(2\Delta n(n-f))$.)

Then, we observe that for such a  $p$, after $\GOT$, 
    \begin{IEEEeqnarray}{C}
        \beta < \alpha(1-2pn\Delta),
    \end{IEEEeqnarray}
and $\AdvEnvOneOpt$ becomes `$\PIlc(p)$-compliant' as defined in \cite[Section~4.3]{sleepy}. 
The property of $\PIlc(p)$-compliance will be useful in subsequent proofs when we directly use results from \cite{sleepy} to achieve our goals.

Informally, by adjusting $p$ above, we ensure that the honest nodes are elected leaders at time slot which are more than $\Delta$ apart from each other.
Hence, after $\max\{\GST,\GOT\}$, the honest blocks do not get mapped into the same depths in the blocktree.
As long as $f < n/2$, via such an adjustment, we can always guarantee that the chain extended by the honest nodes grow faster than any private chain grown by the adversary after $\max\{\GST,\GOT\}$.
Consequently, in the rest of this section and
in Theorem~\ref{thm:pos-security-under-advp}
we will assume that $p$ is sufficiently small so that $\beta < \alpha(1-2pn\Delta)$ and $\AdvEnvOneOpt$ is $\PIlc(p)$-compliant per \cite[Section~4.3]{sleepy}.

To prove Theorem \ref{thm:pos-security-under-advp}, we use the notion of \emph{strong pivot} defined in \cite{sleepy}.
In this context, we slightly change the definition of strong pivot given in \cite[Definition~5]{sleepy} to ensure that strong pivots force the convergence of the longest chains in views of different honest nodes when $\max\{\GST,\GOT\}>0$.
In Definition~\ref{def:pivot} below, we use the same definition for the convergence opportunity as given in \cite[Sections~2.2 and 5.2]{sleepy}.
Let $A[t_{\mathrm{a}},t_{\mathrm{b}}]$ and $C[t_{\mathrm{a}},t_{\mathrm{b}}]$ denote the number of adversarial slots and convergence opportunities respectively, between slots $t_{\mathrm{a}}$ and $t_{\mathrm{b}} \geq t_{\mathrm{a}}$.

\begin{definition}
\label{def:pivot}
A time slot $t \geq \max\{\GST,\GOT\}$ is said to be a \emph{$\GST$-strong pivot} if for any $t_{\mathrm{a}},t_{\mathrm{b}}$, $0 \leq t_{\mathrm{a}} \leq t \leq t_{\mathrm{b}}$, the number of convergence opportunities within $[\max\{t_{\mathrm{a}},\GST,\GOT\},t_{\mathrm{b}}]$ is greater than the number of adversarial slots in $[t_{\mathrm{a}},t_{\mathrm{b}}]$, \ie,
    \begin{IEEEeqnarray}{C}
        C[\max\{t_{\mathrm{a}},\GST,\GOT\},t_{\mathrm{b}}]>A[t_{\mathrm{a}},t_{\mathrm{b}}].
    \end{IEEEeqnarray}
\end{definition}

In the definition of $\GST$-strong pivots, we only count the number of convergence opportunities that happen after $\max\{\GST,\GOT\}$. 
This is because the useful properties of convergence opportunities do not hold in an asynchronous network, which is the case before $\GST$, and all honest nodes are potentially asleep before $\GOT$.

We can now focus on the proof of Theorem~\ref{thm:pos-security-under-advp}, which depends on the following propositions.
Recall that while proving the propositions below, we can assume that $\beta < \alpha(1-2pn\Delta)$ and $\AdvEnvOneOpt$ is `$\PIlc(p)$-compliant' as defined in \cite[Section~4.3]{sleepy}. 

\begin{proposition}
\label{thm:pos-convergence}
Consider two honest nodes $i$ and $j$, and, let $t$, $\max\{\GST,\GOT\} \leq t$, be a $\GST$-strong pivot.
Then, given any $r,r'$ such that $r' \geq r > t + (\sigma/\beta)$, the prefixes ending at time $t$ are the same for the longest chains seen by $i$ and $j$ at times $r$ and $r'$.
\end{proposition}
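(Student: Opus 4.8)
The plan is to adapt the \emph{pivot} argument of \cite{sleepy} to the modified notion of a $\GST$-strong pivot (Definition~\ref{def:pivot}), exploiting the fact that after $\max\{\GST,\GOT\}$ the network is $\Delta$-synchronous and all honest nodes are awake, so convergence opportunities recover their usual force. Recall that at a convergence opportunity --- an isolated honest slot with no other honest slot within $\Delta$ on either side, here counted only after $\max\{\GST,\GOT\}$ --- the honest block produced is adopted into every honest node's longest chain within $\Delta$ and occupies a depth that no competing honest block can share. The goal is to show that the inequality $C[\max\{t_{\mathrm{a}},\GST,\GOT\},t_{\mathrm{b}}] > A[t_{\mathrm{a}},t_{\mathrm{b}}]$ guaranteed at a $\GST$-strong pivot $t$ forces every pair of honest longest chains, observed by $i$ at time $r$ and by $j$ at time $r'$ with $r' \geq r > t + \sigma/\beta$, to agree on all blocks with timestamp at most $t$.

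First I would reestablish the basic convergence lemma of \cite{sleepy} in the present setting: for a convergence opportunity at a slot $s \geq \max\{\GST,\GOT\}$, the associated honest block sits at a depth strictly larger than that of any block on any honest chain just before $s$, it is incorporated by all honest nodes by slot $s+\Delta$, and no two distinct blocks at this depth can both lie on honest longest chains unless the adversary produced one of them at an adversarial slot. Second, I would run the counting argument: suppose for contradiction that the longest chains seen by $i$ at $r$ and by $j$ at $r'$ fork at some depth whose two forking blocks carry timestamps $\leq t$. Choosing $t_{\mathrm{a}},t_{\mathrm{b}}$ bracketing this fork with $t_{\mathrm{a}} \leq t \leq t_{\mathrm{b}}$, each convergence opportunity in $[\max\{t_{\mathrm{a}},\GST,\GOT\},t_{\mathrm{b}}]$ must be matched by a distinct adversarial slot in $[t_{\mathrm{a}},t_{\mathrm{b}}]$ supplying a competing block at the corresponding depth --- contradicting the $\GST$-strong pivot inequality. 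This is the step that carries the proof.

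Next I would pin down the two quantitative conditions. Restricting the count of convergence opportunities to times after $\max\{\GST,\GOT\}$ is essential, since before $\GST$ the adversary controls delays (so an isolated honest slot need not converge) and before $\GOT$ honest nodes may be asleep; this is exactly why Definition~\ref{def:pivot} evaluates $C[\cdot,\cdot]$ only over $[\max\{t_{\mathrm{a}},\GST,\GOT\},t_{\mathrm{b}}]$, which already absorbs any blocks pre-mined by the adversary before $\GST$ into $A[t_{\mathrm{a}},t_{\mathrm{b}}]$. The buffer $r > t + \sigma/\beta$ guarantees that by the observation times enough convergence opportunities past $t$ have occurred and propagated (their blocks reaching all honest views within $\Delta$) to lock the prefix up to $t$ in place in both views; since $\beta = pf$ upper-bounds the adversarial-slot rate while convergence opportunities occur at a strictly larger rate under $\PIlc(p)$-compliance ($\beta < \alpha(1-2pn\Delta)$ after $\GOT$), the gap $\sigma/\beta$ supplies the slack tied to the confirmation parameter $\sigma$.

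The main obstacle I anticipate is verifying that the lemmas of \cite{sleepy} transfer with the restricted accounting of convergence opportunities. In the original analysis every isolated honest slot is a convergence opportunity, whereas here the isolated honest slots before $\max\{\GST,\GOT\}$ are deliberately excluded, and blocks mined during the asynchronous or sleeping period behave, from the standpoint of convergence, like adversarial blocks. I would therefore check that folding this pre-$\max\{\GST,\GOT\}$ honest activity into $A[t_{\mathrm{a}},t_{\mathrm{b}}]$ does not break the matching argument, and that the honestly extended chain still grows strictly faster than any adversarial chain after $\max\{\GST,\GOT\}$, which follows from $\beta < \alpha(1-2pn\Delta)$. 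Once this bookkeeping is in place, the conclusion follows as in \cite{sleepy}, yielding the prefix agreement at the pivot that subsequently underlies the security statement of Theorem~\ref{thm:pos-security-under-advp-main}.
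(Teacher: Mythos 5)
Your argument is essentially the paper's, unpacked: the paper's entire proof is the two-line observation that every $\GST$-strong pivot is in particular a strong pivot in the sense of \cite[Definition~5]{sleepy} and that the network is $\Delta$-synchronous (with all honest nodes awake) after $\max\{\GST,\GOT\}$, after which it defers wholesale to the proof of \cite[Lemma~5]{sleepy} --- which is precisely the convergence-opportunity versus adversarial-slot matching argument you reconstruct. Your extra bookkeeping about why only post-$\max\{\GST,\GOT\}$ convergence opportunities retain their usual force is the same justification the paper invokes implicitly, so the two proofs coincide in substance.
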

Note that every $\GST$-strong pivot is also a strong pivot as given in \cite[Definition~5]{sleepy} and the network is $\Delta$ synchronous after time $\max\{\GST,\GOT\}$.
Hence, the proof of Proposition~\ref{thm:pos-convergence} follows from the proof of Lemma~5 in \cite{sleepy}.

\begin{proposition}
\label{thm:pos-before-gst-pre}
For any $\epsilon>0$, there exist constants $C_{\epsilon}, c_{\epsilon}>0$ such that
    \begin{IEEEeqnarray}{C}
        \Prob{A[0,t] < (1+\epsilon) \beta t, \ \forall t \geq s} > 1-C_{\epsilon} e^{-c_{\epsilon} s}.
    \end{IEEEeqnarray}
\end{proposition}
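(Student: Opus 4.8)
The plan is to prove this as a standard maximal-concentration statement (Chernoff bound per slot, then a union bound over all $t \ge s$), exploiting that $A[0,t]$ is a sum of independent per-slot contributions whose mean grows at rate at most $\beta$.

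First I would pin down the probabilistic structure. By the corruption model of Section~\ref{sec:model}, the $f$ adversarial nodes are fixed before execution begins (static corruption), and in Sleepy each node wins the leader lottery in a given slot independently with probability $p$ via the random oracle. Hence, writing $A[0,t] = \sum_{\tau=1}^{t} Y_\tau$ where $Y_\tau$ is the per-slot adversarial contribution (the indicator of, or count of, adversarial leaders in slot $\tau$), the $Y_\tau$ are independent across $\tau$ and bounded, with $\mathbb{E}[Y_\tau] \le \beta = pf$ (with equality when counting total adversarial elections; otherwise $1-(1-p)^f \le pf$ by Bernoulli's inequality). In particular $\mathbb{E}[A[0,t]] \le \beta t$, and since $p,n,f$ are fixed positive constants, $\beta$ is a fixed constant $>0$.

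Second, for a fixed slot $t$ I would apply the multiplicative upper-tail Chernoff bound. Since the threshold $(1+\epsilon)\beta t$ is at least $(1+\epsilon)$ times the mean $\mathbb{E}[A[0,t]]$, there is a constant $c'_\epsilon > 0$ (depending only on $\epsilon$ and the fixed parameters) such that $\Prob{A[0,t] \ge (1+\epsilon)\beta t} \le e^{-c'_\epsilon t}$. Third, I would upgrade this fixed-$t$ bound to the uniform-in-$t$ statement. Because slots are discrete, $A[0,t]$ is constant on each interval $[k,k+1)$ while the barrier $(1+\epsilon)\beta t$ is strictly increasing there, so a violation at any real $t \in [k,k+1)$ forces a violation at the integer $t=k$; it therefore suffices to control integer slots. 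A union bound then gives
\begin{IEEEeqnarray*}{rCl}
    \Prob{\exists t \ge s : A[0,t] \ge (1+\epsilon)\beta t}
    &\le& \sum_{t=\lceil s\rceil}^{\infty} e^{-c'_\epsilon t}
    = \frac{e^{-c'_\epsilon \lceil s\rceil}}{1-e^{-c'_\epsilon}}
    \le C_\epsilon\, e^{-c_\epsilon s},
\end{IEEEeqnarray*}
with $c_\epsilon = c'_\epsilon$ and $C_\epsilon = (1-e^{-c'_\epsilon})^{-1}$. Taking complements yields the claimed bound.

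The main obstacle — and the only step beyond routine bookkeeping — is the all-times uniformity ($\forall t \ge s$) rather than a single fixed $t$. What makes it painless is that the per-slot failure probabilities decay geometrically, so their sum over $t \ge s$ is a geometric series dominated by its first term $e^{-\Theta(s)}$; the tail event inherits the same exponential rate. The two remaining points requiring (minor) care are the reduction from all real $t$ to integer slots, handled by the monotonicity argument above, and the independence of the $Y_\tau$, which is exactly what static corruption together with oracle-based leader election supplies; without static corruption one would have to argue more carefully that adversarial scheduling cannot correlate the lottery outcomes.
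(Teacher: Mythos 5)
Your proof is correct, and it establishes the proposition by a slightly different discretization of the bad event than the paper's. Both arguments rest on the same per-interval Chernoff bound (the paper imports it as Fact~2 of Sleepy, $\Prob{A[t_{\mathrm{a}},t_{\mathrm{b}}]>(1+\epsilon)\beta t} \leq e^{-\epsilon^2\beta t/3}$, which is justified exactly by the independence structure you describe: static corruption plus per-slot oracle lotteries), and both pass from the continuum of times to a countable grid via monotonicity of $A[0,\cdot]$ against an increasing barrier. The difference is the grid: you union bound directly over every integer slot $t \geq s$, whereas the paper union bounds over the geometric sequence $t_n = \left(\frac{2+2\epsilon}{2+\epsilon}\right)^{n-1}$ and absorbs all $t \in [t_{n-1},t_n]$ into the single event $E_n = \{A[0,t_n] > (1+\epsilon)\beta t_{n-1}\} = \{A[0,t_n] > (1+\epsilon/2)\beta t_n\}$. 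The paper's peeling costs a constant in the exponent ($\epsilon^2\beta/12$ in place of $\epsilon^2\beta/3$) before the sum is again dominated by a geometric series starting at $\lfloor s \rfloor$; your unit-spacing version is more elementary and avoids that loss, at the price of summing more terms (which is harmless since the series is geometric either way). One trivial repair: your union should start at $\lfloor s\rfloor$ rather than $\lceil s\rceil$, since a violation at a real $t \in [s, \lceil s\rceil)$ reduces to the integer $\lfloor s\rfloor$; this only multiplies $C_\epsilon$ by $e^{c'_\epsilon}$ and does not affect the claimed form of the bound.
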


\begin{proof}

We first consider the time sequence $\{t_n\}_{n\geq 0}$ given by the following formula:
    \begin{IEEEeqnarray}{C}
        t_0 = 0, \qquad t_n = \left(\frac{2+2\epsilon}{2+\epsilon}\right)^{n-1}
        \qquad
        \text{for $n \geq 1$.}
    \end{IEEEeqnarray}

Let's define $E_n$ as the event that $A[0,t_n] > (1+\epsilon)\beta t_{n-1}$, \ie, there are more than $(1+\epsilon)\beta t_{n-1}$ adversarial slots within the time interval $[0,t_n]$.
Similarly, let's define $F_s$ as the event that for any time $t \geq s$, $A[0,t] \leq (1+\epsilon)\beta t$, \ie, the number of adversarial slots within the time interval $[0,t]$ is smaller than $(1+\epsilon)\beta t$ for any $t \geq s$.

Given these definitions, we can express $\overline{F}_s$, $s>1$, in terms of the events $E_n$ as
    $\overline{F}_s \subseteq \bigcup_{n=n_s}^{\infty} E_n$,
where $n_s$ is an integer such that
    \begin{IEEEeqnarray}{C}
        \left(\frac{2+2\epsilon}{2+\epsilon}\right)^{n_s-2} \leq s < \left(\frac{2+2\epsilon}{2+\epsilon}\right)^{n_s-1}.
    \end{IEEEeqnarray}
    
We next calculate the probability of the event $E_n$.
Fact~2 in \cite{sleepy} states that for any constant $\epsilon>0$ and $t_{\mathrm{a}}, t_{\mathrm{b}}$ such that $t \triangleq t_{\mathrm{b}}-t_{\mathrm{a}} \geq 0$,
    \begin{IEEEeqnarray}{C}
        \Prob{A[t_{\mathrm{a}},t_{\mathrm{b}}]>(1+\epsilon)\beta t} \leq e^{-\frac{\epsilon^2\beta t}{3}}.
    \end{IEEEeqnarray}
Then, as
    \begin{IEEEeqnarray}{C}
        t_n = \frac{2+2\epsilon}{2+\epsilon}t_{n-1} = \frac{1+\epsilon}{1+\epsilon/2}t_{n-1},
    \end{IEEEeqnarray}
we infer that
    \begin{IEEEeqnarray}{rCl}
        \Prob{E_n} &=& \Prob{A[0,t_n] > (1+\epsilon)\beta t_{n-1}} \\
        &=& \Prob{A[0,t_n] > (1+\epsilon/2)\beta t_n} < e^{-\frac{\epsilon^2\beta t_n}{12}}.   \IEEEeqnarraynumspace
    \end{IEEEeqnarray}

Finally, using
    \begin{IEEEeqnarray}{C}
        t_{n_s} = \left(\frac{2+2\epsilon}{2+\epsilon}\right)^{n_s-1} \geq s \geq \lfloor s \rfloor,
    \end{IEEEeqnarray}
and the union bound, we observe that for any $s>1$,
    \begin{IEEEeqnarray}{rCl}
        \Prob{\overline{F}_s} &\leq& \sum_{n=n_s}^{\infty} \Prob{E_n} \leq \sum_{n=n_s}^{\infty} e^{-\frac{\epsilon^2\beta t_n}{12}} \\
        &\leq& \sum_{i=\lfloor s \rfloor}^{\infty} e^{-\frac{\epsilon^2\beta i}{12}} \leq \frac{1}{A_{\epsilon,\beta}(1-A_{\epsilon,\beta})} A_{\epsilon,\beta}^s
    \end{IEEEeqnarray}
where
    \begin{IEEEeqnarray}{C}
        A_{\epsilon,\beta} = e^{-\frac{\epsilon^2\beta}{12}} < 1
        \qquad
        \text{for any $\epsilon>0$.}
    \end{IEEEeqnarray}

We conclude the proof by setting
    \begin{IEEEeqnarray}{C}
        C_{\epsilon} = \frac{1}{A_{\epsilon,\beta}(1-A_{\epsilon,\beta})}, \qquad
        c_{\epsilon} = -\ln{(A_{\epsilon,\beta})} > 0.
        \IEEEeqnarraynumspace
    \end{IEEEeqnarray}
\end{proof}

\begin{corollary}
\label{thm:pos-before-gst}
Given any $\epsilon>0$, the following statement is true for any $s>1$ except with probability $C_{\epsilon}e^{-c_{\epsilon}s}$:
For any $\GST$ and $\GOT$ specified by $\AdvEnvOneOpt$, the number of adversarial slots by $\max\{\GST,\GOT\}$ is less than $(1+\epsilon) \beta \max\{s, \GST, \GOT\}$.
\end{corollary}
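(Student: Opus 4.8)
The plan is to derive Corollary~\ref{thm:pos-before-gst} directly from Proposition~\ref{thm:pos-before-gst-pre} by a short case analysis, the key point being that the event furnished by that proposition is a \emph{uniform-in-$t$} statement and therefore survives the adversary's (possibly randomness-dependent) choice of $\GST$ and $\GOT$.

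First I would fix $\epsilon > 0$ and $s > 1$ and invoke Proposition~\ref{thm:pos-before-gst-pre} to obtain constants $C_{\epsilon}, c_{\epsilon} > 0$ together with the event $\mathcal E \triangleq \{A[0,t] < (1+\epsilon)\beta t \text{ for all } t \geq s\}$, which satisfies $\Prob{\mathcal E} > 1 - C_{\epsilon} e^{-c_{\epsilon} s}$. I would then argue that on $\mathcal E$ the claimed bound holds for every admissible choice of $\GST$ and $\GOT$ at once. This is the crucial observation: since $\GST$ and $\GOT$ are selected by $\Adv_1$ and may depend causally on the protocol randomness, one cannot condition on their values before applying a concentration bound; but because $\mathcal E$ already quantifies over all $t \geq s$, it automatically covers whatever value $\max\{\GST,\GOT\}$ ends up taking, with no union bound over its possible values needed.

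Next I would split into two cases on $\mathcal E$. If $\max\{\GST,\GOT\} \geq s$, I apply the defining inequality of $\mathcal E$ at $t = \max\{\GST,\GOT\}$ to get $A[0,\max\{\GST,\GOT\}] < (1+\epsilon)\beta \max\{\GST,\GOT\}$, and note that here $\max\{s,\GST,\GOT\} = \max\{\GST,\GOT\}$. If instead $\max\{\GST,\GOT\} < s$, I use that $A[0,\cdot]$ is nondecreasing together with the inequality of $\mathcal E$ at $t = s$ to obtain $A[0,\max\{\GST,\GOT\}] \leq A[0,s] < (1+\epsilon)\beta s = (1+\epsilon)\beta \max\{s,\GST,\GOT\}$. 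In both cases the number of adversarial slots by $\max\{\GST,\GOT\}$ is strictly below $(1+\epsilon)\beta\max\{s,\GST,\GOT\}$, which is exactly the claim, and it fails only on $\overline{\mathcal E}$, of probability at most $C_{\epsilon} e^{-c_{\epsilon} s}$.

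Once Proposition~\ref{thm:pos-before-gst-pre} is available the argument is essentially routine; the only point I would flag as the main (conceptual) obstacle is the justification that a single high-probability event suffices uniformly over all adversarially and randomly chosen $\GST,\GOT$. I would therefore state explicitly that the `for all $t \geq s$' form of $\mathcal E$ is precisely what licenses substituting the random quantity $\max\{\GST,\GOT\}$ into the bound, and that monotonicity of $A[0,\cdot]$ is what covers the regime in which the global times fall below the threshold $s$.
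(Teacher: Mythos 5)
Your proof is correct and matches the paper's (implicit) argument: the paper states this corollary as an immediate consequence of Proposition~\ref{thm:pos-before-gst-pre} without writing out the details, and your case analysis on whether $\max\{\GST,\GOT\}$ exceeds $s$, together with the observation that the uniform-in-$t$ event covers the adversarially chosen stabilization times, is exactly the reasoning being relied upon.
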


\begin{proposition}
\label{thm:pos-catch-up}
For any positive integer $\extra$, $\epsilon>0$ and times $t_0, t_1$,
there exist positive constants $\tilde{C}_{\epsilon}$ and $\tilde{c}_{\epsilon}$ such that
    \begin{IEEEeqnarray}{C}
        \Prob{A[t_0,t_1] + \extra \leq C[t_0,t_1]} \geq 1-e^{-\tilde{c}_{\epsilon} \extra}
    \end{IEEEeqnarray}
if $t \triangleq t_1 - t_0 \geq \tilde{C}_{\epsilon} \extra$.
\end{proposition}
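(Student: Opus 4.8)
The plan is to derive the inequality $A[t_0,t_1]+\extra \le C[t_0,t_1]$ from two opposing Chernoff-type concentration bounds — an upper tail for the adversarial slots $A[t_0,t_1]$ and a lower tail for the convergence opportunities $C[t_0,t_1]$ — and to exploit the strict rate gap between the two that is guaranteed by the assumed $\PIlc(p)$-compliance $\beta < \alpha(1-2pn\Delta)$. Throughout I take the interval to lie past $\max\{\GST,\GOT\}$, which is the regime in which Proposition~\ref{thm:pos-catch-up} is invoked, so that the network is $\Delta$-synchronous and all honest nodes are awake; only then does the honest leader rate equal $\alpha = p(n-f)$ and the convergence-opportunity machinery of \cite{sleepy} apply.

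First I would recall the two tail bounds from \cite{sleepy}. For the adversary, exactly as in the proof of Proposition~\ref{thm:pos-before-gst-pre} (Fact~2 of \cite{sleepy}), for any slack $\epsilon'>0$ one has $\Prob{A[t_0,t_1] > (1+\epsilon')\beta t} \le e^{-(\epsilon')^2 \beta t/3}$. For the convergence opportunities, let $p_{\mathrm c}$ denote their per-slot rate; the companion fact of \cite{sleepy} gives $\Prob{C[t_0,t_1] < (1-\epsilon') p_{\mathrm c} t} \le e^{-\Omega(t)}$. The structural input that makes the whole argument work is that compliance forces $p_{\mathrm c}$ to dominate $\beta$: by the definition of a convergence opportunity and a union bound over the $2\Delta$-window of surrounding honest slots, $p_{\mathrm c} \ge \alpha(1-2pn\Delta)$, which by the compliance inequality exceeds $\beta$, so that $g := p_{\mathrm c}-\beta$ is a strictly positive constant depending only on $p,n,f,\Delta$.

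Next I would fix the internal slack $\epsilon'$ small enough — at most the given $\epsilon$, and small relative to $g$ — that $\rho := (1-\epsilon')p_{\mathrm c}-(1+\epsilon')\beta > 0$; this is possible precisely because $p_{\mathrm c}>\beta$. On the complement of the two tail events one has $C[t_0,t_1]-A[t_0,t_1] \ge \rho\, t$, so setting $\tilde{C}_\epsilon := 1/\rho$ guarantees $C-A \ge \extra$ whenever $t \ge \tilde{C}_\epsilon \extra$, i.e. exactly the desired $A+\extra \le C$. A union bound over the two failure events bounds the total failure probability by $e^{-(\epsilon')^2\beta t/3}+e^{-\Omega(t)} = e^{-\Omega(t)}$, and since $t \ge \tilde{C}_\epsilon \extra$ this is $e^{-\Omega(\extra)}$, which can be written as $e^{-\tilde{c}_\epsilon \extra}$ for a suitable constant $\tilde{c}_\epsilon>0$. (The stated $\epsilon$ merely indexes the resulting constants; for a large input $\epsilon$ one simply uses the smaller admissible slack and absorbs the difference into $\tilde{C}_\epsilon,\tilde{c}_\epsilon$.)

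I expect the genuine obstacle to be the lower-tail concentration of $C[t_0,t_1]$. Unlike adversarial slots, convergence opportunities are not independent across time slots, because the isolation requirement couples each candidate honest slot to its $\Delta$-neighborhood; a naive i.i.d.\ Chernoff bound is therefore unavailable, and one must instead reuse the block/renewal decomposition of \cite{sleepy} (partitioning $[t_0,t_1]$ into sub-blocks separated by gaps of width $\ge \Delta$, on which the relevant indicators are independent) to recover an exponential tail. A secondary subtlety, already flagged above, is that the clean rate $p_{\mathrm c}\ge\alpha(1-2pn\Delta)$ is valid only once $t_0 \ge \max\{\GST,\GOT\}$; restricting the count as in Definition~\ref{def:pivot} is what keeps the argument honest when $\GST$ or $\GOT$ is positive.
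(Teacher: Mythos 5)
Your proposal is correct and follows essentially the same route as the paper: an upper Chernoff tail for $A[t_0,t_1]$ (Fact~2 of \cite{sleepy}), a lower tail for $C[t_0,t_1]$ at rate $\alpha(1-2pn\Delta)$ (Lemma~2 of \cite{sleepy}, which handles the dependence across slots that you correctly flag), a slack chosen small enough to preserve the positive rate gap guaranteed by compliance, $\tilde{C}_\epsilon$ set to the reciprocal of that gap, and a union bound yielding failure probability $e^{-\Omega(t)} = e^{-\Omega(\extra)}$. The paper merely makes your ``pick $\epsilon'$ small enough'' explicit by setting $\epsilon_1 = \frac{\epsilon(\alpha(1-2pn\Delta)-\beta)}{(1+\epsilon)(\alpha(1-2pn\Delta)+\beta)}$ and $\tilde{C}_\epsilon = \frac{1+\epsilon}{\alpha(1-2pn\Delta)-\beta}$ so that the gap equals exactly $\extra$ at $t=\tilde{C}_\epsilon\extra$.
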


\begin{onlyinproceedings}
Proof follows from \cite[Fact~2, Lemma~2]{sleepy} \cite[Appendix~C]{neu2020ebbandflow}.
\end{onlyinproceedings}
\begin{onlyonarxiv}
Proof follows from \cite[Fact~2, Lemma~2]{sleepy}. %
\begin{proof}
Define
    \begin{IEEEeqnarray}{C}
        \tilde{C}_{\epsilon} =  \frac{1+\epsilon}{\alpha(1-2pn\Delta) - \beta}, 
    \end{IEEEeqnarray}
and, let 
    \begin{IEEEeqnarray}{C}
        \epsilon_1 = \frac{\epsilon(\alpha(1-2pn\Delta)-\beta)}{(1+\epsilon)(\alpha(1-2pn\Delta)+\beta)}.
    \end{IEEEeqnarray}
Due to \cite[Fact~2]{sleepy}, for any $0<\epsilon_1<1$,
    \begin{IEEEeqnarray}{C}
        \Prob{A[t_0,t_1]>(1+\epsilon_1)\beta t} < e^{-\frac{\epsilon^2_1 \beta t}{3}}.
    \end{IEEEeqnarray}
Due to \cite[Lemma~2]{sleepy}, for any $\epsilon_1>0$, there exists a positive $\epsilon_2$ such that 
    \begin{IEEEeqnarray}{C}
        \Prob{C[t_0,t_1]<(1-\epsilon_1)\alpha(1-2pn\Delta) t} < e^{-\epsilon_2 \beta t}.
        \IEEEeqnarraynumspace
    \end{IEEEeqnarray}
Finally, for the values of $t$ and $\epsilon_1$ chosen above, we note that
    \begin{IEEEeqnarray}{C}
        (1-\epsilon_1)\alpha(1-2pn\Delta) t - (1+\epsilon_1)\beta t = \extra.
    \end{IEEEeqnarray}
Then, via union bound,
    \begin{IEEEeqnarray}{rCl}
        &&\Prob{A[t_0,t_1]+\extra\leq C[t_0,t_1]} \\
        &=& 1 - \Prob{A[t_0,t_1]+\extra > C[t_0,t_1]} \\
        &\geq& 1 - \Prob{A[t_0,t_1]>(1+\epsilon_1)\beta t} \nonumber\\
        && -\> \Prob{C[t_0,t_1] < (1-\epsilon_1)\alpha(1-2pn\Delta) t} \\
        &=& 1 - e^{-\epsilon_2 \beta t} - e^{-\frac{\epsilon^2_1 \beta t}{3}}
    \end{IEEEeqnarray}
where $t=O(\extra)$.
Consequently, there exists a constant $\tilde{c}_{\epsilon}$ such that
    \begin{IEEEeqnarray}{C}
        \Prob{A[t_0,t_1]+\extra \leq C[t_0,t_1]} \geq 1 - e^{-\tilde{c}_{\epsilon} \extra}.
    \end{IEEEeqnarray}
\end{proof}
\end{onlyonarxiv}

Define $T$ as the minimum time $t \geq \max\{\GST,\GOT\}$ such that the number of convergence opportunities in $[\max\{\GST,\GOT\},t]$ equals the number of adversarial slots within $[0,t]$:
    \begin{IEEEeqnarray}{C}
        T = \min_{t \geq \max\{\GST,\GOT\};\ C[\max\{\GST,\GOT\},t] = A[0,t]} t.
        \IEEEeqnarraynumspace
    \end{IEEEeqnarray}

\begin{proposition}
\label{thm:pos-normalization-time}
There exists a constant $\const$ such that for any given security parameter $\sigma$ and $\GST$, $\GOT$ specified by $\AdvEnvOneOpt$,
    \begin{IEEEeqnarray}{C}
        T \leq \const(\max\{\GST,\GOT\}+\sigma)
    \end{IEEEeqnarray}
except with probability $e^{-\Omega(\sigma)}$.
\end{proposition}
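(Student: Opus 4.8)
The plan is to track the race between adversarial slots and convergence opportunities after $\max\{\GST,\GOT\}$ and to show that the convergence opportunities wipe out the adversary's head start quickly. Writing $\mu \triangleq \max\{\GST,\GOT\}$, I would study the integer-valued \emph{deficit} $D(t) \triangleq A[0,t] - C[\mu,t]$ for $t \geq \mu$. By the definition of $T$, it suffices to show that $D$ returns to $0$ no later than $\const(\mu+\sigma)$; note that $D(\mu)=A[0,\mu]\geq 0$, so there is a nonnegative head start to erase, and that this head start is precisely the number of adversarial slots accumulated while the honest nodes were possibly asleep or partitioned.

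First I would bound the head start. Applying Corollary~\ref{thm:pos-before-gst} with $s=\sigma$ gives $A[0,\mu] < M$, where $M \triangleq (1+\epsilon)\beta\max\{\sigma,\mu\}$, except with probability $C_\epsilon e^{-c_\epsilon\sigma}$. Next I would feed this into the catch-up estimate. Set $\extra \triangleq \lceil M \rceil$ and $t^* \triangleq \mu + \tilde C_\epsilon \lceil M \rceil$, so that the interval $[\mu,t^*]$ has length exactly $\tilde C_\epsilon \extra$. Proposition~\ref{thm:pos-catch-up} then yields $A[\mu,t^*] + \extra \leq C[\mu,t^*]$ except with probability $e^{-\tilde c_\epsilon \extra}$. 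On the intersection of the two good events, $A[0,t^*] = A[0,\mu] + A[\mu,t^*] < \extra + A[\mu,t^*] \leq C[\mu,t^*]$, i.e.\ $D(t^*) < 0$. Since $M \geq (1+\epsilon)\beta\sigma$ regardless of $\mu$, we have $\extra = \Omega(\sigma)$, so both error terms are $e^{-\Omega(\sigma)}$; and $t^* \leq \mu + \tilde C_\epsilon((1+\epsilon)\beta\max\{\sigma,\mu\}+1) \leq \const(\mu+\sigma)$ for a suitable constant $\const$ (using $\sigma\geq 1$).

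Finally I would convert ``$C$ overtakes $A$'' into an exact equality by a discreteness argument. Between consecutive slots $C[\mu,\cdot]$ increases by at most one while $A[0,\cdot]$ is nondecreasing, so $D(t)-D(t-1)\geq -1$; hence the integer-valued process $D$, which starts at $D(\mu)\geq 0$ and satisfies $D(t^*)<0$, cannot skip over $0$ and must satisfy $D(T')=0$ for some $\mu\leq T'\leq t^*$. This $T'$ is a time with $C[\mu,T']=A[0,T']$, so $T\leq T'\leq\const(\mu+\sigma)$, which is the claim. The main obstacle I anticipate is the error-probability bookkeeping: the catch-up lemma only delivers error $e^{-\tilde c_\epsilon \extra}$, so to reach $e^{-\Omega(\sigma)}$ I must deliberately over-provision the catch-up margin to $\extra=\Omega(\sigma)$ even when the true head start $M$ is small (as happens when $\mu$ is small); this is exactly what the bound $M\geq(1+\epsilon)\beta\sigma$ secures, at the cost of a slightly larger --- but still $O(\mu+\sigma)$ --- catch-up window.
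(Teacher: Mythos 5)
Your proof is correct and follows essentially the same route as the paper's: bound the adversary's head start $A[0,\max\{\GST,\GOT\}]$ via Corollary~\ref{thm:pos-before-gst} with $s=\sigma$, then over-provision the catch-up margin $\extra=\Omega(\sigma)$ in Proposition~\ref{thm:pos-catch-up} so that both error terms are $e^{-\Omega(\sigma)}$ and the catch-up window has length $O(\max\{\GST,\GOT\}+\sigma)$. The only (welcome) addition is that you make explicit the discreteness/intermediate-value step converting ``$C$ overtakes $A$'' into the exact equality required by the definition of $T$, which the paper leaves implicit.
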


\begin{proof}
From Corollary~\ref{thm:pos-before-gst}, we know that given a constant $\epsilon>0$, the following statement is true for any $s>1$ except with probability $C_{\epsilon}e^{-c_{\epsilon}s}$:
For any $\GST$ and $\GOT$ specified by $\AdvEnvOneOpt$, the number of adversarial slots by $\max\{\GST,\GOT\}$, $A[0,\max\{\GST,\GOT\}]$, is less than $(1+\epsilon) \beta \max\{s, \GST, \GOT\}$.
Moreover, Proposition~\ref{thm:pos-catch-up} implies that for any positive integer $\extra$ and $\epsilon>0$,
there exist positive constants $\tilde{C}_{\epsilon}$ and $\tilde{c}_{\epsilon}$ such that
    \begin{IEEEeqnarray}{C}
        \Prob{A[0,t] + \extra \leq C[0,t]} \geq 1-e^{-\tilde{c}_{\epsilon} \extra}
    \end{IEEEeqnarray}
where $t = \tilde{C}_{\epsilon} \extra$.

Next, we fix some $\epsilon>0$ and set $s=\sigma$ where $\sigma$ is our security parameter.
Then, for any $\GST$ and $\GOT$ specified by $\AdvEnvOneOpt$, the number of adversarial slots by $\max\{\GST,\GOT\}$ is upper bounded by 
    \begin{IEEEeqnarray}{rCl}
        && (1+\epsilon) \beta \max\{\sigma, \GST,\GOT\} \nonumber\\
        &\leq& (1+\epsilon) \beta (\sigma + \max\{\GST,\GOT\})
    \end{IEEEeqnarray}
except with probability $e^{-\Omega(\sigma)}$. 
Furthermore, setting
    \begin{IEEEeqnarray}{rCl}
        \extra = (1+\epsilon)\beta(\sigma + \max\{\GST,\GOT\}),
    \end{IEEEeqnarray}
we can assert that
    \begin{IEEEeqnarray}{rCl}
        && \Prob{A[0,t] \leq C[\max\{\GST,\GOT\},t]} \nonumber\\
        &\geq& 1 - e^{-\tilde{c}_{\epsilon} \sigma} - C_{\epsilon}e^{-c_{\epsilon}s} = 1-e^{-\Omega(\sigma)}
    \end{IEEEeqnarray}
for
    \begin{IEEEeqnarray}{rCl}
        t &=& \max\{\GST,\GOT\}   \nonumber\\
        &&+\> \tilde{C}_{\epsilon} (1+\epsilon) \beta (\sigma + \max\{\GST,\GOT\}).
    \end{IEEEeqnarray}

Finally, we conclude that for any $\GST$ and $\GOT$ specified by $\AdvEnvOneOpt$, $C[\max\{\GST,\GOT\},t] \geq A[0,t]$ for
    \begin{IEEEeqnarray}{C}
        t = \GST + \tilde{C}_{\epsilon} (1+\epsilon)^2 \beta (\sigma + \max\{\GST,\GOT\})
    \end{IEEEeqnarray}
except with probability $e^{-\Omega(\sigma)}$.
Hence, there is a constant $\const>0$ such that for any given security parameter $\sigma$, $\GST$ and $\GOT$,
    \begin{IEEEeqnarray}{C}
        T \leq \const(\max\{\GST,\GOT\}+\sigma)
    \end{IEEEeqnarray}
except with probability $e^{-\Omega(\sigma)}$.
\end{proof}

Finally, we have all the components to start the proof of Theorem~\ref{thm:pos-security-under-advp}. 
The proof uses the same concepts as $(T_G,g_0,g_1)$-chain growth, $(T_Q,\mu)$-chain quality and $T_C$-safety introduced in Sections~3.2.1, 3.2.2 and 3.2.3 of \cite{sleepy}, respectively.

\begin{proof}
First, recall the definition of $T$ as the minimum time $t \geq \max\{\GST,\GOT\}$ such that $C[\max\{\GST,\GOT\},t] = A[0,t]$.
Due to Proposition~\ref{thm:pos-normalization-time}, there exists a constant $\const>0$ such that for any given security parameter $\sigma$,
    \begin{IEEEeqnarray}{C}
        T \leq \const(\max\{\GST,\GOT\}+\sigma)
    \end{IEEEeqnarray}
except with probability $e^{-\Omega(\sigma)}$.

From \cite[Theorem~5, Corollary~4]{sleepy}, we know that within any time period $[s,t]$ such that $t-s$ is a polynomial of $\sigma$, there exists a strong pivot as given in \cite[Definition~5]{sleepy} except with probability $e^{-\Omega(\sqrt{\sigma})}$.
Observe that if $s>T$, then any strong pivot in the interval $[s,t]$ is also a $\GST$-strong pivot.
Consequently, within any time period $[s,t]$ such that $s>\const(\max\{\GST,\GOT\}+\sigma)$, there exists a $\GST$-strong pivot except with probability $e^{-\Omega(\sqrt{\sigma})} + e^{-\Omega(\sigma)} = e^{-\Omega(\sqrt{\sigma})}$.

Via Proposition~\ref{thm:pos-convergence}, a $\GST$-strong pivot at time $t$ forces the convergence of the longest chains seen by all honest nodes up till some time $t-O(1)$.
Then, using \cite[Theorem~7]{sleepy}, Proposition~\ref{thm:pos-convergence} and the observations above, we infer that $\PIlc(p)$ is $\sigma$-consistent after time $\const(\max\{\GST,\GOT\}+\sigma)$ except with probability $e^{-\Omega(\sqrt{\sigma})}$.
Moreover, $\sigma$-consistency of $\PIlc(p)$ after time $\const(\max\{\GST,\GOT\}+\sigma)$ implies, through \cite[Lemmas~3, 4 and 8]{sleepy}, that for any $\epsilon>0$, $\PIlc(p)$ satisfies $(\sigma,g_0,g_1)$-chain growth and $(\sigma,\mu)$-chain quality after time $\const(\max\{\GST,\GOT\}+\sigma)$, except with probability $e^{-\Omega(\sqrt{\sigma})}$, where $g_0$, $g_1$ and $\mu$ are constants that depend on the parameters of $\PIlc(p)$ and \AdvEnvOneOpt.
Specifically, $g_0=(1-\epsilon)\alpha(1-2pn\Delta)$.

Finally, using \cite[Lemma~1]{sleepy} and its proof, we conclude that if $\PIlc(p)$ satisfies $(T_G,g_0,g_1)$-chain growth, $(T_Q,\mu)$-chain quality and $T_C$-safety after time $\const(\max\{\GST,\GOT\}+\sigma)$, then, it is secure with confirmation time
    \begin{IEEEeqnarray}{C}
        \Tconfirm \leq O\!\left( \frac{T_G+T_Q+T_C}{g_0}+\Delta \right),
    \end{IEEEeqnarray}
after time $\const(\max\{\GST,\GOT\}+\sigma)$.
Consequently, $\PIlc(p)$ is secure with confirmation time
    \begin{IEEEeqnarray}{C}
        \Tconfirm \leq O\!\left(\frac{3\sigma}{(1-\epsilon)\alpha(1-2pn\Delta)}+\Delta\right) = O(\sigma),   \IEEEeqnarraynumspace
    \end{IEEEeqnarray}
after time $\const(\max\{\GST,\GOT\}+\sigma)$ except with probability $e^{-\Omega(\sqrt{\sigma})}$. 
This concludes the proof.
\end{proof}

\begin{onlyonarxiv}
\section{Analysis and Security Proof for the \SAC Construction Using HotStuff}
\label{sec:appendix-hotstuff-analysis}

In this section, we prove Theorem \ref{thm:main-security} for the protocol $\PItheexample{Sleepy}{HotStuff}$ composing a permissioned longest chain protocol and HotStuff.
Note that the safety and liveness proofs for HotStuff as presented in \cite{yin2018hotstuff} remain unaffected by the composition with Sleepy. 
Hence, using \cite[Lemma 1, Theorem 2, Lemma 3, Theorem 4]{yin2018hotstuff}, we can replace the safety and liveness lemmas for $\PIbft$ given in Section \ref{sec:analysis} by the following lemmas derived from \cite{yin2018hotstuff} under the model $\AdvEnvOneOpt \triangleq \AdvEnvParameterized{1}{\frac{1}{3}}$.
\begin{lemma}[Safety Lemma for $\PIbft$]
\label{thm:hotstuff-safety}
If $B_1$ and $B_2$ are two conflicting \blocksbft, then they cannot be both \finalp in the view of any honest node.
\end{lemma}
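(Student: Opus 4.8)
The plan is to show that the snap-and-chat modification to HotStuff leaves its safety argument intact, exactly as in the Streamlet case (Lemma~\ref{thm:streamlet-safety}). First I would recall that HotStuff's safety (\cite[Lemma~1, Theorem~2]{yin2018hotstuff}) rests on a quorum-intersection argument: certifying a \blockbft requires a quorum certificate formed from the votes of more than $2n/3$ nodes, and since $f < n/3$ under $\AdvEnvOneOpt$, any two such quorums share at least one honest node. Because an honest node never casts votes that would certify two conflicting blocks at the same height (it locks on, and only votes to extend, its highest locked certificate), two conflicting \blocksbft can never both amass the certificates leading to finalization.

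Next I would argue that the only change the construction makes to \emph{honest} behavior in $\PIbft$ is a strengthening of the voting precondition. Per Algorithm~\ref{algo:pseudocode-sleepy-hotstuff}, an honest node defers processing of a BFT message --- and hence any potential vote on the referenced \blockbft $B$ --- until the snapshot $B.\PayloadCh$ is viewed as confirmed in its local $\Chlc{}{}$. This added guard can only cause an honest node to vote on \emph{fewer} proposals; it never induces equivocation, never changes \emph{which} of several conflicting blocks an honest node would lock on or vote for, and thus preserves verbatim the lock/vote invariants on which the quorum-intersection argument depends.

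Finally I would observe that, from the vantage point of HotStuff's safety analysis, deferring or dropping a proposal carrying an unconfirmed snapshot is indistinguishable from the adversary delaying or withholding that proposal over the network. Since HotStuff is safe even under full asynchrony with arbitrary adversarial message scheduling, this introduces no new capability that could compromise safety. Consequently the safety proof of \cite{yin2018hotstuff} carries over unchanged, and no two conflicting \blocksbft can both be \finalp in any honest view.

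The main obstacle will be making precise the claim that the additional voting guard does not disturb HotStuff's locking invariants. I expect this to be routine: the guard acts as a pure \emph{filter} on \emph{when} an honest node processes a message, orthogonal to the fork-choice logic that determines \emph{what} it votes for, so the invariants carry through without modification --- but this is the one place where a careful check against the HotStuff pseudocode is warranted, since it is precisely the interface between the queueing mechanism and the voting rule.
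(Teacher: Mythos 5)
Your proposal is correct and follows essentially the same route as the paper, which simply invokes \cite[Lemma~1, Theorem~2]{yin2018hotstuff} and notes that the composition leaves them unaffected; your elaboration --- that the snapshot-confirmation guard only filters when an honest node votes and is indistinguishable from adversarial message delay, under which HotStuff is already safe --- is precisely the justification the paper gives in its discussion of the HotStuff instantiation. No gap.
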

Proof is by \cite[Lemma 1, Theorem 2]{yin2018hotstuff}, which remain unaffected by the composition.
Lemma \ref{thm:hotstuff-safety} shows the safety of $\PIbft$ at all times.

\begin{lemma}[Liveness Lemma for $\PIbft$]
\label{thm:hotstuff-liveness-1}
There exists a bounded time period $T_{\mathrm{f}}$ after $\max\{\GST,\GOT\}$ such that if all honest nodes remain in some view $v$ during $T_{\mathrm{f}}$ and $v$ has an honest leader, then a new \blockbft becomes \finalp over $v$.
\end{lemma}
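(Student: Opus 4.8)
The plan is to reduce the claim to the liveness guarantee of vanilla HotStuff, \cite[Lemma~3, Theorem~4]{yin2018hotstuff}, and to argue that the only modification introduced by the \sac construction---the gate that makes an honest node process (and hence vote on) a \blockbft $B$ only once the referenced snapshot $B.\PayloadCh$ is viewed as \finalda in $\PIlc$ (Algorithm~\ref{algo:pseudocode-sleepy-hotstuff}, line~\ref{algo:pseudocode-sleepy-hotstuff-bftmsgs-process})---delays the honest leader's proposal reaching the voting stage at each honest node by at most $\Delta$. Accordingly, I would set $T_{\mathrm{f}} \triangleq T_{\mathrm{f}}^{\mathrm{HS}} + \Delta$, where $T_{\mathrm{f}}^{\mathrm{HS}}$ is the bounded window from \cite[Lemma~3, Theorem~4]{yin2018hotstuff}, and show that the extra $\Delta$ suffices to absorb the queueing.

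The key step concerns when the gate opens. First I would invoke Theorem~\ref{thm:pos-security-under-advp-main}: under $\AdvEnvOneOpt$, $\PIlc$ is secure after $\const(\max\{\GST,\GOT\}+\sigma)$ except with probability $e^{-\Omega(\sqrt{\sigma})}$. For the honest leader of view $v$, the snapshot $B.\PayloadCh$ it includes is, by construction, one it currently views as \finalda. I would then reuse the observation established in the proof of Lemma~\ref{thm:pibft-liveness}: once $\PIlc$ has regained security, if a \blocklc is first viewed as \finalda by some honest node at time $t$, then by the echoing of messages (Algorithm~\ref{algo:pseudocode-sleepy-hotstuff}, line~\ref{algo:pseudocode-sleepy-hotstuff-echo-msgs}) all honest nodes receive the confirming \blockslc, and hence---by the monotone, view-deterministic nature of the depth-based confirmation rule---view $B.\PayloadCh$ as \finalda by time $t+\Delta$. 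Thus within $\Delta$ of the honest leader's proposal the gate at line~\ref{algo:pseudocode-sleepy-hotstuff-bftmsgs-process} opens at every honest node, and safety of $\PIlc$ after $\max\{\GST,\GOT\}$ guarantees confirmation is never retracted, so the message is dequeued exactly once and never re-queued.

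Once the gate is open, each honest node handles the proposal, the ensuing votes, and the resulting quorum certificates exactly as unmodified HotStuff would; in particular, the modification touches neither the safety rule nor the pacemaker. Hence, if all honest nodes remain in $v$ for the window $T_{\mathrm{f}}$ and $v$'s leader is honest, the chain of quorum certificates required by \cite[Lemma~3, Theorem~4]{yin2018hotstuff} forms over $v$ and a new \blockbft becomes \finalp, the additional $\Delta$ of queueing delay being subsumed into the view duration.

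The main obstacle is to make precise that HotStuff's view-based progress is robust to this bounded queueing delay. I would argue that a message released from the queue after a delay of at most $\Delta$ is indistinguishable, to the HotStuff layer, from a message that simply arrived $\Delta$ later over a $\Delta$-synchronous network after $\GST$; since \cite[Lemma~3, Theorem~4]{yin2018hotstuff} already tolerates any delivery schedule respecting $\Delta$-synchrony after $\GST$, the queueing can be reinterpreted as additional (bounded) network delay on the relevant messages, keeping the execution within HotStuff's model. The only genuinely new ingredient requiring care is the monotonicity of \finalda-confirmation in $\PIlc$---ensuring no opened gate ever closes---which follows from the safety of $\PIlc$ after $\max\{\GST,\GOT\}$ (Theorem~\ref{thm:pos-security-under-advp-main}).
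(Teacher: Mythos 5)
Your proposal is correct, and its core is the same reduction the paper uses: appeal to \cite[Lemma~3, Theorem~4]{yin2018hotstuff} and observe that after $\max\{\GST,\GOT\}$ all honest nodes are awake and the network is $\Delta$-synchronous, so vanilla HotStuff liveness applies. The difference is one of organization. The paper's proof of this lemma is exactly that one-line reduction and says nothing about the confirmation gate of line~\ref{algo:pseudocode-sleepy-hotstuff-bftmsgs-process}; the entire argument you give about the gate---that once $\PIlc$ is secure (Theorem~\ref{thm:pos-security-under-advp-main}) the honest leader's snapshot is viewed as \finalda by every honest node within $\Delta$, so the queueing can be reinterpreted as an extra $\Delta$ of network delay---is deferred by the paper to the proof of the subsequent Liveness Lemma for $\LOGfin{}{}$ (Lemma~\ref{thm:hotstuff-liveness}), which is where Theorem~\ref{thm:pos-security-under-advp-main} is actually invoked. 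Your version is arguably the more careful reading of the lemma as stated, since without the gate argument the statement is not literally true immediately after $\max\{\GST,\GOT\}$: an honest leader's proposal could reference a snapshot that other honest nodes do not yet view as \finalda, in which case they never vote and nothing becomes \finalp. The price is that your $T_{\mathrm{f}}$ guarantee really only takes effect after $\const(\max\{\GST,\GOT\}+\sigma)$, and only except with probability $e^{-\Omega(\sqrt{\sigma})}$---but that is the only regime in which the paper ever applies the lemma anyway. The mathematical content matches; you have simply moved the boundary between this lemma and the next.
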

Since the network delay is bounded and all of the honest nodes are awake after $\max\{\GST,\GOT\}$, the proof follows from \cite[Lemma 3, Theorem 4]{yin2018hotstuff}.

Observe that the proof of Theorem \ref{thm:pos-security-under-advp-main} stays the same since we use the same $\PIlc$ protocol as Section \ref{sec:sleepy-streamlet}.
Hence, combining Lemma \ref{thm:hotstuff-liveness-1} and Theorem \ref{thm:pos-security-under-advp-main}, we can assert the liveness of $\LOGfin{}{}$ after $\max\{\GST,\GOT\}$ as shown below.

\begin{lemma}[Liveness Lemma for $\LOGfin{}{}$]
\label{thm:hotstuff-liveness}
There exists a constant $\const>0$ such that for any $\GST$ and $\GOT$ specified by $\AdvEnvOneOpt$, $\LOGfin{}{}$ is live after time $\const (\max\{\GOT,\GST\}+\sigma)$ except with probability $e^{-\Omega(\sqrt{\sigma})}$.
\end{lemma}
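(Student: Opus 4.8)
The plan is to follow the same argument as the proof of Lemma~\ref{thm:pibft-liveness} for the Streamlet instantiation, substituting the HotStuff liveness guarantee (Lemma~\ref{thm:hotstuff-liveness-1}) for the Streamlet one (Lemma~\ref{thm:streamlet-liveness}). First I would invoke Theorem~\ref{thm:pos-security-under-advp-main} to fix a constant $\const>0$ such that, under $\AdvEnvOneOpt$, $\PIlc$ is safe and live with confirmation time $\sigma$ after time $\const(\max\{\GST,\GOT\}+\sigma)$, except with probability $e^{-\Omega(\sqrt{\sigma})}$. As in the Streamlet case, this yields the key view-monotonicity consequence: if any \blocklc $b$ is first viewed as \finalda by some honest node at a time $t>\const(\max\{\GST,\GOT\}+\sigma)$, then by time $t+\Delta$ every honest node views $b$ as \finalda, using that after $\max\{\GST,\GOT\}$ the network is $\Delta$-synchronous and honest nodes echo messages (line~\ref{algo:pseudocode-sleepy-hotstuff-echo-msgs} of Algorithm~\ref{algo:pseudocode-sleepy-hotstuff}).

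The second step is to show that, after the threshold time, honest HotStuff proposals are not obstructed by the confirmation condition added to the voting rule. An honest leader proposes a \blockbft referencing the deepest \finalda \blocklc in its own view; by the consequence above, this referenced snapshot is viewed as \finalda by all honest nodes within $\Delta$, which is absorbed by the queueing mechanism of Algorithm~\ref{algo:pseudocode-sleepy-hotstuff} (lines~\ref{algo:pseudocode-sleepy-hotstuff-bftmsgs-queue}--\ref{algo:pseudocode-sleepy-hotstuff-bftmsgs-process}) at the cost of an additional delay of at most $\Delta$. Adversarial proposals carrying an unconfirmed snapshot merely look like tardy or missing proposals to HotStuff, an adversarial behavior under which HotStuff retains liveness; the confirmation side-condition therefore does not impede progress once an honest leader's snapshot is globally confirmed. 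Hence, after time $\const(\max\{\GST,\GOT\}+\sigma)$, the hypotheses of Lemma~\ref{thm:hotstuff-liveness-1} are met whenever honest nodes dwell in a view $v$ with an honest leader for the bounded period $T_{\mathrm{f}}$, and a fresh \finalp \blockbft, referencing a globally \finalda snapshot, then emerges over $v$.

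The third step threads a transaction through both layers and performs the probabilistic bookkeeping. Fix an interval $[s,s+\sigma]$ with $s>\const(\max\{\GST,\GOT\}+\sigma)$. Since HotStuff leaders are chosen uniformly at random and, after $\GOT$, all honest nodes are awake under a $\Delta$-synchronous network, each view has an honest leader with probability at least $2/3$ independently; thus within the second half of the interval there is, except with probability $e^{-\Omega(\sigma)}$, a view with an honest leader in which honest nodes dwell for at least $T_{\mathrm{f}}$, so a new \finalp \blockbft is produced there. By liveness of $\PIlc$, a transaction $\tx$ received by an awake honest node at time $s$ enters a \finalda \blocklc $b'$ by time $s+\sigma/2$ except with probability $e^{-\Omega(\sqrt{\sigma})}$; by safety of $\PIlc$, the snapshot referenced by the freshly finalized \blockbft extends $b'$, so $\tx$ appears in $\LOGfin{}{}$ in every honest view by time $s+\sigma$. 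Collecting the error terms, namely $e^{-\Omega(\sqrt{\sigma})}$ from $\PIlc$ security and $e^{-\Omega(\sigma)}$ from the honest-leader count, gives the claimed $e^{-\Omega(\sqrt{\sigma})}$ failure probability.

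I expect the main obstacle to be the third step's reliance on HotStuff's view synchronization: Lemma~\ref{thm:hotstuff-liveness-1} is conditional on all honest nodes \emph{remaining} in a common view $v$ for the duration $T_{\mathrm{f}}$, so I must justify that, after $\GST$, HotStuff's pacemaker drives honest nodes into synchronized views and holds them there long enough, and that a view with an honest leader recurs within the $\sigma$-window. This is where the epoch-based counting used for Streamlet must be replaced by a view-based argument appealing to the synchronizer guarantees underlying \cite{yin2018hotstuff}, with care taken that the added confirmation delay of at most $\Delta$ does not erode the dwell time below $T_{\mathrm{f}}$.
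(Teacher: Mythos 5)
Your proposal follows the same skeleton as the paper's proof: invoke Theorem~\ref{thm:pos-security-under-advp-main} to get safety and liveness of $\PIlc$ after $\const(\max\{\GST,\GOT\}+\sigma)$, deduce that a snapshot first confirmed by one honest node after that time is confirmed by all within $\Delta$ so that the queueing condition of Algorithm~\ref{algo:pseudocode-sleepy-hotstuff} reduces to an extra $\Delta$ delay, then apply Lemma~\ref{thm:hotstuff-liveness-1} and thread the transaction through $\PIlc$-liveness and $\PIlc$-safety exactly as in the Streamlet case. The one place you diverge is the step you yourself flag as the main obstacle: you transplant the Streamlet argument that each leader is honest independently with probability at least $2/3$, and pay an $e^{-\Omega(\sigma)}$ term for the existence of an honest-leader view in $[s+\sigma/2,s+\sigma]$. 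That premise does not match the HotStuff instantiation: the paper's proof instead appeals to HotStuff's round-robin leader rotation together with the exponential back-off pacemaker to conclude \emph{deterministically} that within a constant time $T_{\mathrm{bounded}}$ after $\const(\max\{\GST,\GOT\}+\sigma)$ there is a view $v$ with an honest leader in which all honest nodes dwell for longer than $T_{\mathrm{f}}$; it then takes $\sigma > 2(T_{\mathrm{bounded}}+T_{\mathrm{f}})$ so that finalization happens in $[s+\sigma/2,s+\sigma]$ with no additional probabilistic cost beyond the $e^{-\Omega(\sqrt{\sigma})}$ already incurred from $\PIlc$. Your route would still yield the stated bound if leaders were sampled uniformly, but for round-robin leaders the independence claim is simply false, and the deterministic argument is both the one the paper uses and the simpler one; this also disposes of your worry about view synchronization, which the paper settles by citing the back-off mechanism rather than by a separate synchronizer analysis.
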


\begin{proof}
Via Theorem \ref{thm:pos-security-under-advp-main}, there exists a constant $\const>0$ such that for any $\GST$ and $\GOT$ specified by $\AdvEnvOneOpt$, $\PIlc$ is safe and live, with confirmation time $\sigma$, after time $\const (\max\{\GOT,\GST\}+\sigma)$ except with probability $e^{-\Omega(\sqrt{\sigma})}$.
Hence, the following observation is true for any \blocklc $b$ except with probability $e^{-\Omega(\sqrt{\sigma})}$: 
If $b$ is first viewed as \finalda by an honest node at some time $t>\const (\max\{\GOT,\GST\}+\sigma)$, then, it will be regarded as \finalda in the views of all of the honest nodes by time $t+\Delta$.

Now, if an honest leader sends a message that points to a \blockbft $B$ at some time $t$ and in some view $v$, then the \blocklc referenced by $B$ must be \finalda in the view of this leader at time $t$.
Then, by the above observation, if $t>\const (\max\{\GOT,\GST\}+\sigma)$, all honest nodes would see the \blocklc referenced by $B$ as \finalda and add $B$ to their blocktrees, by time $t+\Delta$, except with probability $e^{-\Omega(\sqrt{\sigma})}$.
Hence, after time $\const (\max\{\GOT,\GST\}+\sigma)$, the requirements outlined in line 12 of Algorithm \ref{algo:pseudocode-sleepy-hotstuff} can be modeled by a $\Delta$ delay.
In other words, every \blockbft pointed by the message of an honest node enters the blocktree of every honest node at most $\Delta$ time after the first such message.

Via Lemma \ref{thm:hotstuff-liveness-1}, there exists a bounded time period $T_{\mathrm{f}}$ after $\max\{\GST,\GOT\}$ such that if all honest nodes remain in some view $v$ during $T_{\mathrm{f}}$ and $v$ has an honest leader, then a new \blockbft becomes \finalp over $v$.
Then, we can assert the following statement for $\PIbft$ except with probability $e^{-\Omega(\sqrt{\sigma})}$:
If all honest nodes remain in some view $v$ during a time period $[s,s+T_{\mathrm{f}}]$ such that $s>\const (\max\{\GOT,\GST\}+\sigma)$ and $v$ has an honest leader, then a new \blockbft becomes \finalp over $v$.
Since HotStuff implements a round robin leader section and an exponential back-off mechanism for view change, there will be a view $v$ with an honest leader within a constant time $T_{\mathrm{bounded}}$ after $\const (\max\{\GOT,\GST\}+\sigma)$ such that the honest nodes will remain in view $v$ for longer than time $T_{\mathrm{f}}$.

Finally, let $\sigma > 2(T_{\mathrm{bounded}}+T_{\mathrm{f}})$ and consider a time interval $[s,s+\sigma]$ such that $s>\const (\max\{\GOT,\GST\}+\sigma)$.
Observe that since $\sigma/2>T_{\mathrm{bounded}}+T_{\mathrm{f}}$ and $s>\const (\max\{\GOT,\GST\}+\sigma)$, a new \blockbft $b$ becomes \finalp in the interval $[s+\sigma/2,s+\sigma]$ except with probability $e^{-\Omega(\sqrt{\sigma})}$. 
Moreover, via the liveness of $\PIlc$ after $\const(\max\{\GOT,\GST\}+\sigma)$, a transaction $tx$ received by an awake honest node at time $s$ will be included in a \finalda \blocklc $b'$ in the view of all honest nodes by time $s+\sigma/2$ except with probability $e^{-\Omega(\sqrt{\sigma})}$.
Via the safety of $\PIlc$, we know that $b$ extends $b'$ containing the transaction $tx$ except with probability $e^{-\Omega(\sqrt{\sigma})}$.
Consequently, any transaction received by an honest node at some time $s>\const(\max\{\GOT,\GST\}+\sigma)$ becomes part of the ledger $\LOGfin{}{}$ in the view of any honest node $i$ by time $s+\sigma$, except with probability $e^{-\Omega(\sigma)}+e^{-\Omega(\sqrt{\sigma})} = e^{-\Omega(\sqrt{\sigma})}$. 
This concludes the proof.
\end{proof}

Finally, recall Figure \ref{fig:proof-layout-1}, and observe that Lemma \ref{thm:hotstuff-liveness-1} (box 2) and Theorem \ref{thm:pos-security-under-advp-main} (box 3) imply Lemma \ref{thm:hotstuff-liveness} (box 4) whereas the Lemmas \ref{thm:hotstuff-safety} (box 1) and \ref{thm:hotstuff-liveness} imply the security of $\LOGfin{}{}$ outputted $\PItheexample{Sleepy}{Streamlet}$ (box 5).
Moreover, the proof of the security of $\LOGda{}{}$ stays the same as we use the same $\PIlc$ protocol as Section \ref{sec:sleepy-streamlet}.
Hence, we conclude the proof of Theorem \ref{thm:main-security} for $\PItheexample{Sleepy}{HotStuff}$. 
\end{onlyonarxiv}

\begin{onlyonarxiv}
\section{Bouncing Attack on Casper FFG}
\label{sec:attacks-casper}

Applications of \CasperFFG are two-tiered. %
A blockchain serves as dynamically available block proposal mechanism,
and \CasperFFG is a voting-based BFT-style overlay protocol
to add finalization on top of
said blockchain.
Usually, only some
`checkpoint' blocks
are candidates
for finalization, \eg, blocks at depths that are multiples of $100$.
First, a checkpoint
becomes `justified' once two-thirds vote for it. %
Subsequently, roughly speaking, a justified checkpoint becomes finalized once
two-thirds vote for a direct child checkpoint of the justified checkpoint.
To ensure consistency among the two tiers, %
the fork choice rule of the blockchain is modified
to always respect %
`the justified checkpoint of the greatest [depth]' \cite{buterin2017casper}.
There is thus a bidirectional interaction between the block proposal
and the
finalization layer:
blocks proposed by the blockchain are input to finalization,
while justified %
checkpoints constrain future block proposals.
This bidirectional interaction
is intricate to reason about
and a gateway for liveness attacks.

\begin{figure}
    \centering
    \begin{tikzpicture}
        \footnotesize
        
        \tikzset{finalitygadget/.style={
                x=0.3cm,
                y=0.6cm,
                _block/.style = {
                    draw,
                    shade,
                    top color=white,
                    bottom color=black!10,
                },
                block/.style = {
                    _block,
                    minimum width=0.475cm,
                    minimum height=0.45cm,
                },
                ancestor/.style = {
                    _block,
                    minimum width=0.3cm,
                    minimum height=0.3cm,
                },
                link/.style = {
                },
                accepted/.style = {
                    draw=red,
                    thick,
                },
                almost-accepted/.style = {
                    draw=red,
                    thick,
                    densely dotted,
                },
                fc-label/.style = {
                    align=center,
                    myParula01Blue,
                    anchor=north,
                },
                fc-arrow/.style = {
                    -Latex,
                    myParula01Blue,
                },
            }
        }

        \begin{scope}[finalitygadget,xshift=0cm]
            \coordinate (genesis) at (0,0.5);
            
            \node (b11) at (0,0) [ancestor] {};
            \node (b21) at (-1,-1) [block,accepted] {70};
            \node (b22) at (+1,-1) [block] {30};
            \node (b31) at (-1,-2) [block] {30};
            \node (b32) at (+1,-2) [block,almost-accepted,alias=attackable1] {60};
            
            \draw [link] (b11) -- (genesis);
            \draw [link] (b21) -- (b11);
            \draw [link] (b22) -- (b11);
            \draw [link] (b31) -- (b21);
            \draw [link] (b32) -- (b22);
            
            \node [fc-label] (fc) at (-1,-3) {Fork\\choice};
            \draw [fc-arrow] (fc) -- (b31);
            
        \end{scope}

        \begin{scope}[x=0.25cm,y=0.25cm,xshift=0.85cm,yshift=-0.75cm]
            \draw [draw=none,fill=black!10] (-0.5,1) -- (0,1) -- (0,2) -- (1,0) -- (0,-2) -- (0,-1) -- (-0.5,-1) -- cycle;
        \end{scope}

        \begin{scope}[finalitygadget,xshift=1.8cm]
            \coordinate (genesis) at (0,0.5);
            
            \node (b11) at (0,0) [ancestor] {};
            \node (b21) at (-1,-1) [block,accepted] {70};
            \node (b22) at (+1,-1) [block] {30};
            \node (b31) at (-1,-2) [block] {30};
            \node (b32) at (+1,-2) [block,almost-accepted] {60};
            \node (b41) at (-1,-3) [block,almost-accepted,alias=attackable2] {60};
            
            \draw [link] (b11) -- (genesis);
            \draw [link] (b21) -- (b11);
            \draw [link] (b22) -- (b11);
            \draw [link] (b31) -- (b21);
            \draw [link] (b32) -- (b22);
            \draw [link] (b41) -- (b31);
            
            \node [fc-label] (fc) at (-1,-3.75) {Fork\\choice};
            \draw [fc-arrow] (fc) -- (b41);
            
        \end{scope}

        \begin{scope}[finalitygadget,xshift=0cm]
            \node (attackable) at (1,-5) [anchor=north,align=center] {Adversary can release\\ $f=10$ votes to justify};
            \draw [-Latex] (attackable) -- (attackable1);
            \draw [-Latex] (attackable) -- (attackable2);
        \end{scope}

        \begin{scope}[x=0.25cm,y=0.25cm,xshift=2.65cm,yshift=-0.75cm]
            \draw [draw=none,fill=black!10] (-0.5,1) -- (0,1) -- (0,2) -- (1,0) -- (0,-2) -- (0,-1) -- (-0.5,-1) -- cycle;
        \end{scope}

        \begin{scope}[finalitygadget,xshift=3.6cm]
            \coordinate (genesis) at (0,0.5);
            
            \node (b11) at (0,0) [ancestor] {};
            \node (b21) at (-1,-1) [block,accepted] {70};
            \node (b22) at (+1,-1) [block] {30};
            \node (b31) at (-1,-2) [block] {30};
            \node (b32) at (+1,-2) [block,accepted] {70};
            \node (b41) at (-1,-3) [block,almost-accepted] {60};
            \node (b42) at (+1,-3) [block] {0};
            
            \draw [link] (b11) -- (genesis);
            \draw [link] (b21) -- (b11);
            \draw [link] (b22) -- (b11);
            \draw [link] (b31) -- (b21);
            \draw [link] (b32) -- (b22);
            \draw [link] (b41) -- (b31);
            \draw [link] (b42) -- (b32);
            
            \node [fc-label] (fc) at (+1,-4) {Fork\\choice};
            \draw [fc-arrow] (fc) -- (b42);
            
        \end{scope}

        \begin{scope}[x=0.25cm,y=0.25cm,xshift=4.45cm,yshift=-0.75cm]
            \draw [draw=none,fill=black!10] (-0.5,1) -- (0,1) -- (0,2) -- (1,0) -- (0,-2) -- (0,-1) -- (-0.5,-1) -- cycle;
        \end{scope}

        \begin{scope}[finalitygadget,xshift=5.4cm]
            \coordinate (genesis) at (0,0.5);
            
            \node (b11) at (0,0) [ancestor] {};
            \node (b21) at (-1,-1) [block,accepted] {70};
            \node (b22) at (+1,-1) [block] {30};
            \node (b31) at (-1,-2) [block] {30};
            \node (b32) at (+1,-2) [block,accepted] {70};
            \node (b41) at (-1,-3) [block,almost-accepted] {60};
            \node (b42) at (+1,-3) [block] {30};
            
            \draw [link] (b11) -- (genesis);
            \draw [link] (b21) -- (b11);
            \draw [link] (b22) -- (b11);
            \draw [link] (b31) -- (b21);
            \draw [link] (b32) -- (b22);
            \draw [link] (b41) -- (b31);
            \draw [link] (b42) -- (b32);
            
            \node [fc-label] (fc) at (+1,-4) {Fork\\choice};
            \draw [fc-arrow] (fc) -- (b42);
            
        \end{scope}

        \begin{scope}[x=0.25cm,y=0.25cm,xshift=6.25cm,yshift=-0.75cm]
            \draw [draw=none,fill=black!10] (-0.5,1) -- (0,1) -- (0,2) -- (1,0) -- (0,-2) -- (0,-1) -- (-0.5,-1) -- cycle;
            
            \node at (2.5,-0.5) {\large ...};
        \end{scope}

    \end{tikzpicture}%
    \caption{By releasing withheld \CasperFFG votes late, the adversary can force honest validators to adopt a competing chain, due to the modification of the fork choice rule to respect `the justified checkpoint of the greatest [depth]'.
    Over longer periods of time, the adversary forces honest validators to switch back and forth between a `left' and a `right' chain
    and thus liveness of finalizations is disrupted.}
    \label{fig:bouncing-attack-casper-visualization}
\end{figure}
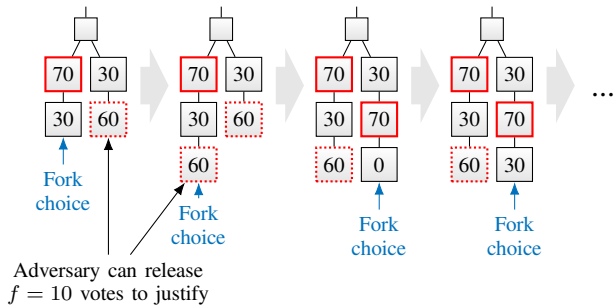

The \emph{bouncing attack} \cite{ethresearch-bouncing-attack,ethresearch-bouncing-attack-analysis}
exploits this bidirectional interaction
to attack liveness of the overall protocol
as follows
(see Figure~\ref{fig:bouncing-attack-casper-visualization}).
Suppose there are two competing chains, `left' and `right',
with checkpoints shown as squares in Figure~\ref{fig:bouncing-attack-casper-visualization}.
A square's label represents the number of votes for that checkpoint,
in a system with $n=100$ total and $f=10$ adversarial validators.
The initial setting of blocks and votes could be produced, \eg,
during a period of asynchrony in which the adversary controls message delivery in its favor. %
`Left'
has the deepest justified checkpoint and is thus
chosen by the fork choice rule of honest validators.
At the same time, `right'
has a deeper checkpoint which is not yet justified
but can be justified by the adversary whenever it casts its $f=10$ votes for the respective checkpoint depth.
Once `left'
advances to a new checkpoint depth, and accumulates enough votes so that the adversary
could again justify that new checkpoint in the future by releasing its $f=10$ votes, the adversary
releases its votes for the competing checkpoint of `right'
on the previous checkpoint depth.
The deepest justified checkpoint is now
on `right',
and honest validators %
switch to propose new blocks on `right'. %
Note that the chains are now already set up such that the adversary can bounce honest validators
back to `left' %
once `right'
advances to a new deepest checkpoint depth.

As a result, a single brief period of asynchrony
suffices to set
the consensus system up such that both chains
grow in parallel indefinitely.
No checkpoint will ever be finalized,
the protocol stalls. %
What is more, since the fork choice flip-flops between the two chains, the underlying blockchain
is rendered unsafe by the modified fork choice rule.
The bidirectional interdependency of \CasperFFG and the %
blockchain
gives the adversary
major leverage over honest nodes on the proposal layer
and thus enables this attack.

In contrast,
an isolated partially synchronous BFT-style protocol,
akin to \CasperFFG,
would have eventually recovered from the period of asynchrony
and regained liveness, while remaining safe throughout.
Similarly,
an isolated
typical
dynamically available longest-chain protocol
with intact fork choice rule
could have suffered from security violations
during and shortly after the period of asynchrony,
but would have `healed' eventually, \ie, from some point on,
no more safety violations occur and transactions
get included in the ledger.

\end{onlyonarxiv}

\end{document}